\documentclass[11pt,a4paper,reqno]{article}
\usepackage{amsmath}
\usepackage{graphicx}
\usepackage{amsfonts}
\usepackage{amssymb}
\usepackage{amsthm}
\usepackage[left=2.5cm, right=2.5cm, top=2.5cm, bottom=2.5cm]{geometry}
\usepackage{indentfirst}
\usepackage[all]{xy}
\usepackage[colorlinks=true,linkcolor=blue]{hyperref}
\usepackage{mathrsfs} 
\usepackage{tikz-cd}
\usetikzlibrary{graphs,decorations.pathmorphing,decorations.markings}
\usepackage{enumitem}
\usepackage[title]{appendix}
\setitemize{noitemsep,topsep=0pt,parsep=0pt,
partopsep=0pt,itemindent=12pt,leftmargin=0pt}
\makeatletter
\newcommand{\subjclass}[2][2020]{%
  \let\@oldtitle\@title%
  \gdef\@title{\@oldtitle\footnotetext{#1 \emph{Mathematics subject classification.} #2}}%
}
\newcommand{\keywords}[1]{%
  \let\@@oldtitle\@title%
  \gdef\@title{\@@oldtitle\footnotetext{\emph{Key words and phrases.} #1.}}%
}
\makeatother


\newtheorem{theorem}{Theorem}[section]
\newtheorem{lemma}[theorem]{Lemma}
\newtheorem{proposition}[theorem]{Proposition}
\newtheorem{corollary}[theorem]{Corollary}

\theoremstyle{definition}
\newtheorem{definition}[theorem]{Definition}
\newtheorem{example}[theorem]{Example}

\theoremstyle{remark}
\newtheorem{remark}[theorem]{Remark}

\newcommand{\build}[3]{\mathrel{\mathop{\kern 0pt#1}\limits_{#2}^{#3}}}

\newcommand\SU{{\mathrm{SU}}}
\newcommand\SO{{\mathrm{SO}}}
\newcommand\Sp{{\mathrm{Sp}}}
\newcommand\GL{{\mathrm{GL}}}

\newcommand\U{{\mathrm U}}
\newcommand\Z{{\mathbb Z}}

\newcommand\Id{\mathrm{Id}}

\newcommand\R{\mathbb{R}}
\newcommand\C{\mathbb{C}}
\newcommand\Tr{\mathrm{Tr}}

\newcommand\E{\mathbb{E}}

\newcommand\YM{\mathrm{YM}}
\newcommand\Hom{\mathrm{Hom}}

\newcommand\End{{\mathrm{End}}}
\newcommand\Fbb{{\mathbb{F}}}
\newcommand\TT{{\mathring{T}}}

\newcommand\Lc{{\mathcal{L}}}
\newcommand\B{{\mathcal{B}}}

\newcommand\Br{{\mathfrak{B}}}
\newcommand\Span{{\mathrm{Span}}}
\newcommand\M{{\mathcal{M}}}
\newcommand\Wil{{\mathrm{W}}}
\newcommand\HK{{\mathrm{HK}}}

\newcommand\EPE{{\mathrm{EPE}}}
\newcommand\Spec{{\mathrm{Spec}}}

\title{Universal dualities for Wilson loops in lattice Yang--Mills}

\author{Thibaut Lemoine\thanks{Universit\'e de Strasbourg, CNRS, UMR 7501 -- Institut de Recherche Math\'ematique Avanc\'ee, 7 rue Ren\'e Descartes, 67000 Strasbourg, France. thibaut.lemoine@math.unistra.fr}}

\keywords{lattice Yang–Mills theory; Wilson loops; gauge/string duality; spin-foam duality; topological expansion; master loop equation; Weingarten calculus}

\subjclass{Primary 81T13, Secondary 05E10, 22E46, 43A75, 82B20} 

\begin{document}

\maketitle

\begin{abstract}
We identify a universal finite-$N$ structure underlying Wilson loop expectations in lattice Yang--Mills, in any dimension $d\geq 2$, for gauge group $\mathrm{U}(N)$, and for arbitrary smooth central plaquette actions. The starting point is a state-sum expansion in plaquette labels by irreducible representations, in which each term factorizes into an action-dependent spectral weight and an action-independent topological coefficient. We then analyze these coefficients in three exact ways: as a gauge/string expansion over decorated spanning surfaces, as a local spin-foam/channel model on the dual incidence graph, and as a universal finite-$N$ master loop equation that closes on the coefficient side. As a consequence, several recent Wilson-action results are recovered as specializations of our broader action-agnostic framework. In a companion work, this framework is used to construct the higher-dimensional heat-kernel master field at strong coupling and to prove an area-law bound.
\end{abstract}

\tableofcontents

\section{Introduction}\label{sec:intro}

\subsection{Lattice gauge theory}

Lattice gauge theory (LGT for short) is a model that aims to describe a discretized version of the quantum gauge theory underlying the Standard Model of physics. Although the construction of a genuine four-dimensional Euclidean quantum Yang--Mills theory remains a great challenge today \cite{JW}, there have been several parallel approaches, initiated around the year 1974 by pioneering works \cite{Wil74,Hoo74,Mig75}, all aimed at a better understanding of a discrete version of the theory. The common framework of these papers can be summarized as follows.

Consider a lattice\footnote{In fact, the construction also works for any 2-dimensional CW-complex embedded in a compact manifold. We keep everything embedded in $\Z^d$ by convention but not by limitation.} $\Lambda\subset\Z^d$, which may be finite or infinite, and a compact Lie group $G$. Let $E(\Lambda)$ (resp. $P(\Lambda)$) be the set of 1-cells (resp. 2-cells) of $\Lambda$, seen as a CW-complex. We call respectively \emph{edges} and \emph{plaquettes} (or \emph{faces}) the 1-cells and 2-cells of the lattice. A \emph{gauge configuration} is an assignment $U=(U_e)_{e\in E(\Lambda)}$ of $G$-valued random variables on each edge. For every plaquette $p\in P(\Lambda)$, let $Q_p:G\to(0,\infty)$ be a smooth central function. We call $Q=(Q_p)_{p\in P(\Lambda)}$ a family of \emph{plaquette weights}. The \emph{lattice Yang--Mills measure on $\Lambda$ with plaquette weights $Q$} is the measure
\begin{equation}\label{eq:lattice-YM}
d\mu_{\Lambda,Q}(U)=\frac{1}{Z_{\Lambda,Q}}\prod_{p\in P(\Lambda)}Q_p(U_{\partial p})dU,
\end{equation}
where $dU=\prod_{e\in E(\Lambda)}dU_e$ is the normalized Haar measure on $G^{E(\Lambda)}$, $U_{\partial p}=U_{e_1}\ldots U_{e_k}$ if the boundary of the plaquette $p$ is represented by $\partial p=e_1\ldots e_k$ as a concatenation of oriented edges, and $Z_{\Lambda,Q}$ is a normalization factor called the \emph{partition function}, ensuring that $\int_{G^{E(\Lambda)}}d\mu_{\Lambda,Q}(U)=1$. Note that $U_{\partial p}$ is defined up to conjugation for any $p\in P(\Lambda)$ because it depends on a parametrization of $\partial p$, but as $Q_p$ is central, the measure does not depend on this parametrization. A one-parameter action family $(Q_t)_{t>0}$ together with plaquette parameters $(t_p)_p$ is recovered by setting $Q_p=Q_{t_p}$; this includes the usual Wilson inverse-temperature and heat-kernel time parametrizations without identifying the two conventions. The main quantities of interest in the theory are the partition function $Z_{\Lambda,Q}$ and the \emph{Wilson loops}, defined for a family $\Lc=(\ell_1,\ldots,\ell_k)$ of loops in the 1-skeleton of $\Lambda$ by
\begin{equation}
W_{\Lambda,\Lc}(U)=\Tr(U_{\ell_1})\ldots\Tr(U_{\ell_k}),
\end{equation}
where $U_\ell$ is again defined as the product of the edge variables corresponding to the loop $\ell$.

Strictly speaking, the definition of Wilson loops requires a choice of a finite-dimensional unitary representation $\rho:G\to U(V)$, and one should write
\[
W_{\Lambda,\Lc}^{\rho}(U)=\prod_{i=1}^k \Tr\big(\rho(U_{\ell_i})\big).
\]
By the Peter--Weyl theorem, every compact Lie group admits a faithful finite-dimensional unitary representation, so after fixing one such representation we may, without loss of generality, regard $G$ as a compact subgroup of $\U(N)$ for some $N$ and use the ordinary matrix trace notation.

Wilson loops are named after Wilson \cite{Wil74}, who introduced them together with most of the settings of lattice gauge theory in order to study the confinement of quarks. Wilson considered the so-called \emph{Wilson action}\footnote{There is actually another definition, where $t\Re\Tr(x)$ is replaced by $-t\Re\Tr(I-x)$, but it only changes the normalization.} for $G$ a subgroup of a unitary group $\U(N)$
\[
Q_t^{\Wil}(x)=\exp(t\Re(\Tr(x))).
\]

Concurrently, 't Hooft \cite{Hoo74} considered a similar framework, and took $G=\SU(N)$ while letting the parameter $N$ tend to infinity, showing that the $1/N$ expansion of observables simplifies into a sum over planar Feynman diagrams in the limit. It is now known as the \emph{large-$N$} regime.

Almost at the same time, Migdal \cite{Mig75} considered lattice gauge theory with another action, called the \emph{heat kernel action} or \emph{Villain action}
\[
Q_t^{\HK}(x)=p_t(x),\quad \forall x\in G,
\]
where $(p_t)_{t>0}$ is the heat kernel on $G$ for a well-chosen $G$-invariant inner product on its Lie algebra $\mathfrak g$. He showed, in particular, that the semigroup property of the heat kernel translates into topological properties such as cutting and gluing properties, in particular in two-dimensional lattice gauge theory where the model is integrable.

These works have led to many remarkable improvements in several directions in the last fifty years, both in physics and in mathematics. Let us focus on the main mathematical contributions:
\begin{itemize}
\item A rigorous construction of the continuous two-dimensional Yang--Mills measure as a measure on holonomy fields \cite{Dri89,Sen97,Lev03,Lev10},
\item The formalization of the \emph{master field} \cite{Sing95} as a noncommutative process corresponding to the large-$N$ limit of Wilson loops, which is now essentially solved in two dimensions \cite{Dah16,Lev17,Hal18,DN,DL23,DL25,Lem25,Dah26}, and for which significant results were achieved in 't Hooft regime at strong coupling \cite{Cha19,Jaf16,BCSK24}.
\item Partial proofs of the \emph{gauge/string duality}, which is a general principle stating that observables of Yang--Mills theory should admit a ``dual string theory'', that is, a theory of random surfaces \cite{Lev08,Jaf16,Cha19,CPS25,LemMai25,LM2}.
\item Several partial rigorous constructions of the continuous Yang--Mills measure in terms of random distributional connections or in terms of Langevin dynamics \cite{Chev19,CCHS22,CCHS24,ChevShen26,DangNohra26}.
\end{itemize}
For more details, the reader might have a look at the following surveys: \cite{LevSen17,Cha19survey,Lev20survey,Lem26survey}.

It is natural to wonder why one action should be used rather than the others. From the continuum point of view, they should be understood as different lattice regularizations of the same formal Yang--Mills energy, defined formally as the squared $L^2$ norm of the curvature form of a connection $A$:
\[
S_{\YM}(A)=\frac{1}{2}\int \|F_A(x)\|^2\,dx .
\]
Indeed, if \(p\) is a small plaquette and \(U_{\partial p}\) denotes its holonomy, then formally
\[
U_{\partial p}\simeq \exp(\operatorname{area}(p)F_A(x_p)).
\]
Thus a central plaquette action whose logarithm has a non-degenerate quadratic minimum at the identity gives, to first order in the mesh size, a discretization of the \(L^2\)-norm of the curvature. For instance, the Wilson action realizes this through the expansion of \(\Re\Tr(I-U_{\partial p})\) near the identity, and the heat-kernel action through the small-time Gaussian behaviour of the heat kernel.

These actions nevertheless have different algebraic features. The Wilson action is the standard lattice action and is particularly well adapted to trace expansions and to the classical form of the master loop equation. The heat-kernel action is more rigid from the two-dimensional continuum viewpoint: its semigroup property is precisely the compatibility property needed when one subdivides plaquettes, and it is the graph-level action underlying the construction of the continuum two-dimensional Yang--Mills holonomy field. More generally, many reasonable plaquette actions are expected, after the appropriate scaling, to belong to the same continuum universality class. Several results in 2D confirm this expectation, identifying the heat kernel action as the natural continuum limit of various lattice actions \cite{ChevGar25,ChevShen26,DangNohra26}. For this reason, we believe that it is crucial to \emph{understand lattice gauge theory beyond the Wilson action}, and in particular for the heat kernel action.

The main motivation of this paper is therefore to identify a \emph{universal finite-$N$ structure} underlying Wilson loop expectations in lattice Yang--Mills, valid in arbitrary dimension and for arbitrary smooth central plaquette actions, that generalizes results previously known only in two dimensions or only for the Wilson action. The starting point is a spectral/topological splitting: after Fourier expansion of the plaquette action, Wilson loop expectations become sums over plaquette labels, with action-dependent spectral weights and action-independent topological coefficients. We then show that these coefficients unravel two complementary kinds of dualities: a gauge/string duality on a global geometric side, and a spin-foam duality on a local representation-theoretic side. We will also prove that they satisfy a kind of Schwinger--Dyson equation, the \emph{master loop equation}. In this way, surface sums, spin-foam-type dual models, and master loop equations appear as three manifestations of a single universal formalism that was previously understood for the Wilson action only. The conceptual point is the following: once the spectral/topological splitting is isolated, surface expansions, local dual models, and loop equations all emerge from the same action-independent topological coefficients, and the existing Wilson-action results are just a specialization of this universal formalism.

\subsection{A state-sum expansion}

The starting point of this framework is the following theorem.

\begin{theorem}[State-sum representation]\label{thm:state_sum_Wilson}
Let $\Lambda\subset\Z^d$ be a finite lattice, with oriented edge set $E(\Lambda)$ and oriented plaquette set $P(\Lambda)$. Let $U=(U_e)_{e\in E(\Lambda)}$ be a random configuration with distribution $\mu_{\Lambda,Q}$, where $Q=(Q_p)_{p\in P(\Lambda)}$ is a family of smooth central plaquette weights. For any family $\Lc=(\ell_1,\ldots,\ell_k)$ in the 1-skeleton of $\Lambda$, we have
\begin{equation}\label{eq:State_sum_Wilson}
\E\left[W_{\Lambda,\Lc}(U)\right]=\frac{1}{Z_{\Lambda,Q}}\sum_{\alpha:P(\Lambda)\to\widehat{G}}\kappa_{\Lambda,Q}(\alpha)\widehat{W}_{\Lambda,\Lc}(\alpha),
\end{equation}
where
\begin{equation}\label{eq:top_coef}
\kappa_{\Lambda,Q}(\alpha)=\prod_{p\in P(\Lambda)}\widehat Q_p(\alpha_p),\qquad
\widehat Q_p(\lambda):=\langle Q_p,\chi_\lambda\rangle_{L^2(G)},
\end{equation}
and
\begin{equation}\label{eq:topological-coefficient}
\widehat{W}_{\Lambda,\Lc}(\alpha)=\int_{G^{E(\Lambda)}} \left(\prod_{i=1}^k\Tr(U_{\ell_i})\right)\left(\prod_{p\in P(\Lambda)}\chi_{\alpha_p}(U_{\partial p})\right)dU.
\end{equation}
\end{theorem}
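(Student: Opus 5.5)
The plan is to expand each plaquette weight in its Fourier (Peter--Weyl) series on class functions, substitute these expansions into the definition \eqref{eq:lattice-YM} of $\mu_{\Lambda,Q}$, and then interchange the finite product over plaquettes, the sum over labellings $\alpha:P(\Lambda)\to\widehat G$, and the Haar integral over $G^{E(\Lambda)}$.

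\textbf{Step 1: Fourier expansion of each weight.} Since $Q_p$ is a smooth central function on the compact group $G$, the characters $(\chi_\lambda)_{\lambda\in\widehat G}$ form an orthonormal basis of the central $L^2$ functions. Hence
\[
Q_p=\sum_{\lambda\in\widehat G}\widehat Q_p(\lambda)\,\chi_\lambda, \qquad \widehat Q_p(\lambda)=\langle Q_p,\chi_\lambda\rangle_{L^2(G)} .
\]
Smoothness upgrades this from $L^2$ convergence to absolute and uniform convergence. Indeed, applying the Laplacian (Casimir) $m$ times shows that $|\widehat Q_p(\lambda)|\leq C_m(1+c_\lambda)^{-m}$, where $c_\lambda$ is the Casimir eigenvalue, which grows quadratically in the highest weight. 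On the other hand, $\|\chi_\lambda\|_\infty=d_\lambda$ grows only polynomially, and the number of $\lambda$ with highest weight of size at most $R$ is polynomial in $R$. It follows that $\sum_\lambda|\widehat Q_p(\lambda)|\,d_\lambda<\infty$.

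One convention needs care: with $\langle f,g\rangle=\int f\bar g$, the coefficient $\langle Q_p,\chi_\lambda\rangle$ is exactly the coefficient of $\chi_\lambda$ in the expansion. Also, $U_{\partial p}$ is defined only up to conjugation, which is harmless because characters are central.

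\textbf{Step 2: expanding the product and interchanging.} Since $P(\Lambda)$ is finite, we can write
\[
\prod_{p\in P(\Lambda)}Q_p(U_{\partial p})=\sum_{\alpha:P(\Lambda)\to\widehat G}\ \prod_p\widehat Q_p(\alpha_p)\,\chi_{\alpha_p}(U_{\partial p}).
\]
This is a product of finitely many absolutely convergent series, so the resulting multi-series converges absolutely and uniformly in $U$, being dominated by $\prod_p\sum_\lambda|\widehat Q_p(\lambda)|d_\lambda$. We then multiply by $W_{\Lambda,\Lc}(U)$, which is bounded by $N^k$ because $G\subset\U(N)$, and integrate against the normalized Haar measure $dU$. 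Dominated convergence (or Fubini for counting measure times Haar measure) lets us exchange the sum and the integral. Factoring out $\kappa_{\Lambda,Q}(\alpha)$ leaves exactly the integral $\widehat W_{\Lambda,\Lc}(\alpha)$ of \eqref{eq:topological-coefficient}.

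\textbf{Step 3: normalization.} Dividing by $Z_{\Lambda,Q}$ gives \eqref{eq:State_sum_Wilson}. Here $Z_{\Lambda,Q}>0$ because each $Q_p>0$. As a consistency check, taking the empty family $\Lc$ shows that $Z_{\Lambda,Q}$ is itself given by the same sum with $\widehat W_{\Lambda,\emptyset}$. This is not needed for the proof.

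The main obstacle is justifying absolute convergence, that is, the decay estimate for $\widehat Q_p(\lambda)$ in Step 1. I would obtain it by integrating by parts with the Laplacian, using that $\Delta\chi_\lambda=-c_\lambda\chi_\lambda$ and $\Delta^mQ_p\in L^2$. Combined with the Weyl dimension formula, this gives a polynomial bound on $d_\lambda$ in terms of $c_\lambda$, so that $\sum_\lambda d_\lambda(1+c_\lambda)^{-m}<\infty$ for $m$ large enough. Once this is in hand, everything else is a finite-product bookkeeping argument.
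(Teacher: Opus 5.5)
Your proposal is correct and follows the paper's proof essentially verbatim. Both arguments expand each $Q_p$ by Peter--Weyl, use smoothness to get absolute summability $\sum_\lambda d_\lambda|\widehat Q_p(\lambda)|<\infty$, bound the observable by $|W_{\Lambda,\Lc}|\le N^k$, and exchange the sum over $\alpha$ with the Haar integral. The only difference is that you actually justify the decay of $\widehat Q_p(\lambda)$, via the Laplacian/Casimir and the Weyl dimension formula, whereas the paper simply asserts it; for a disconnected $G$ you would do the counting with Weyl's law for $\Delta$ and Cauchy--Schwarz rather than highest weights, but this is immaterial for $G=\U(N)$.
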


Simply stated, all Wilson loop expectations are sums over \emph{plaquette decorations} $\alpha:P(\Lambda)\to\widehat{G}$, with a product of two terms:
\begin{itemize}
\item A \emph{spectral coefficient} $\kappa_{\Lambda,Q}(\alpha)$, which depends on the plaquette weights but not on the loops,
\item A \emph{topological coefficient} $\widehat{W}_{\Lambda,\Lc}(\alpha)$ that depends on the loops but not on the plaquette weights.
\end{itemize}
The proof is immediate from the Peter--Weyl expansion; we include it to fix the normalization.

\begin{proof}[Proof of Theorem~\ref{thm:state_sum_Wilson}]
We start with
\[
\E[W_{\Lambda,\Lc}(U)]=\frac1{Z_{\Lambda,Q}}\int_{G^{E(\Lambda)}}\left(\prod_{i=1}^k\Tr(U_{\ell_i})\right)\prod_{p\in P(\Lambda)}Q_p(U_{\partial p})dU,
\]
assuming that $G$ is a compact subgroup of $\U(N)$ for some integer $N$. We expand each $Q_p$ into irreducible characters by the Peter--Weyl theorem:
\[
Q_p(x)=\sum_{\lambda\in\widehat{G}}\widehat Q_p(\lambda)\chi_\lambda(x),
\qquad
\widehat Q_p(\lambda)=\int_G Q_p(g)\overline{\chi_\lambda(g)}\,dg.
\]
Because $Q_p$ is smooth, this expansion is absolutely and uniformly convergent and
\[
\sum_{\lambda\in \widehat G}d_\lambda\,|\widehat Q_p(\lambda)|<\infty,
\]
where $d_\lambda=\chi_\lambda(1)$. Since $|\Tr(U_{\ell_i})|\leq N$ for all $1\leq i\leq k$, we obtain the uniform bound
\[
\sum_{\alpha:P(\Lambda)\to \widehat G}
\left|
W_{\Lambda,\Lc}(U)
\prod_{p\in P(\Lambda)}
\widehat Q_p(\alpha_p)\chi_{\alpha_p}(U_{\partial p})
\right|
\le
N^k\prod_{p\in P(\Lambda)}\sum_{\lambda\in\widehat G}d_\lambda|\widehat Q_p(\lambda)|.
\]
We may therefore exchange the sum and the integral. The resulting expression is exactly~\eqref{eq:State_sum_Wilson} after identifying~\eqref{eq:top_coef} and~\eqref{eq:topological-coefficient}.
\end{proof}

Despite its elementary proof, this result already shows the key idea of the paper: \emph{the combinatorial aspects of Wilson loop expectations do not depend on the plaquette weights.} This is the most important conceptual input, as it separates the action dependence from the combinatorics of Wilson observables. All subsequent results will be manifestations of the combinatorial structure of the coefficients $\widehat W_{\Lambda,\Lc}(\alpha)$ at fixed plaquette decoration $\alpha$. Let us insist on the fact that, albeit stated for sublattices of $\Z^d$, Theorem~\ref{thm:state_sum_Wilson} holds for any finite two-dimensional CW-complex, e.g. a topological map, or the triangulation of an arbitrary manifold of any dimension $d\geq 2$ and any set of loops in this complex. The arguments do not rely specifically on the fact that we work on a sublattice of $\Z^d$, but we focus on this case because it is standard in lattice gauge theory.

For special plaquette weights it is often useful to normalize the spectral coefficients into a reference probability law. We shall not do so at the level of the general theory, because the Fourier coefficients of a general smooth central weight need not be positive and their unweighted total mass need not provide a useful normalization. In the heat-kernel case, by contrast, the spectral coefficients are positive and lead naturally to the discrete Gaussian reference law used in the companion paper~\cite{Lem26prep}; see also~\cite{LemMai25,LM2,Lem25}.

With this representation in mind, we will now focus on the topological coefficients $\widehat{W}_{\Lambda,\Lc}(\alpha)$. They can be reinterpreted as integrals of products of irreducible characters (by remarking that the trace is the character of the fundamental representation), therefore the combinatorial simplification of Wilson loop expectations boils down to understanding such integrals in general.

In the sequel, we shall focus on $G=\U(N)$ and explore the combinatorial aspects of Wilson loop expectations by means of two complementary approaches:
\begin{enumerate}
\item A rational Weingarten calculus that generalizes the traditional Weingarten calculus developed by Collins--Sniady~\cite{ColSni06}, and which is based on the mixed Schur--Weyl duality initiated by Koike~\cite{Koi89} and revisited by Magee and Dahlqvist~\cite{Mag,Mag2,Dah26}.
\item A spin-network/tensor-network approach inspired by the works of Oeckl--Pfeiffer~\cite{OeckPfei01} that is more often used in quantum gravity~\cite{DMS16} but that has been proved relevant for two-dimensional Yang--Mills theory in two recent papers~\cite{Lev26,Dah26}.
\end{enumerate}

Most of our main theorems will be stated for Wilson loop expectations $\E[W_{\Lambda,\Lc}(U)]$ to fit into the usual language of lattice gauge theory, but, as the developments will reveal, they are actually consequences of formulas obtained specifically for the topological coefficient $\widehat{W}_{\Lambda,\Lc}(\alpha)$ at a fixed plaquette decoration $\alpha:P(\Lambda)\to\widehat{\U(N)}$.

\subsection{Gauge/string duality}

The first consequence of Theorem~\ref{thm:state_sum_Wilson} is that the Wilson loop expectations can be rewritten in terms of surface-sums.

\begin{theorem}[Exact decorated-surface expansion]\label{thm:surface-sum-Wilsonloops}
Let $\Lambda\subset\Z^d$ be a finite lattice, with oriented edge set $E(\Lambda)$ and oriented plaquette set $P(\Lambda)$. Let $U$ be a random configuration with distribution $\mu_{\Lambda,Q}$. For any family of loops $\Lc=(\ell_1,\ldots,\ell_k)$ in the 1-skeleton of $\Lambda$, we have
\begin{equation}\label{eq:surface_sum}
\E[W_{\Lambda,\Lc}(U)]
=\frac1{Z_{\Lambda,Q}}\sum_{\alpha:P(\Lambda)\to\widehat{\U(N)}}\kappa_{\Lambda,Q}(\alpha)
\sum_{\Sigma\in S_{\Lambda,\Lc}(\alpha)}\Omega_N(\Sigma;\alpha)N^{\chi(\Sigma)-h(\Sigma)},
\end{equation}
where $S_{\Lambda,\Lc}(\alpha)$ is a finite set of spanning surfaces decorated by $\alpha$ with boundary prescribed by $\Lc$, $\chi(\Sigma)$ is its Euler characteristic, $h(\Sigma)\geq0$ is a local defect term, and $\Omega_N(\Sigma;\alpha)$ is an explicit coefficient.
\end{theorem}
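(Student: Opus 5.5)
By Theorem~\ref{thm:state_sum_Wilson} it suffices to prove, for each fixed decoration $\alpha:P(\Lambda)\to\widehat{\U(N)}$, a finite expansion
\[
\widehat W_{\Lambda,\Lc}(\alpha)=\sum_{\Sigma\in S_{\Lambda,\Lc}(\alpha)}\Omega_N(\Sigma;\alpha)N^{\chi(\Sigma)-h(\Sigma)}.
\]
Summing against $\kappa_{\Lambda,Q}(\alpha)/Z_{\Lambda,Q}$ then gives \eqref{eq:surface_sum}. The interchange of sums is harmless because the inner sum is finite. The outer sum already converges absolutely, since $|\widehat W_{\Lambda,\Lc}(\alpha)|\le N^k\prod_p d_{\alpha_p}$.

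\textbf{Rewriting the integrand.} An irreducible of $\U(N)$ with highest weight $\alpha_p$ can be written as a mixed pair $(\lambda_p,\mu_p)$ of partitions twisted by a power of $\det$. The determinant powers only contribute a phase or a global constraint: after Haar integration, total $\det$-charge conservation at each edge forces a linear condition on $\alpha$, or makes the integral vanish. I will absorb this into the definition of $S_{\Lambda,\Lc}(\alpha)$, which becomes empty when the condition fails.

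By mixed Schur--Weyl duality (Koike), I write
\[
\chi_{(\lambda,\mu)}(g)=\Tr\bigl((g^{\otimes m}\otimes\bar g^{\otimes n})\,P_{\lambda,\mu}\bigr),
\]
where $m=|\lambda|$, $n=|\mu|$, and $P_{\lambda,\mu}$ is a central idempotent in the walled Brauer algebra $\Br_{m,n}(N)$. Expanding $P_{\lambda,\mu}$ in the diagram basis, with coefficients rational in $N$, turns the integrand of \eqref{eq:topological-coefficient} into a linear combination of products of matrix entries of $U_e$ and $\bar U_e$. The number of $U_e$ factors equals the number of $\bar U_e$ factors along each edge, or else the integral vanishes.

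\textbf{Weingarten integration and gluing.} I integrate each edge independently with the Collins--Sniady Weingarten formula. This produces a sum over pairs of permutations $(\sigma_e,\tau_e)$ matching the $U_e$ strands with the $\bar U_e$ strands, weighted by $\mathrm{Wg}_N(\sigma_e\tau_e^{-1})$.

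Each term is a gluing datum. Each plaquette $p$ contributes $|\lambda_p|+|\mu_p|$ oriented polygonal sheets, with boundary word $\partial p$ or its reverse, joined along its walled Brauer diagram. Each loop $\ell_i$ contributes a boundary component. The permutations $\sigma_e$ prescribe how sheets are sewn along $e$, and the $\tau_e$ together with the Brauer contractions prescribe vertex identifications. Gluing yields a compact surface $\Sigma$ with $k$ boundary components, mapped cellularly into $\Lambda$; this is the decorated spanning surface.

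The surviving index contractions leave a power $N^{\#\text{cycles}}$, and the cycles are counted by vertices of $\Sigma$. Meanwhile $\mathrm{Wg}_N(\pi)=N^{-n-|\pi|}(\text{Möbius term}+O(N^{-2}))$ and the Brauer coefficients carry their own leading power. Combining these gives $N^{V-E+F}$, so the exponent is $\chi(\Sigma)$. The leftover negative powers are exactly $N^{-h(\Sigma)}$, where $h(\Sigma)$ collects $|\sigma_e\tau_e^{-1}|$ over edges (branch-type defects) plus the Brauer-contraction defects. The remaining rational factor, $N$-dependent but $O(1)$, is $\Omega_N(\Sigma;\alpha)$. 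It may be bundled with the normalized Weingarten function and the Brauer coefficients, and I would group terms giving isomorphic decorated surfaces. Finiteness of $S_{\Lambda,\Lc}(\alpha)$ is clear, since there are finitely many permutations.

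\textbf{Main obstacle.} The hardest step is the exact Euler-characteristic bookkeeping in the mixed (non-polynomial) case. I must verify that the contractions from $P_{\lambda,\mu}$, meaning the walled-Brauer arcs joining $g$ and $\bar g$ strands within a plaquette, are realized geometrically, for instance as tubes or handles between sheets of opposite orientation. They must also contribute to $\chi$ with the correct sign, so that $h(\Sigma)\ge0$ holds exactly rather than only asymptotically.

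As a first attempt I would expand $P_{\lambda,\mu}$ via Jucys--Murphy or the Koike formula, $\chi_{(\lambda,\mu)}=\sum \pm\, s_{\lambda/\nu}\,s_{\mu/\nu'}\cdots$, reducing to products of polynomial characters of $g$ and $\bar g$. These are sums over permutations with coefficients $\chi^\lambda(\pi)/m!$, which need no $N$-dependent normalization. If this works, the surfaces are built only from permutation gluings and the defect term reduces to the genus-type contributions of the Weingarten and branch permutations.
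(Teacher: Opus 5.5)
Your main line is the paper's own. You reduce to fixed $\alpha$ by Theorem~\ref{thm:state_sum_Wilson}, realize each $\chi_{\alpha_p}$ by Koike's traceless projector expanded in walled Brauer diagrams, and integrate edgewise with the unitary Weingarten formula. You then glue slot polygons, Brauer gadgets and Haar patches, and regroup by decorated isomorphism class. Using one projector per plaquette, instead of one between consecutive letters as in Theorem~\ref{prop:character_weingarten}, is fine for an exact identity.

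What you have misjudged is what must be proved. The statement only asks for some integer $h(\Sigma)\ge0$ and some explicit $\Omega_N(\Sigma;\alpha)$, and the paper gets both tautologically. It takes $h(\tau)$ to be the number of horizontal walled-Brauer strands. Every other $N$-dependence goes into $\Theta_N(\tau,\sigma):=N^{-\chi(\Sigma(\tau,\sigma))+h(\tau)}\mathcal K_N(\tau,\sigma)$, hence into $\Omega_N$. Remark~\ref{rem:topological-grading-not-asymptotic} stresses that $\chi-h$ is a grading, not an asymptotic order.

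Your attempt to make $N^{\chi-h}$ the true order with $\Omega_N=O(1)$ would not go through:
\begin{itemize}
\item The expansion $\mathrm{Wg}_N(\pi)=N^{-n-|\pi|}(\mathrm{Moeb}(\pi)+O(N^{-2}))$ is a statement as $N\to\infty$ at fixed $n$. Here $N$ is fixed and the number of strands per edge is unbounded over $\alpha$.
\item Outside the stable range the Brauer coefficients are not even unique, so they have no ``leading power''.
\item The Haar faces already contribute $\#\mathrm{cycles}(\sigma_e\tau_e^{-1})$ to $F$, so adding $|\sigma_e\tau_e^{-1}|$ to $h$ double counts.
\item Even in the trace-only case, the proof of Corollary~\ref{cor:cps} gives $N^{V}\prod\mathrm{Wg}_N=N^{\chi(M)-k}\prod\widetilde{\mathrm{Wg}}_N$, with $k$ the total number of words (loops and plaquette sheets). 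So the honest exponent is not $\chi$ of your surface.
\end{itemize}
Drop the asymptotic claim and define $\Omega_N$ as the leftover factor. Exactness and $h\ge0$ are then automatic, and your ``main obstacle'' disappears.

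The determinant twist should be removed, because it is both unnecessary and mishandled. Every highest weight is already of the form $[\lambda^+,\lambda^-]_N$ with $\ell(\lambda^+)+\ell(\lambda^-)\le N$, so Koike's realization applies directly. If you do twist, invariance under $U_e\mapsto e^{i\theta}U_e$ only forces $Ns_e+m_e-n_e=0$. Here $s_e$ is the net power of $\det U_e$, and $m_e,n_e$ count entries of $U_e$ and $\overline{U_e}$. It does not force $s_e=0$: for example, $\int_{\U(N)}\overline{\det U}\,U_{11}\cdots U_{NN}\,dU=1/N!\neq0$. Such unbalanced integrals are outside the Collins--Sniady formula. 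They would require $\varepsilon$-tensor gadgets, which are not walled Brauer diagrams. A minor normalization point: tracing $\rho(g)$ against the central isotypic idempotent gives $\chi^\lambda(1)\chi^\mu(1)\,\chi_{[\lambda,\mu]_N}(g)$. Divide by this constant or use a primitive idempotent; the constant is harmless inside $\Omega_N$.

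Your fallback is a genuinely different route, and it works.
\begin{itemize}
\item Koike's formula $\chi_{[\lambda,\mu]_N}(g)=\sum_\nu(-1)^{|\nu|}s_{\lambda/\nu}(g)\,s_{\mu/\nu'}(g^{-1})$ is valid whenever $\ell(\lambda)+\ell(\mu)\le N$.
\item Together with Littlewood--Richardson and the Frobenius formula, it writes each $\chi_{\alpha_p}(U_{\partial p})$ as a finite, $N$-independent combination of products of $\Tr(U_{\partial p}^{\pm j})$.
\item Then $\widehat W_{\Lambda,\Lc}(\alpha)$ falls entirely into the trace-only calculus of Lemma~\ref{lem:trace-specialization-DBM} and Corollary~\ref{cor:cps}.
\end{itemize}
This route needs no Brauer gadgets (one may take $h\equiv0$) and no Goncharov projector formula, and it has no choice-dependence outside the stable range. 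What you give up is the paper's geometry and refinement. Plaquette faces become $j$-fold branched covers of $p$ rather than homeomorphic copies. The traceless-projector structure, namely the forbidden matchings of Proposition~\ref{prop:elementary-forbidden} that the paper keeps for large-$N$ control, becomes hidden in cancellations between signed terms.
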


This theorem will follow from a rewriting of all topological coefficients $\widehat{W}_{\Lambda,\Lc}(\alpha)$ as sums over decorated surfaces in the spirit of Weingarten calculus. It can be interpreted as an exact decorated-surface expansion, or an exact gauge/string duality, understood in the broad sense initiated by 't~Hooft's genus expansion \cite{Hoo74} and made geometrically explicit in two-dimensional Yang--Mills by Gross and Taylor, who interpreted the large-$N$ expansion as a sum over maps of surfaces to the target surface, with later refinements incorporating twists and Wilson loops \cite{GT,GT2,CMR}. The particular feature of our expansion is that the surfaces live in a random plaquette-decorated environment, and it is a quite complicated task to exchange the sum and the expectation in order to absorb the random environment inside the surface-sum. It is nonetheless possible for the Wilson action, and we will show how to recover a surface-sum formula by Cao--Park--Sheffield \cite{CPS25} in this specialization. Let us mention that other exact or convergent expansions were obtained in two dimensions by L\'evy \cite{Lev08} and Novak \cite{Nov2024} via Schur--Weyl duality and ramified coverings, and asymptotic expansions by the author together with Ma\"ida \cite{LemMai25,LM2} while in higher-dimensional lattice gauge theory the strong-coupling large-$N$ works of Chatterjee \cite{Cha19} and Jafarov \cite{Jaf16} provided different realizations of the same gauge/string philosophy. Theorem~\ref{thm:surface-sum-Wilsonloops} should be viewed as an exact finite-$N$, action-agnostic extension of this circle of ideas: it produces a surface-sum expansion for all smooth central plaquette actions, before any large-$N$ or strong-coupling limit is taken.

It is worth emphasizing that this theorem is actually the lattice Yang--Mills specialization of two more general results of independent interest, namely Theorem~\ref{prop:character_weingarten} and Proposition~\ref{prop:character_surface}, which hold for arbitrary products of rational irreducible characters of $\U(N)$ evaluated on words in independent Haar unitaries. Theorem~\ref{prop:character_weingarten} gives a refined Weingarten expansion in this general setting, and Proposition~\ref{prop:character_surface} recasts it as an exact decorated-surface expansion. In this sense, the gauge/string expansion of lattice Yang--Mills derived here should be viewed as a target-lattice incarnation of a broader topological-expansion phenomenon for unitary integrals, initiated by the works of Magee--Puder \cite{MagPud19,Mag2,MdlS26} and Dahlqvist \cite{Dah26}, and related in spirit to the map-enumeration and topological-recursion viewpoints developed for matrix integrals and unitary models in \cite{GuionnetMaurelSegala2006,CollinsGuionnetMaurelSegala2009,GuionnetNovak2015,BucDalche2026}, but formulated here as an exact finite-$N$ statement naturally adapted to our context.

\subsection{Spin-foam duality}

A second main theme of the paper is that Wilson loop expectations admit, besides the global surface expansion of Theorem~\ref{thm:surface-sum-Wilsonloops}, a genuinely local combinatorial description. The guiding idea is that once the plaquette action has been expanded in irreducible characters, each plaquette carries a highest-weight label, and the remaining problem is to understand how these local representation-theoretic degrees of freedom interact through the edge integrations. This naturally suggests a dual formulation in which plaquettes are viewed as local representation
variables and edges as local gluing constraints. The outcome is an exact local model for Wilson loop expectations, in which the observable is computed by summing over plaquette labels together with additional finite-dimensional channel data living on plaquette-edge incidences.

This point of view is close in spirit to the spin-foam and duality formulations of lattice gauge theory initiated by Oeckl--Pfeiffer \cite{OeckPfei01} and later developed by Oeckl \cite{Oec03}, in which non-Abelian lattice gauge theory is rewritten as a local state sum of representation data. It is also related to later work on dual computations and spin-foam methods \cite{CCK07}. The novelty of the present framework is that the local dual model is extracted after the universal spectral/topological splitting of Theorem~\ref{thm:state_sum_Wilson}. As a consequence, the action dependence is entirely carried by the spectral weights of the plaquette labels, while the topological coefficients themselves admit an exact finite-range representation in terms of local channel variables. In particular, Wilson loop expectations become ratios of two partition functions built from the same bulk plaquette-channel background; the additional Wilson scalarization data are localized on the finitely many edge vertices traversed by the loop family.

This formulation is complementary to the surface expansion: the latter is global and geometric, and it organizes the coefficients in terms of spanning surfaces and makes the gauge/string aspect of the theory transparent. By contrast, the channel representation isolates a genuinely local bulk model on the dual incidence graph, together with a finite Wilson defect supported on the loop family. It is therefore better adapted to locality questions.

\begin{theorem}[Spin-foam representation]\label{thm:spin-foam}
Let $\Lambda\subset \mathbb Z^d$ be a finite connected lattice, let $T\subset E(\Lambda)$ be a spanning tree, and let $\Lc=(\ell_1,\dots,\ell_k)$ be a finite family of loops in the $1$-skeleton of $\Lambda$. For every plaquette decoration $\alpha:P(\Lambda)\to \widehat{\U(N)},$ there exists a finite set $\mathcal C^\Lc_\Lambda(\alpha)$ of scalar local-channel fields, together with local plaquette coefficients $A_p^N$ and local edge kernels $K_e^N$, such that
\begin{equation}
\widehat W_{\Lambda,\Lc}(\alpha)=\sum_{\Gamma\in \mathcal C^\Lc_\Lambda(\alpha)}
\Bigl(\prod_{p\in P(\Lambda)}A_p^N(\Gamma_{\partial p};\alpha_p)\Bigr)
\Bigl(\prod_{e\in E(\Lambda)\setminus T}K_e^N(\Gamma_e;\alpha,\Lc)\Bigr).
\end{equation}
Moreover:
\begin{enumerate}
\item the plaquette coefficient $A_p^N$ depends only on the single plaquette label $\alpha_p$ and on the channel tuple around $p$;
\item for a fixed Wilson scalarization datum, the edge kernel $K_e^N$ depends only on the incident plaquette channels and on the local Wilson factor at the edge-vertex $e$;
\item all Wilson-dependent scalarization data are confined to the finite defect supported on $D(\Lc)$, while the plaquette coefficients are independent of $\Lc$.
\end{enumerate}
\end{theorem}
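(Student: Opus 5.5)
Fix $\alpha$ and work directly with the integral \eqref{eq:topological-coefficient} defining $\widehat W_{\Lambda,\Lc}(\alpha)$. Write the Wilson loop traces as characters of the fundamental representation $\square$ (or its dual, according to orientation). The integrand is then a product of matrix coefficients of irreducible representations evaluated on products of edge variables.

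\emph{Step 1: gauge fixing.} By gauge invariance of the integrand under $U_e\mapsto g_{s(e)}U_eg_{t(e)}^{-1}$ and invariance of Haar measure, set $U_e=1$ for $e\in T$. This is the standard maximal-tree gauge fixing, so only the edges $e\in E(\Lambda)\setminus T$ remain as integration variables. Each plaquette holonomy $U_{\partial p}$ and each loop holonomy $U_{\ell_i}$ becomes a word in these free variables.

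\emph{Step 2: matrix-coefficient expansion.} For each plaquette $p$ with boundary word $U_{e_1}^{\epsilon_1}\cdots U_{e_m}^{\epsilon_m}$ (non-tree edges only), expand
\[
\chi_{\alpha_p}(U_{\partial p})=\sum_{i_1,\dots,i_m}\prod_{j}\rho_{\alpha_p}(U_{e_j}^{\epsilon_j})_{i_ji_{j+1}}
\]
in a fixed orthonormal basis of $V_{\alpha_p}$. The indices $i_j$ sit at the corners of $p$, i.e.\ at plaquette--edge incidences. Do the same for each $\Tr(U_{\ell_i})$, with indices at the edge-vertices traversed by $\ell_i$. This is the locus $D(\Lc)$.

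\emph{Step 3: integrate edge by edge.} Since the $U_e$ for $e\notin T$ are independent, the integral factorizes over edges. For each $e$, the factor is $\int \bigotimes_{(p,e)}\rho_{\alpha_p}^{\pm}(U_e)\otimes(\square^{\pm})^{\otimes w_e}\,dU_e$, where $w_e$ is the number of Wilson passages through $e$. This is the orthogonal projector onto the invariants of the tensor product $\bigotimes_{p\ni e}V_{\alpha_p}^{\pm}\otimes(\square^\pm)^{\otimes w_e}$. Choose an orthonormal basis $\{\iota\}$ of this invariant space (intertwiners), so the projector equals $\sum_\iota \iota\,\iota^*$.

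\emph{Step 4: assemble the local amplitudes.} Inserting this decomposition, define the channel field $\Gamma$ to be the assignment of an intertwiner label $\Gamma_e$ to each non-tree edge, together with the index data at the incidences. The contraction of the $\iota$'s around a plaquette $p$, i.e.\ the trace over $V_{\alpha_p}$ of the cyclic product of intertwiner legs, defines $A^N_p(\Gamma_{\partial p};\alpha_p)$. It depends only on $\alpha_p$ and the channels around $p$. The contraction of the fundamental legs along the loops is absorbed edge-by-edge into $K^N_e$: a fixed scalarization of Wilson indices at each edge-vertex of $D(\Lc)$, with $K_e^N\equiv$ the trivial normalization off $D(\Lc)$.

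Finiteness of $\mathcal C^\Lc_\Lambda(\alpha)$ is immediate, since all spaces involved are finite-dimensional.

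\emph{Main obstacle.} The delicate point is localizing the Wilson contraction so that $A_p^N$ is genuinely independent of $\Lc$. The loop index contractions link neighbouring edge-vertices globally along each $\ell_i$. To handle this, I would first try to include the loop indices themselves as additional scalar channel variables on $D(\Lc)$, summed over. Each $K_e$ then carries only the Kronecker deltas joining adjacent passages at its own vertex, and the cyclic gluing along $\ell_i$ is realized by matching these scalar labels. This gives items 1--3.
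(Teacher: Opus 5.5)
Your Steps 1--3 are sound: termwise in the corner indices, the gauge-fixed integrand is a product of one-edge factors, and each Haar average is the projector onto the invariants of the edge space. Step~4, however, fails exactly where locality is at stake.

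An intertwiner $\iota_e$ is an invariant vector of $\bigotimes_{p\ni e}V_{\alpha_p}^{\pm}\otimes(\square^{\pm})^{\otimes w_e}$. In general it is not a product of separate plaquette ``legs''; for an edge shared by two plaquettes it is, up to scale, the canonical pairing $\sum_i v_i\otimes v_i^*$. The ``trace over $V_{\alpha_p}$ of the cyclic product of intertwiner legs'' therefore contracts only the $V_{\alpha_p}$-legs of the $\iota_e,\bar\iota_e$ with $e\in\partial^Tp$. Their legs in $V_{\alpha_q}$, $q\neq p$, stay open, so the result is a tensor, not a scalar, and it is not determined by $\alpha_p$ and the data at $p$ alone. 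The full contraction of all $\iota_e,\bar\iota_e$ along the plaquette corners is one connected network. After contracting $T$ the complex has a single vertex, so the Oeckl--Pfeiffer-type amplitude is one global spin-network evaluation, and in general it does not factor as $\prod_pA^N_p$.

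Relatedly, in your bookkeeping the Haar data at a bulk edge sit only inside these would-be plaquette amplitudes, which is why you declare $K^N_e$ trivial off $D(\Lc)$. That cannot be right. Off $D(\Lc)$ the kernel is the \emph{vacuum} kernel, which still carries the Haar projection coupling all incident plaquette channels (Proposition~\ref{prop:finite-range-defect-property}). If it were trivial, the bulk edge integrations would simply have disappeared.

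The missing idea is that cross-incidence couplings must be scalarized in coordinates \emph{before} anything is assigned to plaquettes, while the Haar projection stays in the edge kernel at every non-tree edge. Within your route this is easy to repair; drop the intertwiner basis and proceed as follows.
\begin{enumerate}
\item Let the channel at an incidence $(p,e)$ be the row/column basis indices of the occurrences of $e$ in $\partial^Tp$.
\item Let $A^N_p$ be the product of Kronecker deltas matching each occurrence's column index with the next occurrence's row index. It depends only on $\alpha_p$ (through the index range) and on the channels of $p$.
\item Let $K^N_e$ be the corresponding matrix entry of $\int_{\U(N)}\bigotimes\rho(U_e)\,dU_e$ for \emph{every} $e\notin T$, adding Wilson legs on $D(\Lc)$ and treating the loop corner indices as the shared scalar labels of your last paragraph.
\end{enumerate}
This gives an elementary proof of the statement as written.

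The paper does the operator-level version of the same scalarization. It realizes $\chi_{\alpha_p}$ on mixed tensors and expands the inserted projectors in walled Brauer diagrams. It then collects all static contractions of $p$ into $c^{\alpha_p}_N(\tau_p)\Phi^N_{p,\tau_p}$ and expands this in a tensor-product basis of $\bigotimes_e\End(\mathcal H_{p,e})$, giving scalars $a^N_p$ times decomposable $\bigotimes_eC^N_{p,e}$ (eq.~\eqref{eq:finiteN-decomp-cphi}). Only then does it set $K^N_e=\Tr(C_{e,N}\Pi_{e,N})$. What the Brauer layer buys is channels carrying mixed Schur--Weyl structure, which a bare index version does not have.
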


Equivalently, topological coefficients are computed by a finite-range bulk model on plaquette labels and incidence-channel data on the dual incidence graph, coupled to a finite tensor-network defect carried by the Wilson insertion. Define the \emph{defect partition function}
\begin{equation}\label{eq:defect-pf}
\mathcal{Z}_{\Lambda,Q}^{(\Lc)}:=\sum_{\alpha:P(\Lambda)\to \widehat{\U(N)}}\kappa_{\Lambda,Q}(\alpha)
\sum_{\Gamma\in \mathcal C^\Lc_\Lambda(\alpha)}\Bigl(
\prod_{p\in P(\Lambda)}A_p^N(\Gamma_{\partial p};\alpha_p)\Bigr)\Bigl(\prod_{e\in E(\Lambda)\setminus T}K_e^N(\Gamma_e;\alpha,\Lc)\Bigr),
\end{equation}
and the defect support
\[
D(\Lc):=\{e\in E(\Lambda)\setminus T:\text{ some loop of }\Lc\text{ traverses }e\}.
\]
Wilson loop expectations can then be expressed as ratios of such partition functions.

\begin{theorem}[Exact defect-ratio representation]\label{thm:spin-foam2}
For any finite loop family $\Lc$,
\begin{equation}
\mathbb E[W_{\Lambda,\Lc}(U)]=\frac{\mathcal{Z}_{\Lambda,Q}^{(\Lc)}}{\mathcal{Z}_{\Lambda,Q}^{(0)}}.
\end{equation}
The bulk plaquette-channel data are the same in numerator and denominator, and all additional Wilson-dependent data are localized on edge vertices in the defect support $D(\Lc)$; away from $D(\Lc)$ the local edge kernel is the vacuum one.
\end{theorem}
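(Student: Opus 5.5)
The plan is to combine Theorem~\ref{thm:state_sum_Wilson} with Theorem~\ref{thm:spin-foam}, applied once to the loop family $\Lc$ and once to the empty family. For the numerator, Theorem~\ref{thm:state_sum_Wilson} gives
\[
\E[W_{\Lambda,\Lc}(U)]\,Z_{\Lambda,Q}=\sum_{\alpha:P(\Lambda)\to\widehat{\U(N)}}\kappa_{\Lambda,Q}(\alpha)\,\widehat W_{\Lambda,\Lc}(\alpha).
\]
For each fixed $\alpha$, we substitute the channel expansion of $\widehat W_{\Lambda,\Lc}(\alpha)$ from Theorem~\ref{thm:spin-foam}. The inner sum over $\Gamma\in\mathcal C^\Lc_\Lambda(\alpha)$ is finite, so this substitution is legitimate term by term. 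The result is exactly $\mathcal Z^{(\Lc)}_{\Lambda,Q}$ as defined in \eqref{eq:defect-pf}, and the outer series over $\alpha$ converges absolutely by the bound already established in the proof of Theorem~\ref{thm:state_sum_Wilson}.

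For the denominator we take $\Lc=\emptyset$ (so $k=0$). The Wilson product in \eqref{eq:topological-coefficient} is then $1$, and Theorem~\ref{thm:state_sum_Wilson} with trivial observable reads
\[
Z_{\Lambda,Q}=\int\prod_p Q_p(U_{\partial p})\,dU=\sum_\alpha\kappa_{\Lambda,Q}(\alpha)\widehat W_{\Lambda,\emptyset}(\alpha).
\]
Applying Theorem~\ref{thm:spin-foam} to the empty family turns the right-hand side into $\mathcal Z^{(0)}_{\Lambda,Q}$. Since $Q_p>0$, we have $Z_{\Lambda,Q}>0$, so $\mathcal Z^{(0)}_{\Lambda,Q}\neq 0$. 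Dividing the two identities gives the claimed ratio formula. The important point is that the same spanning tree $T$ is used in both applications.

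The structural claims then follow from items 1--3 of Theorem~\ref{thm:spin-foam}. By item 1, the plaquette coefficients $A^N_p(\cdot;\alpha_p)$ are independent of $\Lc$, so they are the same functions in numerator and denominator. Likewise, the spectral weights $\kappa_{\Lambda,Q}(\alpha)$ are identical on both sides. By items 2--3, all Wilson scalarization data sit on edges of $D(\Lc)$.

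It remains to show that $K^N_e(\cdot;\alpha,\Lc)=K^N_e(\cdot;\alpha,\emptyset)$ for $e\notin D(\Lc)$, and that the channel set $\mathcal C^\Lc_\Lambda(\alpha)$ differs from $\mathcal C^\emptyset_\Lambda(\alpha)$ only by extra data on $D(\Lc)$. I expect this to be the main obstacle: it requires opening up the construction behind Theorem~\ref{thm:spin-foam} rather than using its statement as a black box. The first thing I would try is the following. After gauge-fixing $U_e=1$ on $T$, the Haar integral factorizes over $E(\Lambda)\setminus T$. Each free edge $e$ then contributes an integral $\int_{\U(N)}\bigotimes_{p\ni e}\rho_{\alpha_p}(U_e)^{\pm}\otimes(\text{Wilson factors through }e)\,dU_e$, which is a projector onto invariants.

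If no loop traverses $e$, this edge integral contains no Wilson factor. It is therefore literally the vacuum projector, and the corresponding scalarized kernel coincides with the vacuum one; this gives locality off $D(\Lc)$. On $D(\Lc)$, the extra fundamental tensor factors enlarge the invariant space, and this enlargement is the only additional channel data. That establishes the final sentence of the theorem.
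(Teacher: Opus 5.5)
Your proposal is correct and follows the paper's proof. You combine Theorem~\ref{thm:state_sum_Wilson} with Theorem~\ref{thm:spin-foam}, once for $\Lc$ and once for the empty family with the same spanning tree $T$, identify $\mathcal Z^{(0)}_{\Lambda,Q}=Z_{\Lambda,Q}$, and divide. The off-defect locality comes from the fact that for $e\notin D(\Lc)$ the Wilson space is $\C$ with trivial representation and identity coordinate operator, which is exactly Proposition~\ref{prop:finite-range-defect-property}. One small precision: the extra Wilson channel data is the scalarization index $\omega$ of the Wilson contraction operator, which couples edges along the loops, and its non-trivial factors $C^W_{e,\omega,N}$ sit on $D(\Lc)$. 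It is not merely the enlargement of the invariant subspace, but your appeal to items 2--3 of Theorem~\ref{thm:spin-foam} already covers this.
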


Theorem~\ref{thm:spin-foam2} may be compared with the worldsheet/background decompositions considered in the spin-foam literature on confinement, notably by Cherrington and collaborators \cite{CCK07}. There, Wilson loop observables are organized as sums over worldsheets interacting with vacuum spin-foam backgrounds, typically with non-trivial degeneracy issues. In contrast, the present defect-ratio formula is an exact identity at finite $N$: numerator and denominator share the same bulk plaquette-channel background, while the additional Wilson scalarization data and the modified edge kernels are supported on the defect set $D(\Lc)$. This avoids any non-canonical decomposition of charged configurations into background plus spanning surface, while retaining a fully local dual description.

Together, Theorems~\ref{thm:spin-foam} and~\ref{thm:spin-foam2} answer a question of Cao--Park--Sheffield~\cite[\S 7, (16)]{CPS25}: compared to the global surface-sum organization, the spin-foam representation makes the \emph{local properties} of lattice Yang--Mills explicit. Such locality might be naturally relevant for infinite-volume limits or scaling limits, and it is precisely one of the features exploited in the companion large-$N$ application discussed below.

\subsection{Master loop equation}

Master loop equations occupy a central place in lattice Yang--Mills theory because they are the basic exact recursions satisfied by Wilson observables. In the Wilson-action setting, such equations lie behind the strong-coupling and large-$N$ analyses of Chatterjee and Jafarov \cite{Cha19,Jaf16}. Shen--Smith--Zhu \cite{SSZ24} recently rederived them at finite $N$ for $\SO(N)$, $\U(N)$ and $\SU(N)$ through a Langevin-dynamic approach. These results show that the Wilson action admits a remarkably rigid cut-and-join/deformation recursion directly at the level of Wilson loop expectations. From that point of view, the classical finite-$N$ master loop equation is already one of the fundamental exact identities of the subject.

Theorem~\ref{thm:state_sum_Wilson} reveals these equations as instances of a broader topological and algebraic recursion. Rather than starting from a particular action and working directly with Wilson expectations, we first pass to the coefficient side of the state-sum expansion and prove a universal integration-by-parts identity for the topological coefficients $\widehat W_{\Lambda,\Lc}(\alpha)$. The result is a coefficientwise master loop equation that closes on the family
\[
\bigl\{\widehat W_{\Lambda,\Lc}(\alpha): \Lc \text{ loop family},\ \alpha:P(\Lambda)\to\widehat{\U(N)}\bigr\}.
\]
It involves two local operators $\mathscr{L}_e$ and $\mathscr{B}_{e,p}$ that we will describe in Section~\ref{sec:master-loop}; the former is the standard cut-and-join operator and acts on loops directly, and the latter is a local loop--plaquette recoupling operator: it performs a local surgery on the loop family and a finite recoupling of the active plaquette label.

\begin{theorem}[Universal coefficientwise master loop equation]
\label{thm:universal-coefficientwise-master-equation}
For every oriented edge $e\in E(\Lambda)$, every loop sequence $\Lc$, and every plaquette decoration
$\alpha:P(\Lambda)\to \widehat{\U(N)}$, one has
\begin{equation}\label{eq:master-equation}
(\mathscr L_e \widehat{W}_{\Lambda,\Lc})(\alpha)+\sum_{p\ni e}( \mathscr{B}_{e,p}\widehat{W}_{\Lambda,\Lc})(\alpha)=0.
\end{equation}
In particular, it closes on all topological coefficients.
\end{theorem}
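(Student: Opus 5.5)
The plan is to prove \eqref{eq:master-equation} as an integration-by-parts identity for the Haar measure on the single edge variable $U_e$, with all other edge variables frozen, applied to the integrand of \eqref{eq:topological-coefficient}. Fix the oriented edge $e$ and a basis $(X_a)$ of $\mathfrak u(N)$, orthonormal for $\langle X,Y\rangle=-\Tr(XY)$. Left-invariance of Haar measure on $\U(N)$ gives, for every smooth $F$ on $\U(N)^{E(\Lambda)}$,
\[
\int \sum_a \frac{d^2}{ds^2}\Big|_{s=0} F\bigl(\ldots,e^{sX_a}U_e,\ldots\bigr)\,dU=0 .
\]
Equivalently, $\int \Delta_e F\,dU=0$, where $\Delta_e$ is the Laplacian in the variable $U_e$. (An alternative is the first-order identity $\int (\partial_{X_a}^{(e)} F)\,dU=0$ applied to $F=\Tr(X_a U_{\ell}\cdots)\cdot(\ldots)$, which is the Chatterjee/SSZ route. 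I prefer the second-order form because it stays gauge-invariant and acts symmetrically on loops and plaquettes.) I would take
\[
F=\prod_{i}\Tr(U_{\ell_i})\prod_{p}\chi_{\alpha_p}(U_{\partial p}).
\]

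Next I apply the Leibniz rule for $\Delta_e$. The Laplacian of a product splits into diagonal terms $\Delta_e$ acting on a single factor and cross terms $2\langle\nabla_e f,\nabla_e g\rangle$ between pairs of factors. I sort these into three groups.

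The first group consists of the terms involving only loop factors. These are $\Delta_e$ on each trace, which gives $-N$ times (number of occurrences of $e$) plus splitting terms, together with the gradient pairings between traces. Using the identity $\sum_a X_a\otimes X_a=-P$ (the flip, i.e. the Casimir tensor of $\U(N)$), these produce exactly the splitting and merging (cut-and-join) of loops at the occurrences of $e$, with signs according to relative orientation. This group defines $(\mathscr L_e\widehat W_{\Lambda,\Lc})(\alpha)$.

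The second group consists of $\Delta_e$ acting on a character $\chi_{\alpha_p}(U_{\partial p})$ for $p\ni e$. Since the Laplacian acts on $\chi_\lambda$ by $-c_2(\lambda)\chi_\lambda$, with $c_2$ the quadratic Casimir of the rational highest weight $\lambda$, this group only rescales $\widehat W_{\Lambda,\Lc}(\alpha)$. I would absorb it as the diagonal part of $\mathscr B_{e,p}$.

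The third group consists of the gradient pairings, of two kinds. Loop--plaquette pairings have the form $\sum_a \Tr(X_aU_\ell\cdots)\,\Tr(\rho_\lambda(X_a)\rho_\lambda(U_{\partial p}))$. Plaquette--plaquette pairings arise for two plaquettes sharing $e$; I would need to check whether these can occur in the definition of $\mathscr B_{e,p}$, and if so attribute them symmetrically (half to each $p$).

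The main obstacle is the loop--plaquette pairing: I must show that it is again a finite combination of topological coefficients $\widehat W_{\Lambda,\Lc'}(\alpha')$, so that the identity closes. The term contracts the fundamental representation with $\rho_\lambda$ via the Casimir tensor. I would use the decomposition $V\otimes V_\lambda=\bigoplus_{\mu=\lambda+\epsilon_j}V_\mu$ (Pieri's rule), together with its dual version for $\bar V$. On $V\otimes V_\lambda$, the tensor $\sum_a\rho_V(X_a)\otimes\rho_\lambda(X_a)$ equals $\tfrac12(C_{V\otimes V_\lambda}-C_V-C_\lambda)$ and is therefore diagonal on isotypic blocks. Cutting the loop at $e$ and inserting this operator rewrites the term as a sum over $\mu$ of characters $\chi_\mu$ evaluated on the loop concatenated with $\partial p$.

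There is one technical point. After this rewriting, the result is a character of a merged word rather than a product $\Tr\cdot\chi$. To re-express it as a coefficient of the form in the paper, I would use the projector (Young symmetrizer) formula $\chi_\mu(g)$-contraction, or equivalently declare that the loop–plaquette surgery produces the new loop family $\Lc'$ (with $e$ rerouted around $\partial p$) and the relabelled decoration $\alpha'_p=\mu$. Weighting by the Casimir differences $\tfrac12(c_2(\mu)-c_2(\lambda)-c_2(\square))$ then gives exactly the recoupling described for $\mathscr B_{e,p}$ in Section~\ref{sec:master-loop}.

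Checking that this rerouted object is again an honest $\widehat W_{\Lambda,\Lc'}(\alpha')$ is the delicate step. If it is not, I would enlarge the closing family to coefficients with projector-decorated insertions and verify that Section~\ref{sec:master-loop} defines $\mathscr B_{e,p}$ accordingly. Summing all groups and integrating gives \eqref{eq:master-equation}. Closure then follows because every term lies in the family $\{\widehat W_{\Lambda,\Lc}(\alpha)\}$.
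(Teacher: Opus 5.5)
There is a genuine gap, and it starts with the identity you integrate. Your loop-only group is fine; it is Proposition~\ref{prop:universal-loop-laplacian-term}. The trouble is everything else.

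In the paper, $\mathscr B_{e,p}$ is by definition the packaged form of the single mixed integral $\int\sum_a(\partial_{e,a}W_{\Lc})\,\partial_{e,a}\chi_{\alpha_p}(U_{\partial p})\prod_{q\neq p}\chi_{\alpha_q}(U_{\partial q})\,dU$, taken with coefficient one. Every term of it carries a merged family $\tau_{x,u}(\Lc)$ and a relabelled decoration $\alpha^{p\to\beta}$. So it has no diagonal part and never changes two plaquette labels.

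Equation~\eqref{eq:master-equation} comes from the first-order Haar identity $\int\partial_{e,a}G_a\,dU=0$ with $G_a=(\partial_{e,a}W_{\Lc})\prod_p\chi_{\alpha_p}(U_{\partial p})$, summed over $a$ (Proposition~\ref{prop:first-order-haar-ibp}). There the first derivative always falls on $W_{\Lc}$, so the two derivatives never both land on plaquette factors.

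Your identity $\int\Delta_e\bigl(W_{\Lc}\prod_p\chi_{\alpha_p}(U_{\partial p})\bigr)\,dU=0$ is the sum of that identity and the companion one $\int\sum_a\partial_{e,a}\bigl(W_{\Lc}\,\partial_{e,a}\prod_p\chi_{\alpha_p}(U_{\partial p})\bigr)\,dU=0$. What it actually gives is $I_\Delta+2I_{\mathrm{mix}}+(\text{Casimir terms})+(\text{plaquette--plaquette terms})=0$. Here $I_\Delta$ and $I_{\mathrm{mix}}$ are the integrals the paper identifies with $(\mathscr L_e\widehat W_{\Lambda,\Lc})(\alpha)$ and $\sum_{p\ni e}(\mathscr B_{e,p}\widehat W_{\Lambda,\Lc})(\alpha)$. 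This is not the stated equation.

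The extra terms cannot be absorbed into $\mathscr B_{e,p}$ as you propose. The Casimir terms are diagonal, and $\mathscr B_{e,p}$ has none. For distinct plaquettes $p,p'$ through $e$, the pairing $\sum_a\partial_{e,a}\chi_{\alpha_p}(U_{\partial p})\,\partial_{e,a}\chi_{\alpha_{p'}}(U_{\partial p'})$ contracts a slot of $T_{\alpha_p}$ with a slot of $T_{\alpha_{p'}}$. For fundamental labels it is, up to sign, the trace of one word running around both $\partial p$ and $\partial p'$. It is therefore not a recoupling of a single label, and splitting it half-and-half does not change that.

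The repair is to subtract the companion identity, which amounts to the first-order route you set aside. Your objection to that route does not hold: after summing over an orthonormal basis, $\sum_a(\partial_{e,a}W_{\Lc})\,\partial_{e,a}\chi_{\alpha_p}(U_{\partial p})$ is basis-independent and gauge-invariant.

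Your treatment of the loop--plaquette pairing also fails as written. The operator $\sum_a X_a\otimes\rho_\lambda(X_a)$ is indeed scalar on the $\mu$-isotypic blocks of $V\otimes V_\lambda$ for the diagonal action. Inside the trace, however, it multiplies $U_{\ell_i}\otimes\rho_\lambda(U_{\partial p})$, which is built from two unrelated holonomies. The Casimir-difference weights therefore multiply functions $\Tr\bigl(P_\mu\,(U_{\ell_i}\otimes\rho_\lambda(U_{\partial p}))\bigr)$ of the pair, with $P_\mu$ the isotypic projector. These are not characters $\chi_\mu$ of a concatenated word.

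The paper works slot by slot instead. It writes $\chi_{\alpha_p}(U_{\partial p})=\Tr_{T_p}\bigl(P_p\,\rho_p(U_{\partial p})\bigr)$. The magic formula $\sum_a\Tr(X_aA)X_a=-A$ then reroutes the loop strand through the active slot $u$, giving the merger $\tau_{x,u}(\Lc)$. Schur's lemma expands the remaining equivariant operator into characters $\chi_\beta(U_{\partial p})$ (Propositions~\ref{prop:local-incidence-tensor} and~\ref{prop:explicit-incidence-recoupling}).

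Your suspicion that this is the delicate step is justified, because the rerouted strand still passes through $P_p$. Take $\alpha_p=[(2),\varnothing]_N$, and let $h,g$ be the loop and plaquette holonomies written as words beginning with the active occurrence of $e$. Then $\sum_a\partial_{e,a}\Tr(h)\,\partial_{e,a}\chi_{\alpha_p}(g)=-\Tr(g)\Tr(hg)-\Tr(hg^2)$. The last term winds twice around $\partial p$, and it is not of the form $\Tr(hg)\cdot f(g)$ that the one-merger, one-box packaging requires.

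Closure is most safely obtained by expanding $P_p$ in walled Brauer diagrams, or by writing $\chi_{\alpha_p}$ as a polynomial in the $\Tr(U_{\partial p}^{\pm k})$. Either way, every mixed term becomes a finite combination of coefficients $\widehat W_{\Lambda,\Lc'}(\alpha')$ with $\alpha'_p$ trivial. No enlargement of the family is needed.
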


Although it seems more natural to us to state this master loop equation at the level of topological coefficients, it also admits a resummed form. Multiplying~\eqref{eq:master-equation} by the spectral weight $\kappa_{\Lambda,Q}(\alpha)$, summing over plaquette decorations and using~\eqref{eq:State_sum_Wilson}, we obtain
\begin{equation}
\mathscr L_e \mathbb E[W_{\Lambda,\Lc}(U)]
+\frac1{Z_{\Lambda,Q}}\sum_{p\ni e}\sum_{\alpha:P(\Lambda)\to\widehat{\U(N)}}
\kappa_{\Lambda,Q}(\alpha)(\mathscr B_{e,p}\widehat{W}_{\Lambda,\Lc})(\alpha)=0.
\end{equation}

We see that there are both a loop part, applied on Wilson loop expectation $W_{\Lambda,\Lc}(U)$, and a Fourier part, applied to the topological coefficients $\widehat{W}_{\Lambda,\Lc}(\alpha)$. The former involves the familiar cut-and-join operator, and the latter involves a local recoupling operator on the active Fourier label $\alpha_p$.

This point of view is important for two reasons. First, it shows that the mechanism behind the master loop equation is more universal than the classical Wilson-action formulas suggest: what is truly fundamental is a hybrid position/Fourier recursion, in which the loop sector stays geometric while the plaquette sector is transferred to representation space. Second, it opens the door to non-Wilson specializations. For the Wilson action, the local recoupling operator collapses after resummation to the usual deformation terms; for the heat kernel action, by contrast, it remains visible as an explicit spectral transfer operator on the active plaquette label. In this sense, Theorem~\ref{thm:universal-coefficientwise-master-equation} should be viewed as a common finite-$N$ ancestor of the usual lattice master loop equations and of the more spectral Makeenko--Migdal-type identities that arise in two-dimensional Yang--Mills \cite{SSZ25}.

\subsection{Other classical groups}

Although this paper is deliberately focused on the unitary group $\U(N)$, the representation-theoretic architecture is not intrinsically unitary. The corresponding tensor models and commutant algebras suggest analogues for $\SU(N)$, $\SO(N)$ and $\Sp(N)$. Appendix~\ref{sec:appendix} records the precise ingredients that would be needed for such extensions.

\subsection{First application: the heat-kernel master field}

The universal structure developed here already has a concrete non-Wilson application. In the companion paper~\cite{Lem26prep}, the state-sum splitting, the projection-supported local-channel representation and the coefficientwise master loop equation are combined with large-$N$ heat-kernel estimates from~\cite{LemMai25,LM2} to construct the heat-kernel master field on $\Z^d$, for every $d\ge2$, at strong coupling. That work proves a uniform all-order $1/N$-expansion with exponentially local infinite-volume coefficients, leading-order factorization, and an area-law upper bound for the resulting master field. Thus the separation achieved in the present paper is not only formal: it isolates an action-independent combinatorial layer that can be coupled to action-specific spectral coercivity in order to solve a genuinely different lattice action.

\subsection{Organization of the paper}

First, we will develop in Section~\ref{sec:surface-sum} the refined Weingarten construction of spanning surfaces from integrals of products of rational characters of $\U(N)$, leading to the proof of Theorem~\ref{thm:surface-sum-Wilsonloops}. Second, we will present in Section~\ref{sec:local-channel} the model of local channels based on the dual incidence graph, and prove Theorems~\ref{thm:spin-foam} and~\ref{thm:spin-foam2}. Third, we will prove the master loop equation (Theorem~\ref{thm:universal-coefficientwise-master-equation}) in Section~\ref{sec:master-loop}. Finally, we will recover existing results about the Wilson action in Section~\ref{sec:specializations}, and record in Appendix~\ref{sec:appendix} the group-specific inputs required to extend the framework to other compact classical groups.

\subsection*{Acknowledgements}

The author would like to thank Nguyen Viet Dang, Thierry L\'evy, Elias Nohra for various discussions, and in particular Antoine Dahlqvist for discussions about mixed Schur--Weyl duality that unlocked many ideas leading to the present paper.

\section{Decorated-surface expansion}\label{sec:surface-sum}

In this section, we introduce the combinatorial toolbox required to prove Theorem~\ref{thm:surface-sum-Wilsonloops}. We will start with a general procedure, consisting in the integration of products of characters of rational irreducible representations of $\U(N)$, and which is a natural generalization of the Weingarten calculus that computes integrals of products of traces of tensor powers of the fundamental and contragredient representation. An important part of the arguments presented here have been initiated in the recent works \cite{Mag,Mag2,Dah26}, and we will extend them to a fairly general setting, before applying them to Wilson loop expectations. More precisely, Magee showed how some particular integrals of products of rational representations of $\U(N)$ can be rewritten as $N^\chi$-weighted sums over surfaces with boundary, whose topology can then be controlled using geometric arguments on words in surface groups. On the other hand, Dahlqvist recast this strategy in the language of mixed Schur--Weyl duality, traceless mixed tensors, and walled Brauer algebras, obtaining a surface-sum formalism similar to Magee's, but with a slightly different control.

Our goal is to show that the arguments of \cite{Mag2,Dah26} can be naturally extended to \emph{any product} of irreducible representations of $\U(N)$ and \emph{any words}, thereby yielding Theorem~\ref{prop:character_weingarten} and Proposition~\ref{prop:character_surface}. Then, we will apply these techniques to the particular setting of the present paper, which is the study of topological coefficients~\eqref{eq:topological-coefficient}. It will lead us to Theorem~\ref{thm:fixed-alpha-punctured-sum}, and finally to the proof of Theorem~\ref{thm:surface-sum-Wilsonloops}.

\subsection{Irreducible representations of $\mathrm{U}(N)$}

The irreducible representations of $\U(N)$ are in bijection with \emph{highest weights} (or \emph{signatures})
\[
\lambda=(\lambda_1,\ldots,\lambda_N)\in\Z^N,\quad \text{such that}\quad \lambda_1\geq\ldots\lambda_N.
\]
We will therefore identify canonically $\widehat{\U(N)}$ with the set of highest weights. Given any highest weight $\lambda\in\widehat{\U}(N)$, we will denote by $(\rho_\lambda,V_\lambda)$ the corresponding irreducible representation $\rho_\lambda\in\Hom(\U(N),\End(V_\lambda))$, which has finite dimension $d_\lambda=\dim V_\lambda$. We will also denote by $\chi_\lambda:G\to\C$ the corresponding \emph{character}:
\begin{equation}
\chi_\lambda(U)=\Tr_{V_\lambda}(\rho_\lambda(U)),\qquad \forall U\in\U(N).
\end{equation}

Let $n\geq 0$ be an integer. A \emph{partition of size $n$} is a finite nonincreasing family of positive integers $\alpha=(\alpha_1,\ldots,\alpha_r)$ such that $\vert\alpha\vert:=\sum_i \alpha_i=n$ (by convention, the empty partition $\varnothing$ is the only partition of size 0). We call $\vert\alpha\vert$ its \emph{size} and $\ell(\alpha):=r$ its \emph{length}. As much as highest weights label irreducible representations of $\U(N)$, the partitions of size $n$ label irreducible representations of the symmetric group $S_n$. We shall denote by $\chi^\alpha$ the irreducible characters of $S_n$, with an exponent instead of a subscript to distinguish with the characters of $\U(N)$. Given two integer partitions $\lambda^+,\lambda^-$, with respective sizes $n^+$ and $n^-$ (not necessarily equal), one can always construct a highest weight
\begin{equation}
[\lambda^+,\lambda^-]_N=(\lambda_1^+,\ldots,\lambda_{\ell(\lambda^+)}^+,0,\ldots,0,-\lambda_{\ell(\lambda^-)}^-,\ldots,-\lambda_1^-),
\end{equation}
for any $N\geq \ell(\lambda^+)+\ell(\lambda^-)$. All such highest weights are particular cases of irreducible representations of $\U(N)$, but also correspond to \emph{rational representations} (or \emph{stable representations}) of $\GL(N,\C)$ \cite{Lit1,Lit2,Sta84}. The interplay between rational representations and general irreducible representations of $\U(N)$ has been investigated in the large-$N$ regime by the author \cite{Lem22,Lem25} together with Dahlqvist \cite{DL23} and Ma\"ida \cite{LemMai25,LM2}, but also by Magee \cite{Mag,Mag2,MdlS26} and Dahlqvist \cite{Dah26} alone. We will not elaborate on the large-$N$ implication in the present paper, but mention that every irreducible representation of $\U(N)$ can be seen as a rational representation, by splitting the highest weight $\lambda\in\widehat{\U(N)}$ into a positive part $\lambda^+$ and a negative part $\lambda^-$ and discarding the possible zero coefficients. By construction, in this case, we always have $\ell(\lambda^+)+\ell(\lambda^-)\leq N$. We will see in the next subsection that the study of rational representations of $\U(N)$ is actually a manifestation of Schur--Weyl duality.

\subsection{Mixed Schur--Weyl duality and walled Brauer diagrams}\label{sec:mixed-schur-weyl}

Let $V=\C^N$ be the standard/fundamental representation of $\U(N)$ and $V^*=(\C^N)^*$ its dual. For any $n,m\geq 1$, we define the space of \emph{mixed tensors} $T_{n,m}=V^{\otimes n}\otimes {(V^*)}^{\otimes m}$. There are two actions, one of the unitary group $\U(N)$ and one of the product $S_{n,m}=S_n\times S_m$ of symmetric groups, which are given by
\[
U\cdot(v_1\otimes \ldots\otimes v_n\otimes v_1^*\otimes\ldots\otimes v_m^*)=(Uv_1\otimes\ldots\otimes Uv_n\otimes v_1^*(U^{-1})\otimes\ldots\otimes v_m^*(U^{-1})),
\]
\[
(\alpha,\beta)\cdot(v_1\otimes \ldots\otimes v_n\otimes v_1^*\otimes\ldots\otimes v_m^*)=(v_{\alpha^{-1}(1)}\otimes\ldots\otimes v_{\alpha^{-1}(n)}\otimes v_{\beta^{-1}(1)}\otimes\ldots\otimes v_{\beta^{-1}(m)}).
\]
We will denote respectively by $\rho_{n,m}:\U(N)\to\End(T_{n,m})$ and $\rho_N:S_{n,m}\to\End(T_{n,m})$ the representations of $\U(N)$ and $S_{n,m}$ on $T_{n,m}$ corresponding to these actions.

A convenient way to describe the decomposition of mixed tensors is through the \emph{walled Brauer algebra}, which is a subalgebra of the standard Brauer algebra. We briefly recall them, keeping things short because we only use them here as finite diagrammatic bases for equivariant endomorphisms.

\begin{definition}
Let $r\ge 1$. A \emph{Brauer diagram on $r$ strands} is a perfect matching of the $2r$ vertices $\{1,\dots,r\}\times\{+,-\},$ which we picture as $r$ \emph{top} vertices $(1,+),\dots,(r,+)$ and $r$ \emph{bottom} vertices $(1,-),\dots,(r,-)$, joined by $r$ pairwise disjoint edges. Equivalently, a Brauer diagram is an involution without fixed point on $\{1,\dots,r\}\times\{+,-\}$.

We denote by $\mathcal B_r$ the finite set of Brauer diagrams on $r$ strands. An edge joining a top vertex to a bottom vertex is called \emph{vertical}, while an edge joining two top vertices or two bottom vertices is called \emph{horizontal}.
\end{definition}

\begin{definition}
Fix a parameter $\delta\in\C$. The \emph{Brauer algebra} $\Br_r(\delta)$ is the $\C$-vector space with basis $\mathcal B_r$, endowed with the multiplication defined as follows.

For $D_1,D_2\in\mathcal B_r$, place $D_1$ above $D_2$ and identify the bottom row of $D_1$ with the top row of $D_2$. This produces:
\begin{itemize}
\item a new Brauer diagram $D_1\circ D_2\in\mathcal B_r$, obtained by reading the connections between the top row of $D_1$ and the bottom row of $D_2$;
\item a certain number $\ell(D_1,D_2)\ge 0$ of closed loops created in the middle row.
\end{itemize}
One then sets
\[
D_1D_2:=\delta^{\ell(D_1,D_2)}(D_1\circ D_2),
\]
and extends this multiplication bilinearly to all of $\Br_r(\delta)$.
\end{definition}

\begin{figure}[h!]
    \centering
    \includegraphics[width=0.5\linewidth]{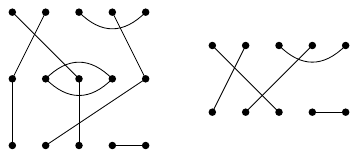}
    \caption{\small A concatenation of two Brauer diagrams (on the left), and the resulting Brauer diagram after simplification (on the right). Since one loop is erased, the concatenation produces a factor $\delta$.}
    \label{fig:brauer-1}
\end{figure}

\begin{remark}
The symmetric group $S_r$ embeds into $\Br_r(\delta)$ as the subset of Brauer diagrams having only vertical strands: a permutation $\sigma\in\mathcal{S}_r$ is identified with the diagram joining $(i,+)$ to $(\sigma(i),-)$.
\end{remark}

\begin{definition}
Let $n,m\ge 0$, not both zero. A \emph{walled Brauer diagram of type $(n,m)$} is a perfect matching of the $2(n+m)$ vertices
\[
\{-m,\dots,-1,1,\dots,n\}\times\{-1,1\},
\]
subject to the rule that vertical strands preserve the side of the wall, whereas horizontal strands cross the wall.

Equivalently, it is an involution $\pi$ without fixed point on
\[
\{-m,\dots,-1,1,\dots,n\}\times\{-1,1\}
\]
such that, whenever $\pi(v,\eta)=(v',\eta')$, one has
\[
\operatorname{sign}(v')\eta'=-\operatorname{sign}(v)\eta.
\]
We denote by $\mathcal \B_{n,m}$ the set of such diagrams.

\begin{figure}[h!]
    \centering
    \includegraphics[width=0.3\linewidth]{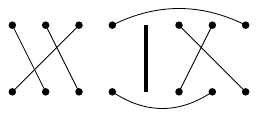}
    \caption{\small A walled Brauer diagram of type $(4,3)$.}
    \label{fig:brauer-2}
\end{figure}

As before, strands joining one top vertex to one bottom vertex are called \emph{vertical}, and strands joining two top vertices or two bottom vertices are called \emph{horizontal}.
\end{definition}

\begin{remark}
The condition
\[
\operatorname{sign}(v')\eta'=-\operatorname{sign}(v)\eta
\]
means exactly that vertical strands stay on the same side of the wall, whereas horizontal strands necessarily connect one positive label and one negative label. This is the mixed-tensor diagrammatics appropriate to $V^{\otimes n}\otimes (V^*)^{\otimes m}$. 
\end{remark}

\begin{definition}
Fix $\delta\in\C$. The \emph{walled Brauer algebra of type $(n,m)$} is the $\C$-vector space
\[
\Br_{n,m}(\delta):=\Span_\C\big(\mathcal \B_{n,m}\big),
\]
with multiplication defined by the same concatenation rule as for the ordinary Brauer algebra.

More precisely, if $D_1,D_2\in\mathcal \B_{n,m}$, stacking $D_1$ above $D_2$ and identifying the intermediate row produces a new walled Brauer diagram $D_1\circ D_2$ together with a number $\ell(D_1,D_2)$ of closed loops in
the middle row, and one sets
\[
D_1D_2:=\delta^{\ell(D_1,D_2)}(D_1\circ D_2).
\]
This extends bilinearly to $\Br_{n,m}(\delta)$.
\end{definition}

By default, we shall work with $\delta:=N$, and we will denote by $\Br_{n,m}(N)$ the corresponding Brauer algebra.

\begin{remark}
The subgroup $S_{n,m}=S_n\times S_m$ identifies with the subset of walled Brauer diagrams with no horizontal strands: the factor $S_n$ permutes the positive labels $\{1,\dots,n\}$, and the factor $S_m$ permutes the negative labels
$\{-m,\dots,-1\}$. In particular, $\Br_{n,m}(N)$ acts on the mixed tensor space $T_{n,m}$ and provides the standard diagrammatic model for the commutant of the $\U(N)$-action.
\end{remark}

There are two important operations that are not really endomorphisms of $T_{n,m}$. For any $1\leq i\leq n$ and $1\leq j\leq m$, the \emph{contraction} $c_{i,j}:T_{n,m}\to T_{n-1,m-1}$ is the pairing of the $i$th $V$-slot with the $j$th $V^*$-slot, and corresponds to an edge between the top vertices $(i,+)$ and $(j,+)$, and the \emph{coevaluation} $\iota_{i,j}:T_{n-1,m-1}\to T_{n,m}$ inserts the canonical tensor $\sum e_a\otimes e_a^*$ into the $i$th $V$-slot and the $j$th $V^*$-slot, and corresponds to an edge between the bottom vertices $(i,-)$ and $(j,-)$. Combining these yields a walled Brauer diagram $e_{i,j}\in\B_{n,m}$ that we will call \emph{Weyl contraction}. It is known (see e.g. \cite{Hal96}) that $\Br_{n,m}(N)$ is generated by transpositions and Weyl contractions. We shall also call \emph{covariant permutations} the permutations $(\sigma,\Id)\in S_{n,m}\subset\B_{n,m}$ and \emph{contravariant permutations} the permutations $(\Id,\tau)\in S_{n,m}\subset\B_{n,m}$. From all Weyl contractions, we can build an element $A_{n,m}\in\Br_{n,m}(N)$ of the algebra, defined by
\begin{equation}
A_{n,m}=\sum_{i=1}^n\sum_{j=1}^me_{i,j}.
\end{equation}
It is the image of the \emph{total contraction-coevaluation operator} $A=\sum_{i=1}^n\sum_{j=1}^m\iota_{i,j}c_{i,j}$ by the natural representation of $\Br_{n,m}(N)$ on $T_{n,m}$. This operator is studied by Goncharov \cite{Gon25}, who proved that $\Spec(A_{n,m})\subset\R^+$ and that $A_{n,m}$ is diagonalizable in $T_{n,m}$.

The \emph{mixed Schur--Weyl duality}, discovered by Koike \cite{Koi89}, states that the subspace
\[
\TT_{n,m}=\bigcap_{\substack{1\leq i\leq n\\ 1\leq j\leq m}}\ker(c_{i,j}),
\]
of \emph{traceless tensors} decomposes in terms of rational representations of $\U(N)$ and mixed representations of $S_{n,m}$:
\begin{equation}\label{eq:Koike}
\TT_{n,m}=\bigoplus_{\substack{\lambda\vdash n,\mu\vdash m\\
\ell(\lambda)+\ell(\mu)\leq N}}V^{[\lambda,\mu]}\otimes V_{[\lambda,\mu]_N}.
\end{equation}
We will consider three types of projectors:
\begin{itemize}
\item The $\U(N)\times S_{n,m}$-equivariant projector $P_N^{n,m}$ onto the space $\TT_{n,m}$ of traceless mixed tensors.
\item The isotypic projector $P_N^{[\lambda,\mu]}$ onto the summand associated to $(\lambda,\mu)$ in the decomposition~\eqref{eq:Koike}.
\item The symmetric group central idempotent $\Pi_{\lambda,\mu}\in\End(T_{n,m})$ defined by
\[
\Pi_{\lambda,\mu}=\frac{\chi^\lambda(1)\chi^\mu(1)}{n!m!}\sum_{(\sigma,\tau)\in S_{n,m}}\chi^\lambda(\sigma^{-1})\chi^\mu(\tau^{-1})\rho_N(\sigma,\tau),
\]
which is the projector on $T_{n,m}$ onto the $(\lambda,\mu)$-isotypic part inside the symmetric group side.
\end{itemize}
Because of the $S_{n,m}$-equivariance of $P_N^{n,m}$, we have in particular
\begin{equation}
P_N^{[\lambda,\mu]}=P_N^{n,m}\Pi_{\lambda,\mu}=\Pi_{\lambda,\mu}P_N^{n,m}.
\end{equation}
It follows that the character of the rational representation of $\U(N)$ associated to $(\lambda,\mu)$ is given by
\begin{equation}\label{eq:Koike-character}
\chi_{[\lambda,\mu]_N}(U)=\Tr_{T_{n,m}}\left(P_N^{[\lambda,\mu]}\rho_{n,m}(U)\right)=\Tr_{T_{n,m}}\left(P_N^{n,m}\Pi_{\lambda,\mu}\rho_{n,m}(U)\right),\quad \forall U\in\U(N).
\end{equation}

The last result we need before going further is the following, which is due to Goncharov \cite{Gon25}.

\begin{proposition}[\cite{Gon25}]\label{prop:goncharov}
For any $\lambda\vdash n,\mu\vdash m$, write $W_{\lambda,\mu}=\Pi_{\lambda,\mu}T_{n,m}$, and let
\[
I_{\lambda,\mu}=\Spec(A_{n,m}\vert_{W_{\lambda,\mu}})\setminus\{0\}.
\]
Then
\begin{equation}
P_N^{[\lambda,\mu]}=\Pi_{\lambda,\mu}\prod_{a\in I_{\lambda,\mu}}\left(1-\frac1a A_{n,m}\right).
\end{equation}
\end{proposition}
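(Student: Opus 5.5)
The plan is to identify $A_{n,m}$ as a positive semidefinite operator whose kernel is exactly the traceless subspace $\TT_{n,m}$. The product $\prod_{a}(1-A_{n,m}/a)$ is then a Lagrange-type interpolation polynomial that kills every nonzero eigenspace and is the identity on the kernel. Composing with $\Pi_{\lambda,\mu}$ restricts everything to $W_{\lambda,\mu}$.

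\textbf{Step 1 (adjointness).} Equip $T_{n,m}$ with the Hermitian inner product for which the tensor products of an orthonormal basis $(e_a)$ and its dual basis $(e_a^*)$ are orthonormal. A direct check on basis tensors shows that the coevaluation $\iota_{i,j}:T_{n-1,m-1}\to T_{n,m}$ is the adjoint of the contraction $c_{i,j}$. Hence
\[
A=\sum_{i,j}c_{i,j}^*c_{i,j}
\]
is self-adjoint and positive semidefinite. This recovers the diagonalizability and $\Spec(A_{n,m})\subset\R^+$ without appealing to \cite{Gon25} for these facts. Moreover, $\langle Av,v\rangle=\sum_{i,j}\|c_{i,j}v\|^2$, so
\[
\ker A=\bigcap_{i,j}\ker c_{i,j}=\TT_{n,m}.
\]
In particular, $\TT_{n,m}^\perp$ is the sum of the images of the $\iota_{i,j}$. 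This is a $\U(N)\times S_{n,m}$-invariant complement, and it identifies $P_N^{n,m}$ with the orthogonal projector onto $\ker A$; I would make this identification explicit.

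\textbf{Step 2 (commutation).} Conjugating $e_{i,j}$ by $(\sigma,\tau)\in S_{n,m}$ gives $e_{\sigma(i),\tau(j)}$, so $A_{n,m}$ commutes with $\rho_N(S_{n,m})$ and therefore with the central idempotent $\Pi_{\lambda,\mu}$. The operators $\rho_N(\sigma,\tau)$ are unitary and $\overline{\chi^\lambda(\sigma^{-1})}=\chi^\lambda(\sigma)$, so $\Pi_{\lambda,\mu}$ is self-adjoint, i.e.\ it is the orthogonal projector onto $W_{\lambda,\mu}$. Consequently $A$ preserves $W_{\lambda,\mu}$, and $A|_{W_{\lambda,\mu}}$ is self-adjoint with eigenvalues in $\{0\}\cup I_{\lambda,\mu}$.

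\textbf{Step 3 (interpolation).} Set $q(x)=\prod_{a\in I_{\lambda,\mu}}(1-x/a)$. Then $q(0)=1$ and $q(a)=0$ for every $a\in I_{\lambda,\mu}$. By the spectral theorem on $W_{\lambda,\mu}$, $q(A)|_{W_{\lambda,\mu}}$ is the orthogonal projector onto $\ker A\cap W_{\lambda,\mu}$. Since $q(A)$ commutes with $\Pi_{\lambda,\mu}$, the operator $\Pi_{\lambda,\mu}q(A)$ is the orthogonal projector of $T_{n,m}$ onto
\[
\ker A\cap W_{\lambda,\mu}=\TT_{n,m}\cap \Pi_{\lambda,\mu}T_{n,m}.
\]
On the other hand, $P_N^{n,m}$ and $\Pi_{\lambda,\mu}$ are commuting orthogonal projectors, so $P_N^{n,m}\Pi_{\lambda,\mu}$ is the orthogonal projector onto the intersection of their images. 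By the identity $P_N^{[\lambda,\mu]}=P_N^{n,m}\Pi_{\lambda,\mu}$ recorded above, the two operators coincide.

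\textbf{Main obstacle.} The only delicate point is in Step 1: checking that $\iota_{i,j}=c_{i,j}^*$ with the correct normalization, so that no stray factor appears. This matters because the statement is sensitive to rescaling $A$. One also needs that the projector $P_N^{n,m}$ used in the paper is the orthogonal one rather than some other equivariant projector. Equivariance alone does not force orthogonality in general. I would settle this by noting that the orthogonal projector is itself $\U(N)\times S_{n,m}$-equivariant, and that the paper's $P_N^{n,m}$ is meant as this canonical projection (equivalently, the projection whose kernel is the span of the images of the $\iota_{i,j}$).
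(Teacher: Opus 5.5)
The paper does not prove this proposition. It quotes it from \cite{Gon25}, together with the positivity and diagonalizability of $A_{n,m}$. Your argument is a correct, self-contained replacement for that citation, and it works for every $N$ with no stable-range assumption.

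The factorization $A_{n,m}=\sum_{i,j}c_{i,j}^*c_{i,j}$ is the key step. Your check that $\iota_{i,j}=c_{i,j}^*$ on basis tensors is exact, with no normalizing constant. This factorization gives positivity, diagonalizability and $\ker A_{n,m}=\TT_{n,m}$ at once. The rest is Lagrange interpolation on the $A_{n,m}$-stable subspace $W_{\lambda,\mu}$. The degenerate cases $I_{\lambda,\mu}=\varnothing$ and $\ell(\lambda)+\ell(\mu)>N$ (the latter with both sides zero) need no separate treatment.

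Your route also shows two further facts:
\begin{enumerate}
\item The product may be taken over any finite set of nonzero scalars containing $I_{\lambda,\mu}$.
\item No point of $I_{\lambda,\mu}$ can be omitted. Indeed, $P_N^{[\lambda,\mu]}$ vanishes on each nonzero eigenspace of $A_{n,m}|_{W_{\lambda,\mu}}$, whereas the product over $I_{\lambda,\mu}\setminus\{a\}$ acts on the $a$-eigenspace by a nonzero scalar.
\end{enumerate}
What your route does not give is the eigenvalues themselves. That is the kind of information one would need for quantitative control of the coefficients $c_N^{[\lambda,\mu]}(\tau)$ in Corollary~\ref{cor:PN-nm}. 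If you want them, the magic formulas give $\sum_a\bigl(d\rho_{n,m}(X_a)\bigr)^2=-N(n+m)-2J+2A_{n,m}$. Here $J$ is the sum of all transpositions of $S_n$ and of $S_m$, acting through $\rho_N$. Hence $A_{n,m}$ is a scalar on each $\U(N)\times S_{n,m}$-isotypic piece of $W_{\lambda,\mu}$, computable from a Casimir eigenvalue and content sums.

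Your worry about whether $P_N^{n,m}$ is orthogonal can be settled rather than assumed.
\begin{itemize}
\item Every weight of $T_{n-1,m-1}\cong\operatorname{im}\iota_{i,j}$ has positive coordinates summing to at most $n-1$.
\item By \eqref{eq:Koike}, every constituent of $\TT_{n,m}$ has highest weight $[\lambda,\mu]_N$ with $|\lambda|=n$.
\end{itemize}
So $\TT_{n,m}$ and $\TT_{n,m}^\perp=\sum_{i,j}\operatorname{im}\iota_{i,j}$ share no irreducible constituent. Any $\U(N)$-invariant complement of $\TT_{n,m}$ is isomorphic to $\TT_{n,m}^\perp$, so it lies in the sum of the isotypic components of those types, which is $\TT_{n,m}^\perp$. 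By dimension it equals $\TT_{n,m}^\perp$. The equivariant projector onto $\TT_{n,m}$ is therefore unique, and hence it is the orthogonal one your proof uses.
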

In fact, Goncharov proves a stronger statement, where $I_{\lambda,\mu}$ can be refined into a smaller and more precise subset of the spectrum of $A_{n,m}$, but we will not need it here. What matters to us is that it leads to an expansion of the projector $P_N^{[\lambda,\mu]}$.

\begin{corollary}\label{cor:PN-nm}
For any $\lambda\vdash n,\mu\vdash m$, there is a family $(c_N^{[\lambda,\mu]}(\tau))_{\tau\in\B_{n,m}}$ such that
\begin{equation}\label{eq:PN-nm}
P_N^{[\lambda,\mu]}=\sum_{\tau\in\B_{n,m}} c_N^{[\lambda,\mu]}(\tau)\rho_N(\tau).
\end{equation}
\end{corollary}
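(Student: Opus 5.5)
The plan is to expand the right-hand side of Proposition~\ref{prop:goncharov} into a linear combination of walled Brauer diagrams. The key fact is that both $\Pi_{\lambda,\mu}$ and $A_{n,m}$, as operators on $T_{n,m}$, are images under the natural representation $\rho_N$ of $\Br_{n,m}(N)$ of explicit algebra elements. We first lift the formula to the algebra $\Br_{n,m}(N)$, expand it there, and then push it back down by $\rho_N$.

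\textbf{Step 1: write the two factors as algebra elements.} By definition,
\[
\Pi_{\lambda,\mu}=\sum_{(\sigma,\tau)\in S_{n,m}}\frac{\chi^\lambda(1)\chi^\mu(1)}{n!m!}\chi^\lambda(\sigma^{-1})\chi^\mu(\tau^{-1})\,\rho_N(\sigma,\tau).
\]
Here each $(\sigma,\tau)$ is identified with a walled Brauer diagram without horizontal strands. Similarly, $A_{n,m}=\sum_{i,j}\rho_N(e_{i,j})$, where the $e_{i,j}\in\B_{n,m}$ are the Weyl contractions.

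\textbf{Step 2: expand the product.} The index set $I_{\lambda,\mu}$ is finite, since $T_{n,m}$ is finite-dimensional. The product
\[
\prod_{a\in I_{\lambda,\mu}}\Bigl(1-\tfrac1a A_{n,m}\Bigr)
\]
is therefore a polynomial in $A_{n,m}$ with complex coefficients. The coefficients are well defined because $0\notin I_{\lambda,\mu}$, and the factors commute, so no ordering issue arises. Expanding, each power $A_{n,m}^k$ is a sum of products $\rho_N(e_{i_1,j_1})\cdots\rho_N(e_{i_k,j_k})$. Multiplying further by $\Pi_{\lambda,\mu}$ on the left adds one factor $\rho_N(\sigma,\tau)$ to each term.

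\textbf{Step 3: reduce products of diagrams to single diagrams.} Since $\rho_N$ is an algebra homomorphism from $\Br_{n,m}(N)$, each product of diagram images equals
\[
\rho_N(D_1D_2\cdots)=N^{\ell}\rho_N(D_1\circ D_2\circ\cdots),
\]
which is a power of $N$ times the image of a single diagram in $\B_{n,m}$. The expansion therefore becomes a finite sum of terms $c\,\rho_N(\tau)$ with $\tau\in\B_{n,m}$.

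\textbf{Step 4: collect terms.} Grouping the terms by diagram $\tau$ defines the coefficients $c_N^{[\lambda,\mu]}(\tau)$. This yields~\eqref{eq:PN-nm}.

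\textbf{Main obstacle.} The only delicate point is whether the representation $\rho_N$ of $\Br_{n,m}(N)$ on $T_{n,m}$ is really a homomorphism with loop parameter $\delta=N$. This holds because closing a contraction–coevaluation loop yields $\sum_a e_a^*(e_a)=N$. We would verify this on generators: transpositions and Weyl contractions, which generate the algebra by \cite{Hal96}.

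We do not claim that the coefficients are unique. When $N<n+m$ the representation $\rho_N$ is not faithful, and the statement only asserts existence. If desired, the coefficients can be made canonical by taking the polynomial expansion in the algebra itself, which the construction above already does.
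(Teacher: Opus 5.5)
Your proposal is correct and follows the same route as the paper. Goncharov's formula exhibits $P_N^{[\lambda,\mu]}$ as the image under $\rho_N$ of a polynomial in elements of $\Br_{n,m}(N)$, hence as a linear combination of the $\rho_N(\tau)$, $\tau\in\B_{n,m}$, and the coefficients are unique only in the stable range $N\geq n+m$. You spell out the expansion and the diagram-composition step (via $\rho_N(D_1)\rho_N(D_2)=N^{\ell(D_1,D_2)}\rho_N(D_1\circ D_2)$), which the paper compresses into the remark that the image of $\Br_{n,m}(N)$ is spanned by the images of $\B_{n,m}$.
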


\begin{proof}
By Proposition~\ref{prop:goncharov}, $P_N^{[\lambda,\mu]}$ belongs to the image of $\Br_{n,m}(N)$ in $T_{n,m}$, which is spanned by the set $\B_{n,m}$ of Brauer diagrams. In the stable range $N\geq n+m$, $\B_{n,m}$ is a basis therefore there is a unique decomposition of the projector. Otherwise, $\B_{n,m}$ gives only a spanning family, so that there exist several decompositions, and we just need to choose one.
\end{proof}

A similar statement, which holds directly for $P_N^{n,m}$ can be found in \cite[Lemma 4.1]{Dah26}, but only in the stable range $N\geq n+m$, which is enough for a large-$N$ analysis but not a finite-$N$ expansion. The advantage of Dahlqvist's version is to provide a precise bound of the coefficients of the decomposition, which we do not have here.

\subsection{Weingarten expansion of the integral}

We will now perform a refined Weingarten expansion of all integrals of products of rational characters of $\U(N)$. The basic point, already emphasized by Magee \cite{Mag2}, is that a coarse expansion performed using the classical Weingarten calculus may hide large cancellations and may even exhibit apparently ``catastrophic'' contributions whose size is much too large to be compatible with the expected large-$N$ behavior. In Magee's approach, this is resolved by refining the Weingarten calculus through a projector onto the relevant traceless mixed-tensor component. Although we remain in the finite-$N$ regime here, this remains the appropriate strategy to follow. The first combinatorial object, popularized by Collins~\cite{Col03} and later by Collins--Sniady~\cite{ColSni06}, is the Weingarten function, whose definition is recalled below.

\begin{definition}
For $n\geq 0$, let $G_N^{(n)}$ be the matrix indexed by $S_n\times S_n$ defined by
\[
G_N^{(n)}(\sigma,\tau):=N^{\#(\sigma^{-1}\tau)},\qquad \sigma,\tau\in S_n,
\]
where $\#(\pi)$ denotes the number of cycles of the permutation $\pi$. The \emph{unitary Weingarten function} at rank $n$ is the central function
\[
\mathrm{Wg}_N^{(n)}:S_n\to \mathbb C
\]
whose matrix
\[
\bigl(\mathrm{Wg}_N^{(n)}(\sigma^{-1}\tau)\bigr)_{\sigma,\tau\in S_n}
\]
is the Moore--Penrose pseudoinverse of $G_N^{(n)}$. When there is no ambiguity on $n$, we simply write $\mathrm{Wg}_N(\pi)$ for $\mathrm{Wg}_N^{(n)}(\pi)$, $\pi\in S_n$.
In the stable range $n\leq N$, the matrix $G_N^{(n)}$ is invertible, so $\mathrm{Wg}_N^{(n)}$ is characterized by
\[
\sum_{\tau\in S_n} N^{\#(\sigma^{-1}\tau)}\,\mathrm{Wg}_N^{(n)}(\tau^{-1}\rho)=\delta_{\sigma,\rho}.
\]
\end{definition}

We will now introduce the main objects involved in the procedure:
\begin{itemize}
\item A finite family of words $w_1,\ldots,w_k\in\Fbb_m$ in the free group on generators $x_1,\ldots,x_m$. We always assume them in their \emph{reduced form}, so that their combinatorial length $\vert w_i\vert$ is minimal.
\item A finite family $(\lambda_i^+,\lambda_i^-)_{1\leq i\leq k}$ of pairs of partitions, denoting by $n_i^\pm=\vert \lambda_i^\pm\vert$ their size.
\end{itemize}
Given this data, we define $s=\sum_{i=1}^k(n_i^++n_i^-)\vert w_i\vert$, and three maps on $[s]=\{1,\ldots,s\}$:
\begin{enumerate}
\item An \emph{orientation map} $\varepsilon:[s]\to\{+,-\}$
\item A \emph{letter map} $a:[s]\to\{1,\ldots,m\}$,
\item A \emph{sign map} $\delta:[s]\to\{-1,1\}$.
\end{enumerate}
For each $i$, let $T_i:=T_{n_i^+,n_i^-}$ be the space of mixed tensors, $\rho_i=\rho_{n_i^+,n_i^-}:\U(N)\to\End(T_{n_i^+,n_i^-})$ be the natural mixed tensor representation, and
\[
P_i:=P_N^{[\lambda_i^+,\lambda_i^-]}=P_N^{n_i^+,n_i^-}\Pi_{\lambda_i^+,\lambda_i^-}
\]
be the isotypic projector involved in the mixed Schur--Weyl decomposition~\eqref{eq:Koike}. We have
\begin{equation}\label{eq:Koike_character}
\chi_{[\lambda_i^+,\lambda_i^-]_N}(g)=\Tr_{T_i}(P_i\rho_i(g)).
\end{equation}
The maps $\varepsilon,a,\delta$ record the elementary matrix factors arising from the mixed-tensor actions: for each slot they determine the letter and whether the corresponding factor is an entry of $U_a$ or of $\overline{U_a}$.  The precise row/column order depends additionally on whether the slot is covariant or contravariant and will be fixed explicitly below.  At the operator level we simply have
\begin{equation}\label{eq:character_tensor}
\prod_{i=1}^k \chi_{[\lambda_i^+,\lambda_i^-]_N}(w_i(U))
=\Tr_{\bigotimes_i T_i}\!\left(\bigotimes_{i=1}^k P_i\rho_i(w_i(U))\right).
\end{equation}

For each $i\in\{1,\dots,k\}$, write each word $w_i$ in an explicit form
\[
w_i=x_{a_{i,1}}^{\varepsilon_{i,1}}\cdots x_{a_{i,\ell_i}}^{\varepsilon_{i,\ell_i}},\qquad \ell_i:=|w_i|, \qquad r_i:=n_i^+ + n_i^-,
\]
and let
\[
T_i:=T_{n_i^+,n_i^-}=V^{\otimes n_i^+}\otimes (V^*)^{\otimes n_i^-}, \qquad \chi_{[\lambda_i^+,\lambda_i^-]_N}(g)=\Tr_{T_i}\bigl(P_i\rho_i(g)\bigr).
\]
Recall that by Corollary~\ref{cor:PN-nm} we have the projector decomposition
\[
P_i=\sum_{\tau\in \B_{n_i^+,n_i^-}} c_{i,N}(\tau)\rho_N(\tau),
\]
where $c_{i,N}(\tau):=c_N^{[\lambda_i^+,\lambda_i^-]}(\tau)$. Define the local Brauer index set $\mathcal I :=\prod_{i=1}^k \B_{n_i^+,n_i^-}^{\ell_i}.$ For $\tau=(\tau_{i,r})_{1\le i\le k,\ 1\le r\le \ell_i}\in \mathcal I$, set
\[
c_N(\tau):=\prod_{i=1}^k\prod_{r=1}^{\ell_i} c_{i,N}(\tau_{i,r}).
\]
Although it does not appear in the notation, note that $c_N(\tau)$ depends on all $\lambda_i^\pm$. Let $\mathcal S:=\{(i,r,u): 1\le i\le k,\ 1\le r\le \ell_i,\ 1\le u\le r_i\}.$ For $x=(i,r,u)\in\mathcal S$, define its letter by $a(x):=a_{i,r},$ and its matrix-entry sign by
\[
\delta(x):=
\begin{cases}
+1,&\text{if }(u\le n_i^+\text{ and }\varepsilon_{i,r}=+1)\text{ or }(u>n_i^+\text{ and }\varepsilon_{i,r}=-1),\\
-1,&\text{if }(u\le n_i^+\text{ and }\varepsilon_{i,r}=-1)\text{ or }(u>n_i^+\text{ and }\varepsilon_{i,r}=+1).
\end{cases}
\]
For each letter $a\in\{1,\dots,m\}$, set
\[
\mathcal{S}_a^\pm:=\{x\in\mathcal S:\ a(x)=a,\ \delta(x)=\pm1\}.
\]

\begin{theorem}\label{prop:character_weingarten}
If for some $a$ one has $|\mathcal{S}_a^+|\neq |\mathcal{S}_a^-|$, then
\[
\int_{\U(N)^m}\prod_{i=1}^k \chi_{[\lambda_i^+,\lambda_i^-]_N}(w_i(U))dU = 0.
\]
Otherwise, define $|\mathcal{S}_a^+|=|\mathcal{S}_a^-|=:p_a$ for every $a$. Choose orderings $\mathcal{S}_a^+=(x_{a,1}^+,\dots,x_{a,p_a}^+)$ and $\mathcal{S}_a^-=(x_{a,1}^-,\dots,x_{a,p_a}^-)$, and define $\mathcal J:=\prod_{a=1}^m (S_{p_a}\times S_{p_a})$. For fixed $\tau\in\mathcal I$, introduce cyclic multi-indices
\[
I_{i,r},J_{i,r}\in [N]^{r_i},\qquad 1\le i\le k,\ 1\le r\le \ell_i,
\]
with the convention $I_{i,\ell_i+1}=I_{i,1}$. For $x=(i,r,u)\in\mathcal S$, set
\[
j_x := (J_{i,r})_u,\qquad i_x := (I_{i,r+1})_u.
\]
Define the row and column indices of the matrix entry attached to $x$ by
\[
(\mathrm r(x),\mathrm c(x)):=
\begin{cases}
(j_x,i_x),&u\le n_i^+,\ \varepsilon_{i,r}=+1,\\
(i_x,j_x),&u\le n_i^+,\ \varepsilon_{i,r}=-1,\\
(j_x,i_x),&u>n_i^+,\ \varepsilon_{i,r}=+1,\\
(i_x,j_x),&u>n_i^+,\ \varepsilon_{i,r}=-1.
\end{cases}
\]
The distinction between the four cases is that the first and fourth contribute an entry of $U_{a(x)}$, whereas the second and third contribute an entry of $\overline{U_{a(x)}}$. Thus $\delta(x)=+1$ means that the factor is $U_{a(x),\mathrm r(x),\mathrm c(x)}$, while $\delta(x)=-1$ means that it is $\overline{U_{a(x),\mathrm r(x),\mathrm c(x)}}$.

For $\sigma=((\alpha_a,\beta_a))_{a=1}^m\in\mathcal J$, define
\[
\mathcal K_N(\tau,\sigma):=\sum_{\{I_{i,r},J_{i,r}\}}\left(\prod_{i=1}^k\prod_{r=1}^{\ell_i}(\rho_N(\tau_{i,r}))_{I_{i,r},J_{i,r}}\right)
\prod_{a=1}^m\prod_{t=1}^{p_a}
\delta^{\mathrm K}_{\mathrm r(x_{a,t}^+),\mathrm r(x_{a,\alpha_a(t)}^-)}
\delta^{\mathrm K}_{\mathrm c(x_{a,t}^+),\mathrm c(x_{a,\beta_a(t)}^-)},
\]
where $\delta^{\mathrm K}$ denotes the Kronecker delta. Then
\begin{equation}\label{eq:refined-character-weingarten}
\int_{\U(N)^m}\prod_{i=1}^k \chi_{[\lambda_i^+,\lambda_i^-]_N}(w_i(U))dU
=\sum_{\tau\in\mathcal I}c_N(\tau)\sum_{\sigma\in\mathcal J}\left(\prod_{a=1}^m {\mathrm{Wg}}_N(\alpha_a^{-1}\beta_a)\right)\mathcal K_N(\tau,\sigma).
\end{equation}
\end{theorem}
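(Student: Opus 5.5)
The plan is to start from the operator identity~\eqref{eq:character_tensor} and make every object explicit in coordinates before integrating. For each $i$, write $w_i(U)=U_{a_{i,1}}^{\varepsilon_{i,1}}\cdots U_{a_{i,\ell_i}}^{\varepsilon_{i,\ell_i}}$. Since $\rho_i$ is a group homomorphism, $\rho_i(w_i(U))=\prod_{r}\rho_i(U_{a_{i,r}}^{\varepsilon_{i,r}})$. Because $P_i$ is an idempotent commuting with $\rho_i(\U(N))$, we have $P_i\rho_i(w_i(U))=\prod_{r=1}^{\ell_i}\bigl(P_i\rho_i(U_{a_{i,r}}^{\varepsilon_{i,r}})\bigr)$, with one copy of $P_i$ inserted before each letter. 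Next, expand each copy of $P_i$ using Corollary~\ref{cor:PN-nm}, choosing the same fixed decomposition for each copy. This produces the sum over $\tau\in\mathcal I$ with weight $c_N(\tau)=\prod_{i,r}c_{i,N}(\tau_{i,r})$, and it explains why $\mathcal I$ carries one Brauer diagram per letter.

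For fixed $\tau$, the trace over $\bigotimes_i T_i$ factorizes over $i$. Each factor is a cyclic product of matrices on $T_i$. We then insert the standard basis multi-indices $I_{i,r},J_{i,r}\in[N]^{r_i}$: the factor $(\rho_N(\tau_{i,r}))_{I_{i,r},J_{i,r}}$ is followed by $(\rho_i(U^{\varepsilon}))_{J_{i,r},I_{i,r+1}}$, with cyclic closure $I_{i,\ell_i+1}=I_{i,1}$.

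The matrix $\rho_i(U^{\varepsilon})$ is a tensor product of $n_i^+$ copies of $U^{\varepsilon}$ and $n_i^-$ copies of the contragredient $(U^{\varepsilon})^{-T}=\overline{U^{\varepsilon}}$. Its $(J,I)$ entry is therefore $\prod_u$ of entries of $U$ or $\overline U$. We read these off slot by slot, using $U^{-1}=U^*$, so that $(U^{-1})_{jk}=\overline{U_{kj}}$. This yields exactly the four cases defining $(\mathrm r(x),\mathrm c(x))$ and $\delta(x)$. Covariant slots with $\varepsilon=+1$ give $U_{j,i}$ and with $\varepsilon=-1$ give $\overline{U_{i,j}}$. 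Contravariant slots swap the roles, giving $\overline{U_{j,i}}$ and $U_{i,j}$ respectively. This bookkeeping, and in particular matching the convention for the dual action $v^*\mapsto v^*\circ U^{-1}$ with the stated row/column assignment, is the step I expect to require the most care. It is only checking, not a genuine difficulty, but sign and transpose errors are easy to make, so I would verify it on a $1\times 1$ example ($N$ general, $T_{0,1}$, $w=x_1$) before writing the general case.

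After these steps the integrand is a finite sum, over $\tau$ and the indices, of scalars times $\prod_a\prod_{x\in\mathcal S_a^+}U_{a,\mathrm r(x),\mathrm c(x)}\prod_{x\in\mathcal S_a^-}\overline{U_{a,\mathrm r(x),\mathrm c(x)}}$. The $U_a$ are independent, so by Fubini the integral factorizes over $a$. Invariance of Haar measure under $U_a\mapsto e^{i\theta}U_a$ kills the integral unless $|\mathcal S_a^+|=|\mathcal S_a^-|$, which gives the vanishing statement. Otherwise we apply the Collins--\'Sniady Weingarten formula to each letter:
\[
\int U_{i_1j_1}\cdots U_{i_pj_p}\overline{U_{i'_1j'_1}\cdots U_{i'_pj'_p}}\,dU=\sum_{\alpha,\beta\in S_p}\prod_t\delta_{i_t i'_{\alpha(t)}}\delta_{j_tj'_{\beta(t)}}\mathrm{Wg}_N(\alpha^{-1}\beta).
\]
This formula is valid for all $N$ with the pseudoinverse definition of $\mathrm{Wg}_N$. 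I would note that outside the stable range one needs the Collins--\'Sniady version with the pseudoinverse (equivalently, restricting the sum to characters with $\ell\le N$), and I would cite it as such. Finally, since all sums are finite, we interchange the sums over $\sigma\in\mathcal J$ and the index sums. Collecting the Kronecker deltas with the entries of $\rho_N(\tau_{i,r})$ gives $\mathcal K_N(\tau,\sigma)$ and hence~\eqref{eq:refined-character-weingarten}.
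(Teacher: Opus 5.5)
Your proposal is correct and follows the paper's proof essentially step by step. You insert a copy of $P_i$ before each letter and expand each copy in walled Brauer diagrams, expand the traces in the standard basis with the slot-by-slot identification of $U$ versus $\overline{U}$ entries, kill unbalanced letters by phase invariance, and apply the unitary Weingarten formula letter by letter; your remark that this formula holds for all $N$ with the pseudoinverse definition of $\mathrm{Wg}_N$ is a useful precision the paper leaves implicit.
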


\begin{proof}
For each $1\leq i\leq k$ and $1\leq r\leq \ell_i$, set $R_{i,r}(U):=\rho_i(U_{a_{i,r}}^{\varepsilon_{i,r}})$. Since $P_i$ is $\U(N)$-equivariant, it commutes with every $R_{i,r}(U)$, and since $P_i$ is a projector, $P_i^2=P_i$. Hence
\[
\Tr_{T_i}\bigl(P_i R_{i,1}(U)\cdots R_{i,\ell_i}(U)\bigr)=\Tr_{T_i}\bigl(P_iR_{i,1}(U)P_iR_{i,2}(U)\cdots P_iR_{i,\ell_i}(U)\bigr).
\]
Expanding each copy of $P_i$ using~\eqref{eq:PN-nm}, we obtain
\[
\prod_{i=1}^k \chi_{[\lambda_i^+,\lambda_i^-]_N}(w_i(U))=\sum_{\tau\in\mathcal I}c_N(\tau)\prod_{i=1}^k\Tr_{T_i}\Bigl(\prod_{r=1}^{\ell_i}\rho_N(\tau_{i,r})R_{i,r}(U)\Bigr).
\]
Fix $\tau$. Expanding each trace in the standard basis of $T_i$ gives
\[
\Tr_{T_i}\Bigl(\prod_{r=1}^{\ell_i}\rho_N(\tau_{i,r})R_{i,r}(U)\Bigr)
=\sum_{\{I_{i,r},J_{i,r}\}}\prod_{r=1}^{\ell_i}(\rho_N(\tau_{i,r}))_{I_{i,r},J_{i,r}}(R_{i,r}(U))_{J_{i,r},I_{i,r+1}},
\]
with $I_{i,\ell_i+1}=I_{i,1}$. For a fixed slot $x=(i,r,u)$, the corresponding elementary factor is one of the following four matrix entries:
\[
\begin{array}{c|c|c}
\text{slot type}&\varepsilon_{i,r}&\text{matrix entry}\\ \hline
V&+1&(U_{a(x)})_{j_x,i_x}\\
V&-1&\overline{(U_{a(x)})_{i_x,j_x}}\\
V^*&+1&\overline{(U_{a(x)})_{j_x,i_x}}\\
V^*&-1&(U_{a(x)})_{i_x,j_x}.
\end{array}
\]
This is exactly the row/column convention in the statement. Grouping by the letter $a$, the fixed-$\tau$ integrand becomes a finite sum whose $U_a$-dependent factor is
\[
\prod_{t=1}^{|\mathcal S_a^+|}(U_a)_{\mathrm r(x_{a,t}^+),\mathrm c(x_{a,t}^+)}
\prod_{t=1}^{|\mathcal S_a^-|}\overline{(U_a)_{\mathrm r(x_{a,t}^-),\mathrm c(x_{a,t}^-)}}.
\]
If $|\mathcal S_a^+|\ne|\mathcal S_a^-|$ for some $a$, Haar invariance under multiplication by scalar matrices makes the $U_a$-integral vanish. Otherwise, the ordinary unitary Weingarten formula yields
\[
\int_{\U(N)}\prod_{t=1}^{p_a}(U_a)_{\mathrm r(x_{a,t}^+),\mathrm c(x_{a,t}^+)}
\prod_{t=1}^{p_a}\overline{(U_a)_{\mathrm r(x_{a,t}^-),\mathrm c(x_{a,t}^-)}}\,dU_a
\]
\[
=\sum_{\alpha_a,\beta_a\in S_{p_a}}\left(\prod_{t=1}^{p_a}
\delta^{\mathrm K}_{\mathrm r(x_{a,t}^+),\mathrm r(x_{a,\alpha_a(t)}^-)}
\delta^{\mathrm K}_{\mathrm c(x_{a,t}^+),\mathrm c(x_{a,\beta_a(t)}^-)}\right)\mathrm{Wg}_N(\alpha_a^{-1}\beta_a).
\]
Multiplying over $a=1,\dots,m$ and interchanging the finite sums gives~\eqref{eq:refined-character-weingarten}.
\end{proof}

One expects Theorem~\ref{prop:character_weingarten} to admit further vanishing criteria beyond the ordinary balance condition
\[
|\mathcal{S}_a^+|=|\mathcal{S}_a^-|\qquad (a=1,\dots,m).
\]
For instance, in a similar setting, Magee \cite{Mag2} proves that in his refined calculus certain classical Weingarten matchings vanish identically because the relevant projector acts on the traceless mixed-tensor space: when a local pairing forces a mixed contraction incompatible with tracelessness, the corresponding contribution is zero. This is summarized in his ``forbidden matching property.'' The same mechanism holds here.

\begin{proposition}\label{prop:elementary-forbidden}
Keep the notation of Theorem~\ref{prop:character_weingarten}. For $\sigma=((\alpha_a,\beta_a))_{a=1}^m\in\mathcal J$,
define
\[
\widetilde K_N(\sigma):=\sum_{\tau\in \mathcal I} c_N(\tau)\mathcal K_N(\tau,\sigma).
\]
Suppose that there exist a letter $a\in\{1,\dots,m\}$ and an index $t\in\{1,\dots,p_a\}$ such that, writing
\[
x_{a,t}^+=(i,r,u)\in\mathcal S_a^+,\qquad x_{a,\alpha_a(t)}^-=x_{a,\beta_a(t)}^-=(i,r,v)\in\mathcal S_a^-,
\]
the positive occurrence $x_{a,t}^+$ and the two negative occurrences paired with it by $\alpha_a$ and $\beta_a$ all come from the same local occurrence $(i,r)$. Then $\widetilde K_N(\sigma)=0.$
\end{proposition}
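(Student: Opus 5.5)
The plan is to undo the expansion of the projector at the local occurrence $(i,r)$ and then use that $P_i$ annihilates everything in the image of a Weyl contraction, which is the tracelessness of $\TT_{n_i^+,n_i^-}$. Concretely, $\widetilde K_N(\sigma)=\sum_{\tau\in\mathcal I}c_N(\tau)\mathcal K_N(\tau,\sigma)$ is multilinear in the local Brauer labels. Since $\mathcal I$ is a product and $c_N(\tau)$ factorizes, I will first perform the sum over each $\tau_{i',r'}$ separately. Corollary~\ref{cor:PN-nm} gives $\sum_{\tau_{i',r'}}c_{i',N}(\tau_{i',r'})\rho_N(\tau_{i',r'})=P_{i'}$, so
\[
\widetilde K_N(\sigma)=\sum_{\{I_{i',r'},J_{i',r'}\}}\Bigl(\prod_{i',r'}(P_{i'})_{I_{i',r'},J_{i',r'}}\Bigr)\prod_{a'=1}^m\prod_{t'=1}^{p_{a'}}\delta^{\mathrm K}_{\mathrm r(x^+_{a',t'}),\mathrm r(x^-_{a',\alpha_{a'}(t')})}\,\delta^{\mathrm K}_{\mathrm c(x^+_{a',t'}),\mathrm c(x^-_{a',\beta_{a'}(t')})}.
\]
Only finite sums are interchanged here, so no convergence issue arises. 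The problem is now a purely tensorial identity with $P_i$ inserted at position $(i,r)$.

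Next I will identify the constraint imposed at the distinguished pairing. The slots $x^+=(i,r,u)$ and $x^-=(i,r,v)$ share the same $\varepsilon_{i,r}$ but carry opposite signs $\delta$. By the definition of $\delta$, exactly one of $u,v$ is covariant ($\le n_i^+$) and the other is contravariant. By the row/column table in Theorem~\ref{prop:character_weingarten}, in both cases $\varepsilon_{i,r}=\pm1$ the two entries use the same rule. So the two Kronecker deltas attached to $t$ read
\[
(J_{i,r})_u=(J_{i,r})_v \quad\text{and}\quad (I_{i,r+1})_u=(I_{i,r+1})_v .
\]
I will then check that the indices $(J_{i,r})_u$ and $(J_{i,r})_v$ appear nowhere else. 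Each index $j_x$ occurs in exactly one matrix entry. Moreover, $\alpha_a$ being a bijection, the row index of $x^+_{a,t}$ and that of $x^-_{a,\alpha_a(t)}$ occur in only this one delta. Hence the sum over the pair $\bigl((J_{i,r})_u,(J_{i,r})_v\bigr)$ with the identification factors out as
\[
\sum_{b}(P_i)_{I_{i,r},\,J\text{ with }J_u=J_v=b},
\]
with the remaining entries of $J_{i,r}$ free. This is exactly the matrix of $P_i\circ\iota_{u,v}$ (up to relabeling of slots), where $\iota_{u,v}$ is the coevaluation pairing the $V$-slot and the $V^*$-slot in question.

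Finally, $P_i=P_N^{n_i^+,n_i^-}\Pi_{\lambda_i^+,\lambda_i^-}$ is an orthogonal projector onto a subspace of $\TT$, hence self-adjoint. Since $c_{u,v}P_i=0$ by the definition of traceless tensors, taking adjoints gives $P_i\iota_{u,v}=(c_{u,v}P_i)^*=0$, using $\iota_{u,v}=c_{u,v}^*$ for the standard inner product. Therefore the factored sum vanishes identically in all other indices, and $\widetilde K_N(\sigma)=0$. Alternatively, one could use the column delta on $I_{i,r+1}$, which produces $c_{u,v}P_i$ at position $(i,r+1)$ (cyclically) and vanishes directly.

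The step I expect to be delicate is the bookkeeping in the second paragraph. One must verify carefully that the row/column convention makes the two constraints land on the same multi-index ($J_{i,r}$, respectively $I_{i,r+1}$), rather than one on $J$ and one on $I$, in all four sign cases. One must also check that $u,v$ really are a $(V,V^*)$ pair, so that the identification is a genuine Weyl contraction and not a pairing forbidden by the walled structure. If the convention instead produced mixed $I$/$J$ constraints, I would fall back on using only one delta together with the cyclic structure of the trace to realize $c_{u,v}$ adjacent to some copy of $P_i$.
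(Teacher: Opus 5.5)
Your proof is correct and is essentially the paper's argument: you resum the Brauer expansion to restore the unexpanded $P_i$, recognize the Kronecker identification at the occurrence $(i,r)$ as inserting $\iota_{u,v}$ next to $P_i$ (or $c_{u,v}$ at the following copy), and conclude from the tracelessness of the image of $P_i$ together with self-adjointness. Your explicit index bookkeeping (both constraints land on $J_{i,r}$ and $I_{i,r+1}$, and each index involved occurs in a single delta) supplies details the paper leaves implicit, and it shows in passing that either one of the two identifications alone already forces the vanishing.
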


\begin{proof}
For each $i\in\{1,\dots,k\}$,
the image of $P_i=\Pi_i P_N^{n_i^+,n_i^-}$ is contained in $\TT_{n_i^+,n_i^-}$ because $P_N^{n_i^+,n_i^-}$ is the projector onto $\TT_{n_i^+,n_i^-}$. Hence, for every covariant slot $p$ and contravariant slot $q$, $c_{p,q}P_i=0.$ By self-adjointness, this also gives $P_i\iota_{p,q}=0,$ where $\iota_{p,q}$ denotes the corresponding coevaluation operator.

Undoing the Brauer expansion in the proof of Theorem~\ref{prop:character_weingarten}, the quantity $\widetilde K_N(\sigma)$ is obtained by keeping the factors $P_i$ unexpanded. If the configuration above occurs, then the two row/column Kronecker identifications corresponding to the chosen pair $(a,t)$ identify the local covariant and contravariant indices in such a way that, at the occurrence $(i,r)$, one inserts either a mixed contraction $c_{u,v}$ or its adjoint coevaluation $\iota_{u,v}$ between two consecutive copies of $P_i$. After summing over the intermediate indices, the local contraction therefore contains either the factor $c_{u,v}P_i$ or the factor $P_i\iota_{u,v}$.  Both vanish: the first because the image of $P_i$ is contained in the traceless mixed-tensor space, and the second by adjunction. Hence every summand is zero, so $\widetilde K_N(\sigma)=0$.
\end{proof}

This gives a first rigorous forbidden-matching criterion. A systematic exploitation of these cancellation mechanisms would be interesting and perhaps useful for a large-$N$ analysis.

\begin{remark}
As pointed out in \cite[Section 3]{Mag2}, Theorem~\ref{prop:character_weingarten} contains the ordinary unitary Weingarten calculus as a special case. Indeed, if there is no nontrivial mixed-tensor projector insertion, then the Brauer side becomes trivial: only the identity contribution survives in each local Brauer factor, the sum over $\tau$ collapses, and the formula reduces to a pure sum over the permutation-pair data $\sigma=((\alpha_a,\beta_a))_{a=1}^m$ with weights $\prod_{a=1}^m \mathrm{Wg}_N(\alpha_a^{-1}\beta_a),$ which is exactly the classical unitary Weingarten expansion for expectations of traces of words in independent Haar unitaries.
\end{remark}

\subsection{Construction of spanning surfaces}\label{sec:construction-spanning-surfaces}

In this subsection we reinterpret the Weingarten expansion of Theorem~\ref{prop:character_weingarten} in geometric terms. Let us recall a few notations: for each $i\in\{1,\dots,k\}$, write
\[
w_i=x_{a_{i,1}}^{\varepsilon_{i,1}}\cdots x_{a_{i,\ell_i}}^{\varepsilon_{i,\ell_i}},\qquad\ell_i:=|w_i|,\qquad r_i:=n_i^+ + n_i^-,
\]
and $\mathcal I=\prod_{i=1}^k \B_{n_i^+,n_i^-}^{\ell_i},$ $\mathcal J=\prod_{a=1}^m S_{p_a,p_a}.$ We write
\[
\tau=(\tau_{i,r})_{1\le i\le k,\ 1\le r\le \ell_i}\in \mathcal I,\qquad\sigma=((\alpha_a,\beta_a))_{a=1}^m\in \mathcal J.
\]
From this data we will construct polygons associated to each word $w_i$ and each tensor slot $u\in\{1,\ldots,r_i\}$, as well as ``Brauer gadgets'' associated to $\tau$ that attach ribbons to the polygons according to rules dictated by Theorem~\ref{prop:character_weingarten}. Then, these combinations of polygons and Brauer gadgets are glued together by means of $\sigma$, following Haar integration over all remaining edge variables.

In order to illustrate the procedure, we will describe a simple example in full details: 

\begin{example}\label{ex:surface}
Consider the integral
\[
\int_{\U(N)^2}\chi_{[(1),(1)]_N}(xyx^{-1}y^{-1})dxdy.
\]
We will describe, for each step of the construction, how it works for this specific example.
\end{example}

\medskip

{\bf Boundary polygons.} For each $i\in\{1,\dots,k\}$ and each slot $u\in\{1,\dots,r_i\}$, let $P_{i,u}$ be an
oriented polygon with $\ell_i$ sides
\[
e_{i,u,1},\dots,e_{i,u,\ell_i},
\]
read in cyclic order according to the word $w_i$. The side $e_{i,u,r}$ is labelled by the $r$-th letter $x_{a_{i,r}}^{\varepsilon_{i,r}}$ of $w_i$, together with the slot index $u$. We denote by $P:=\bigsqcup_{i=1}^k \bigsqcup_{u=1}^{r_i} P_{i,u}$ the resulting disjoint union of slot-polygons. Thus the set of boundary edges of $P$ is naturally indexed by
\[
\mathcal S=\{(i,r,u):1\le i\le k,\ 1\le r\le \ell_i,\ 1\le u\le r_i\},
\]
and the cyclic order around the polygons defines the implicit boundary-order datum $\rho$. The polygons corresponding to Example~\ref{ex:surface} are displayed in Figure~\ref{fig:ikea}.

\begin{figure}[h!]
    \centering
    \includegraphics[width=0.5\linewidth]{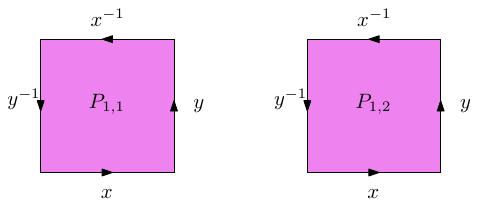}
    \caption{\small Boundary polygons from Example~\ref{ex:surface}. There are two polygons because the representation carries a one-dimensional covariant and a one-dimensional contravariant tensor.}
    \label{fig:ikea}
\end{figure}

\medskip

{\bf Brauer gadgets.} Fix $\tau\in\mathcal I$. For each pair $(i,r)$, let $G_{i,r}(\tau_{i,r})$ be a CW complex obtained as follows: given the walled Brauer diagram associated with $\tau_{i,r}$, we thicken all strands of the diagram into rectangles. It is a compact band complex with $2r_i$ distinguished boundary intervals
\[
p^{\mathrm{in}}_{i,r,1},\dots,p^{\mathrm{in}}_{i,r,r_i},\qquad p^{\mathrm{out}}_{i,r,1},\dots,p^{\mathrm{out}}_{i,r,r_i},
\]
one incoming and one outgoing interval for each slot $u\in\{1,\dots,r_i\}$, obtained by thickening every strand of the Brauer diagram into a rectangle. The horizontal strands give rise to the contraction/coevaluation bands.

We write $h(\tau_{i,r})$ for the number of horizontal bands in $G_{i,r}(\tau_{i,r})$, and we define the \emph{total defect}
\[
h(\tau):=\sum_{i=1}^k\sum_{r=1}^{\ell_i} h(\tau_{i,r}).
\]
In the case of Example~\ref{ex:surface}, there are only two possible Brauer diagrams, resulting in two Brauer gadgets, see Figure~\ref{fig:gadget}.

\begin{figure}[h!]
    \centering
    \includegraphics[width=0.5\linewidth]{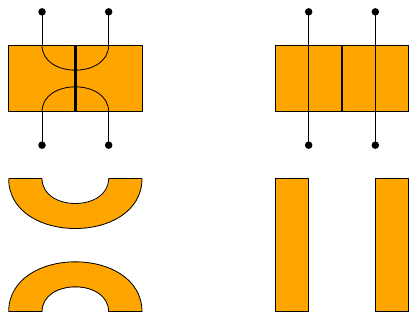}
    \caption{\small The two possible types of Brauer diagrams from Example~\ref{ex:surface} (above), and the corresponding Brauer gadgets (below).}
    \label{fig:gadget}
\end{figure}

\medskip

{\bf Construction of the surface.} Let us now detail the construction of the surface. It consists in three steps: we first glue the Brauer gadgets to boundary polygons, then we glue edges by performing Haar integration, and finally we cap the surface into a closed surface.

\medskip

\emph{Step 1. Brauer gluings.} Fix $\tau \in\mathcal I$. For every $i,r,u$, glue the incoming boundary interval $p^{\mathrm{in}}_{i,r,u}\subset \partial G_{i,r}(\tau_{i,r})$ to the boundary edge $e_{i,u,r}\subset \partial P_{i,u}$ by an orientation-reversing homeomorphism. Denote by $P(\tau)$ the resulting oriented surface with boundary. The opposite interval $p^{\mathrm{out}}_{i,r,u}$ remains exposed after the gluing; we denote the corresponding exposed boundary edge by
\[
\widetilde e_{i,u,r}\subset \partial P(\tau).
\]

\medskip

\emph{Step 2. Haar patches.} Fix $\sigma=((\alpha_a,\beta_a))_{a=1}^m\in \mathcal J$. For each letter $a\in\{1,\dots,m\}$, let
\[
\mathcal{S}_a^+=\{x_{a,1}^+,\dots,x_{a,p_a}^+\},\qquad \mathcal{S}_a^-=\{x_{a,1}^-,\dots,x_{a,p_a}^-\}
\]
be the ordered positive and negative occurrences of $a$ introduced in Theorem~\ref{prop:character_weingarten}. We define $H_a(\sigma_a)$, where $\sigma_a=(\alpha_a,\beta_a)$, to be the usual unitary Weingarten ribbon patch with boundary ports indexed by
\[
\widetilde e_{x_{a,1}^+},\dots,\widetilde e_{x_{a,p_a}^+}, \qquad \widetilde e_{x_{a,1}^-},\dots,\widetilde e_{x_{a,p_a}^-},
\]
and whose internal identifications encode the row and column Kronecker constraints
\[
\delta^{\mathrm K}_{\mathrm r(x_{a,t}^+),\mathrm r(x_{a,\alpha_a(t)}^-)},\qquad
\delta^{\mathrm K}_{\mathrm c(x_{a,t}^+),\mathrm c(x_{a,\beta_a(t)}^-)}
\qquad (1\le t\le p_a)
\]
appearing in Theorem~\ref{prop:character_weingarten}. We glue the boundary ports of $H_a(\sigma_a)$ to the corresponding exposed edges $\widetilde e_x\subset \partial P(\tau)$ by orientation-reversing homeomorphisms, compatible with the chosen orderings of $\mathcal{S}_a^\pm$. Performing this independently for every letter $a=1,\dots,m$ yields an oriented compact surface with boundary $\Sigma(\tau,\sigma).$ The gluing cell structure induced by the slot-polygons, the local Brauer gadgets and the Haar patches defines a topological map $\M(\tau,\sigma)\subset \Sigma(\tau,\sigma),$ whose set of faces decomposes canonically as $F(\M(\tau,\sigma)) = P \sqcup G \sqcup H,$ where:
\begin{itemize}
\item $P$ is the set of slot-polygons $P_{i,u}$,
\item $G$ is the set of rectangular faces coming from the local Brauer gadgets
$G_{i,r}(\tau_{i,r})$,
\item $H$ is the set of faces carried by the Haar patches $H_a(\sigma_a)$.
\end{itemize}
In the case of Example~\ref{ex:surface}, the general procedure is given in Figure~\ref{fig:pre-uncapped}.

\begin{figure}[h!]
    \centering
    \includegraphics[width=0.6\linewidth]{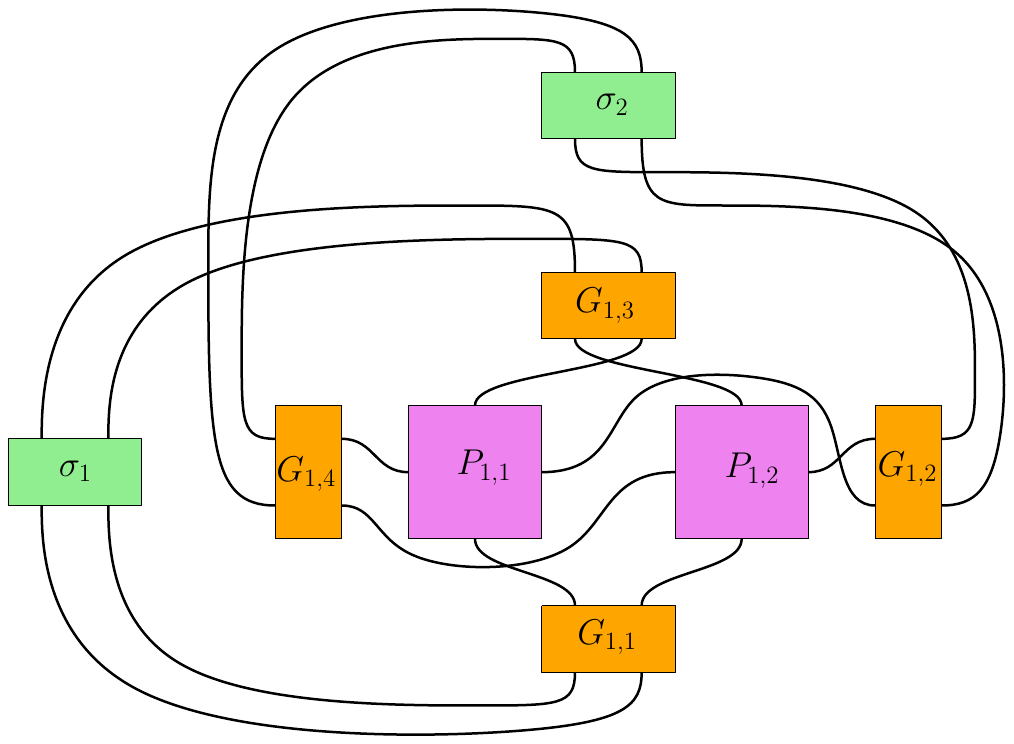}
    \caption{\small Schematic construction of an uncapped surface for the example~\ref{ex:surface}. Starting from each side of each boundary polygon (in purple, as in Figure~\ref{fig:ikea}), a strand goes through a Brauer projector (in orange, like those from Figure~\ref{fig:gadget}). The outgoing parts of strands associated to one edge and its inverse are then glued pairwise via a Haar patch (in green).}
    \label{fig:pre-uncapped}
\end{figure}

The construction is not finished there because the Brauer gadget and the Haar patches are not specified: there are as many possible surfaces as choices of each gadget and each patch. Each time, there are two choices of gadgets (as displayed in Figure~\ref{fig:gadget}) and two choices of patches (because they live in $S_2=\{\Id,(1\ 2)\}$). Figure~\ref{fig:uncapped} shows the construction of the uncapped surface from a given specification.

\begin{figure}[h!]
    \centering
    1. \includegraphics[width=0.52\linewidth]{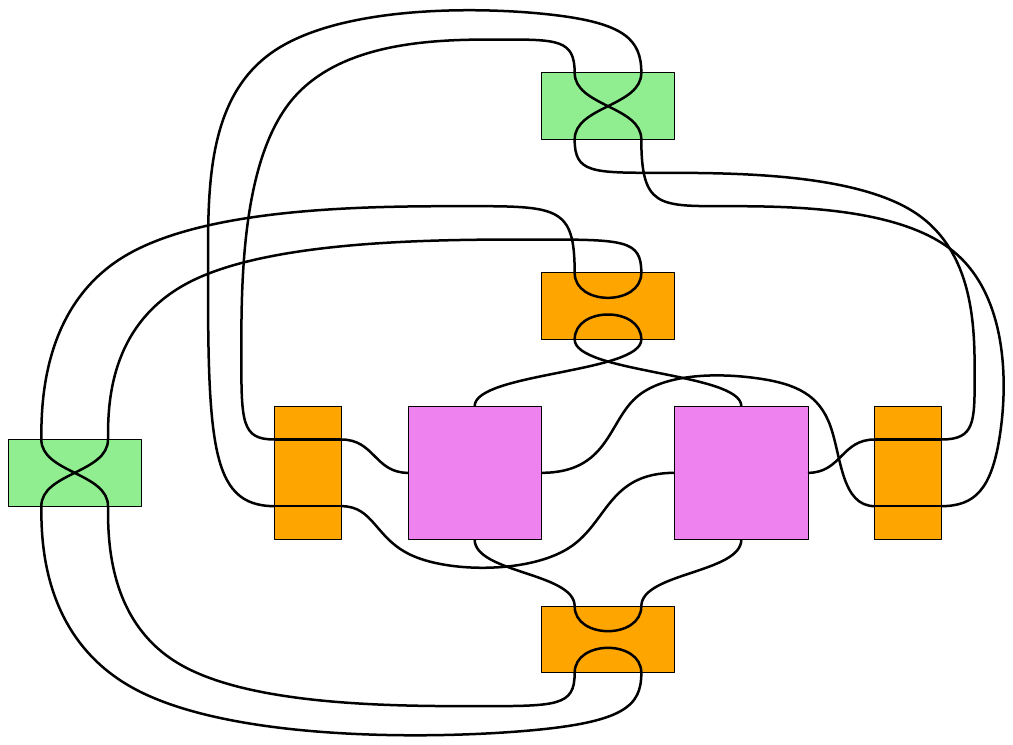}

    \smallskip

    2. \includegraphics[width=0.35\linewidth]{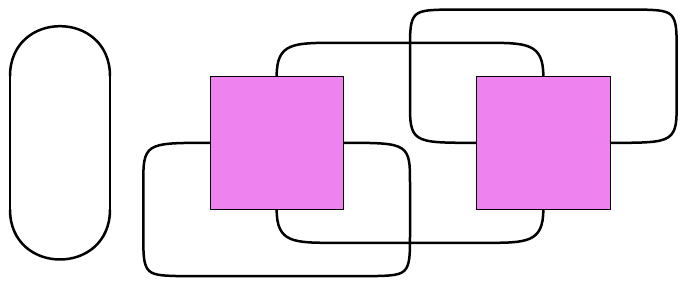}\hfill
    3. \includegraphics[width=0.35\linewidth]{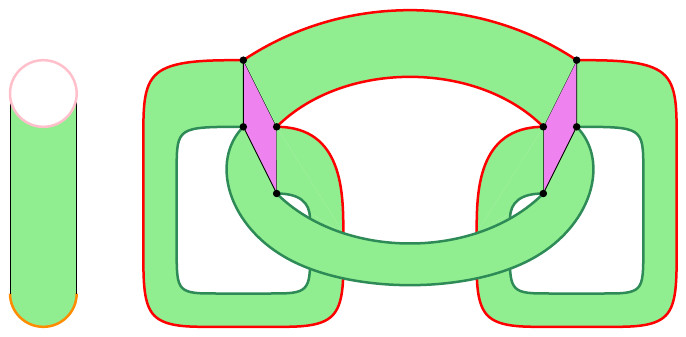}
    \caption{\small Construction of an uncapped surface. 1. For a given specification of Brauer projectors and Haar patches, the construction of Figure~\ref{fig:pre-uncapped} reveals complete strands. 2. Simplification of the diagram by shortening the strands. 3. Replacing the strands with bands (in green) yields the uncapped surface, which has 4 boundary components.}
    \label{fig:uncapped}
\end{figure}

\medskip

\emph{Step 3. Capping.} Let $b(\tau,\sigma):=\#\pi_0(\partial \Sigma(\tau,\sigma))$ be the number of boundary components of the refined surface. By attaching one capping disk to each boundary component, we obtain a closed oriented surface $\widehat\Sigma(\tau,\sigma).$ The cell structure extends to a completed topological map
\[
\widehat \M(\tau,\sigma)\subset \widehat\Sigma(\tau,\sigma),
\]
whose set of faces decomposes as
\[
F(\widehat \M(\tau,\sigma)) = P \sqcup G \sqcup H \sqcup C,
\]
where $C$ is the set of capping disks. For Example~\ref{ex:surface}, the uncapped surface obtained in Figure~\ref{fig:uncapped} is completed into a closed oriented surface with two connected components, of genera $0$ and $1$, see Figure~\ref{fig:capped}.

\begin{figure}[h!]
    \centering
    \includegraphics[width=0.7\linewidth]{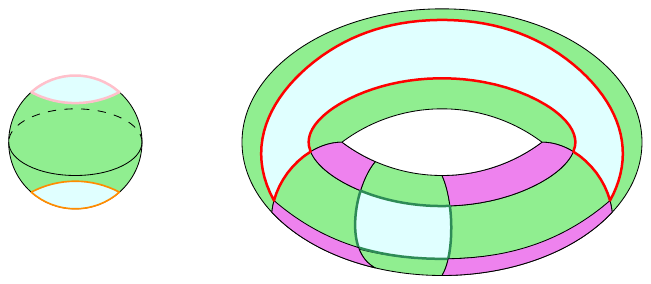}
    \caption{\small The capped surface obtained from the uncapped one considered in Figure~\ref{fig:uncapped}: it has two connected components, respectively of genus 0 and 1.}
    \label{fig:capped}
\end{figure}

\medskip

We call \emph{uncapped pair} a pair $(\M(\tau,\sigma),\Sigma(\tau,\sigma))$ where $\M(\tau,\sigma)$ is a topological map embedded in $\Sigma(\tau,\sigma)$. Two uncapped pairs $(\M(\tau,\sigma), \Sigma(\tau,\sigma)),\ (\M(\tau',\sigma'), \Sigma(\tau',\sigma'))$, are called \emph{decorated-isomorphic} if there exists an orientation-preserving homeomorphism
\[
f:\Sigma(\tau,\sigma)\to \Sigma(\tau',\sigma')
\]
carrying $\M(\tau,\sigma)$ onto $\M(\tau',\sigma')$, preserving the decomposition of faces into $P$, $G$, $H$, and preserving all labels induced by $(i,r,u)$, by the letters $a_{i,r}$, and by the chosen orderings of the sets $\mathcal{S}_a^\pm$. Similarly, two capped pairs $(\widehat \M(\tau,\sigma), \widehat\Sigma(\tau,\sigma)),\ (\widehat \M(\tau',\sigma'), \widehat\Sigma(\tau',\sigma'))$, are called \emph{capped-decorated-isomorphic} if there exists an orientation-preserving homeomorphism between the capped surfaces carrying the completed maps onto each other, preserving the decomposition of faces into $P$, $G$, $H$, $C$, and in particular preserving the set $C$ of capping faces. We denote respectively by $S$ and $\widehat S$ the sets of uncapped and capped decorated-isomorphism classes arising from pairs $(\tau,\sigma)\in \mathcal I\times\mathcal J$.

\begin{remark}\label{rmk:capping-bijection}
The capping construction induces a canonical bijection $\mathrm{cap}:S\longrightarrow \widehat S,$ because the capped-decorated-isomorphism class remembers the capping faces $C$, and removing their interiors recovers the uncapped decorated surface and map.
\end{remark}

For $(\tau,\sigma)\in \mathcal I\times\mathcal J$, define
\[
\Theta_N(\tau,\sigma):=N^{-\chi(\Sigma(\tau,\sigma))+h(\tau)}\mathcal K_N(\tau,\sigma),
\]
where $\mathcal K_N(\tau,\sigma)$ is the index-contraction factor of Theorem~\ref{prop:character_weingarten}. Equivalently, since
\begin{equation}\label{eq:chi_capped_uncapped}
\chi(\widehat\Sigma(\tau,\sigma))=\chi(\Sigma(\tau,\sigma)) + b(\tau,\sigma),
\end{equation}
we also have the tautological identity
\[
\mathcal K_N(\tau,\sigma)=\Theta_N(\tau,\sigma) N^{\chi(\widehat\Sigma(\tau,\sigma))-b(\tau,\sigma)-h(\tau)}.
\]
For any $\Xi=[(\M(\tau,\sigma), \Sigma(\tau,\sigma))]$, we define
\[
\chi(\Xi):=\chi(\Sigma(\tau,\sigma)),\qquad h(\Xi):=h(\tau),
\]
which are invariants of the decorated-isomorphism class, and the local coefficient
\[
\Omega_N(\Xi):=\sum_{(\tau,\sigma)}c_N(\tau) \left(\prod_{a=1}^m \mathrm{Wg}_N(\alpha_a^{-1}\beta_a)\right) \Theta_N(\tau,\sigma),
\]
where the sum is performed over all $(\tau,\sigma)\in \mathcal I\times\mathcal J$ such that $\Xi=[(\M(\tau,\sigma), \Sigma(\tau,\sigma))]$.

\begin{proposition}\label{prop:character_surface}
With the notation of Theorem~\ref{prop:character_weingarten},
\[
\int_{\U(N)^m}\prod_{i=1}^k \chi_{[\lambda_i^+,\lambda_i^-]_N}(w_i(U))dU=\sum_{\Xi\in S}\Omega_N(\Xi)N^{\chi(\Xi)-h(\Xi)}.
\]
\end{proposition}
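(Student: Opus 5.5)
The plan is to derive the statement from Theorem~\ref{prop:character_weingarten} by pure bookkeeping. The key fact is that $\Theta_N$ was defined so that $\mathcal K_N(\tau,\sigma)=\Theta_N(\tau,\sigma)N^{\chi(\Sigma(\tau,\sigma))-h(\tau)}$ holds tautologically. First, we start from the refined expansion~\eqref{eq:refined-character-weingarten}, which writes the integral as a finite double sum over $(\tau,\sigma)\in\mathcal I\times\mathcal J$ of $c_N(\tau)\prod_a\mathrm{Wg}_N(\alpha_a^{-1}\beta_a)\mathcal K_N(\tau,\sigma)$. If some $|\mathcal S_a^+|\neq|\mathcal S_a^-|$, then $\mathcal J$ is empty by convention, hence $S$ is empty. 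Both sides then vanish, so we may assume the balance condition holds.

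Second, we substitute $\mathcal K_N(\tau,\sigma)=N^{\chi(\Sigma(\tau,\sigma))-h(\tau)}\Theta_N(\tau,\sigma)$ into each summand. Third, we partition the finite index set $\mathcal I\times\mathcal J$ according to the decorated-isomorphism class $\Xi=[(\M(\tau,\sigma),\Sigma(\tau,\sigma))]\in S$. Since $S$ is by definition the set of classes arising from such pairs, the fibres cover $\mathcal I\times\mathcal J$ disjointly. Because the sums are finite, we may reorder as $\sum_{\Xi\in S}\sum_{(\tau,\sigma)\mapsto\Xi}$.

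The only substantive point is that $N^{\chi(\Sigma(\tau,\sigma))-h(\tau)}$ is constant on each fibre, so it factors out as $N^{\chi(\Xi)-h(\Xi)}$. What remains inside is exactly $\Omega_N(\Xi)$.

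\begin{itemize}
\item For $\chi$, a decorated isomorphism is an orientation-preserving homeomorphism $\Sigma(\tau,\sigma)\to\Sigma(\tau',\sigma')$, so the Euler characteristics agree.
\item For $h$, such a homeomorphism preserves the face decomposition $P\sqcup G\sqcup H$ and the labels $(i,r,u)$. It therefore carries the rectangular faces of each gadget $G_{i,r}(\tau_{i,r})$ bijectively onto those of $G_{i,r}(\tau'_{i,r})$.
\item The horizontal bands are exactly the gadget rectangles joining an incoming interval to an incoming one, or an outgoing interval to an outgoing one, rather than an incoming interval to an outgoing one. The incoming intervals are recognised as those glued to slot-polygon edges $e_{i,u,r}$, and the outgoing ones as those glued to Haar patches.
\item Both types of adjacency are preserved by the isomorphism. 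Hence $h(\tau_{i,r})=h(\tau'_{i,r})$ for every $(i,r)$, and so $h(\tau)=h(\tau')$.
\end{itemize}

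This well-definedness of $h$ is the step needing care, and it is where I expect the main (mild) obstacle. The resolution is to argue via the labelled face decomposition as above, and not via the diagrams $\tau_{i,r}$ themselves, which need not be recoverable from the class.

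Collecting the fibre sums then gives $\sum_{\Xi\in S}\Omega_N(\Xi)N^{\chi(\Xi)-h(\Xi)}$, as claimed. No analytic issue arises, since all index sets are finite.
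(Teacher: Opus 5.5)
Your proposal is correct and follows the paper's proof: substitute $\mathcal K_N(\tau,\sigma)=\Theta_N(\tau,\sigma)N^{\chi(\Sigma(\tau,\sigma))-h(\tau)}$ into the expansion of Theorem~\ref{prop:character_weingarten}, then regroup the finite sum over $\mathcal I\times\mathcal J$ by uncapped decorated-isomorphism class. Your argument that $h$ is constant on fibres (via the labelled face decomposition), together with your treatment of the unbalanced case, fills in points the paper only asserts when it defines $\chi(\Xi)$ and $h(\Xi)$.
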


\begin{proof}
By Theorem~\ref{prop:character_weingarten},
\[
\int_{\U(N)^m}\prod_{i=1}^k \chi_{[\lambda_i^+,\lambda_i^-]_N}(w_i(U))dU=\sum_{\tau\in\mathcal I} \sum_{\sigma\in \mathcal J} c_N(\tau) \left(\prod_{a=1}^m \mathrm{Wg}_N(\alpha_a^{-1}\beta_a)\right) \mathcal K_N(\tau,\sigma).
\]
Using the definition of $\Theta_N(\tau,\sigma)$, each summand can be rewritten as
\[
c_N(\tau)\left(\prod_{a=1}^m \mathrm{Wg}_N(\alpha_a^{-1}\beta_a)\right)\Theta_N(\tau,\sigma)N^{\chi(\Sigma(\tau,\sigma))-h(\tau)}.
\]
Grouping together all pairs $(\tau,\sigma)$ that produce the same uncapped decorated-isomorphism class $\Xi\in S$ yields the formula.
\end{proof}

\begin{remark}\label{rem:topological-grading-not-asymptotic}
The factor $N^{\chi(\Xi)-h(\Xi)}$ records a canonical topological grading of each decorated surface class. At this fully finite-$N$ level the coefficient $\Omega_N(\Xi)$ still contains the $N$-dependence of the projector and Weingarten factors. Thus Proposition~\ref{prop:character_surface} is an exact surface organization of the integral; by itself it does not assert that $\chi(\Xi)-h(\Xi)$ is the asymptotic order of the contribution as $N\to\infty$. In the trace-only Wilson specialization of Section~\ref{sec:specializations}, the standard renormalized Weingarten weights recover the genuine Euler-characteristic grading familiar from the usual topological $1/N$ expansion.
\end{remark}

At this stage, the construction seems to carry too much data to be seen as purely topological. It is nonetheless useful to get a better control in the large-$N$, as shown in \cite{Mag2,Dah26}. We can turn Proposition~\ref{prop:character_surface} into a coarser, more natural topological organization as follows. Let $\pi:S\longrightarrow S_{\mathrm{coarse}}$ be the forgetful map defined as follows. Starting from a refined pair $(\M(\tau,\sigma), \Sigma(\tau,\sigma)),$ forget:
\begin{itemize}
\item the decomposition into local Brauer gadgets $G_{i,r}(\tau_{i,r})$,
\item the positionwise labels $r$ on the Brauer layer,
\item the internal subdivision data that distinguish different local realizations of the same surface,
\end{itemize}
while retaining:
\begin{itemize}
\item the underlying oriented surface with boundary,
\item the associated topological map,
\item the decomposition of the faces into the images of $P$ and $H$,
\item the total defect
\[
h(\tau)=\sum_{i=1}^k\sum_{r=1}^{\ell_i} h(\tau_{i,r}).
\]
\end{itemize}
The image is denoted by $S_{\mathrm{coarse}}$.

\begin{corollary}\label{cor:coarse-capped-quotient}
With the notation above,
\begin{equation}
\int_{\U(N)^m}\prod_{i=1}^k \chi_{[\lambda_i^+,\lambda_i^-]_N}(w_i(U))dU=\sum_{\Sigma\in S_{\mathrm{coarse}}}\Omega_N(\Sigma)N^{\chi(\Sigma)-h(\Sigma)},
\end{equation}
where
\[
\Omega_N(\Sigma):=\sum_{\Xi\in \pi^{-1}(\Sigma)}\Omega_N(\Xi).
\]
\end{corollary}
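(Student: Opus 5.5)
The plan is to deduce the corollary directly from Proposition~\ref{prop:character_surface} by regrouping the finite sum over the fibres of the forgetful map $\pi:S\to S_{\mathrm{coarse}}$. Since $\mathcal I\times\mathcal J$ is finite, $S$ is a finite set, and so is its image $S_{\mathrm{coarse}}$. The fibres $\pi^{-1}(\Sigma)$, $\Sigma\in S_{\mathrm{coarse}}$, therefore form a finite partition of $S$. Consequently
\[
\sum_{\Xi\in S}\Omega_N(\Xi)N^{\chi(\Xi)-h(\Xi)}=\sum_{\Sigma\in S_{\mathrm{coarse}}}\ \sum_{\Xi\in\pi^{-1}(\Sigma)}\Omega_N(\Xi)N^{\chi(\Xi)-h(\Xi)},
\]
with no convergence issue.

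The key step is to show that the exponent $\chi(\Xi)-h(\Xi)$ is constant on each fibre, so that it can be pulled out of the inner sum as $N^{\chi(\Sigma)-h(\Sigma)}$.

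\begin{itemize}
\item For $h$, this is built into the definition: the coarse class explicitly retains the total defect $h(\tau)=\sum_{i,r}h(\tau_{i,r})$. Hence $h(\Sigma):=h(\Xi)$ is well defined for any $\Xi\in\pi^{-1}(\Sigma)$.
\item For $\chi$, the coarse class retains the underlying oriented surface with boundary $\Sigma(\tau,\sigma)$, up to orientation-preserving homeomorphism. Since Euler characteristic is a homeomorphism invariant, $\chi(\Sigma):=\chi(\Sigma(\tau,\sigma))$ is independent of the representative.
\end{itemize}

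One point needs care here. Forgetting the gadget decomposition may change the cell structure, since the Brauer rectangles get merged into adjacent faces. This is harmless because $\chi$ is computed from the surface, not from the map. Alternatively, one can check that merging cells along the retained map $\M$ is a subdivision-invariant operation.

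With both invariances established, the inner sum factors as
\[
N^{\chi(\Sigma)-h(\Sigma)}\sum_{\Xi\in\pi^{-1}(\Sigma)}\Omega_N(\Xi)=\Omega_N(\Sigma)N^{\chi(\Sigma)-h(\Sigma)},
\]
which is exactly the claimed formula.

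The only real obstacle is making precise that $\pi$ is well defined on decorated-isomorphism classes. Concretely, decorated-isomorphic refined pairs must map to the same coarse class. This holds because a decorated isomorphism $f$ is in particular an orientation-preserving homeomorphism of the underlying surfaces carrying $\M$ to $\M'$ and preserving the $P$- and $H$-faces. Moreover $h$ is a decorated invariant, since $f$ preserves the $G$-faces and hence the number of horizontal bands. I would state this check briefly, and then the proof is a one-line regrouping.
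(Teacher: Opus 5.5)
Your proposal is correct and follows essentially the same route as the paper: apply Proposition~\ref{prop:character_surface}, partition the finite sum over $S$ into the fibres of $\pi$, and pull out $N^{\chi(\Sigma)-h(\Sigma)}$, since $\pi$ retains the underlying surface and the total defect $h(\tau)$. Your extra check that $\pi$ is well defined on decorated-isomorphism classes spells out what the paper leaves as ``by construction''.
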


\begin{proof}
Apply Proposition~\ref{prop:character_surface} and regroup the sum according to the forgetful map $\pi:S\to S_{\mathrm{coarse}}$. By construction, $\pi$ forgets only the local realization data and preserves the underlying surface, the topological map, and the total number of horizontal Brauer bands. Hence $\chi$, $b$, and $h$ are constant on the fibres of $\pi$, which yields the stated coarse formula.
\end{proof}

\subsection{Proof of the decorated-surface expansion}

We now apply the general surface expansion of Proposition~\ref{prop:character_surface} to the topological coefficients $\widehat{W}_{\Lambda,\Lc}(\alpha)$. The point is that the expansion applies to the family of plaquette words $(\partial p)_{p\in P(\Lambda)}$ together with the loop words $\ell_1,\dots,\ell_k$. For the plaquette factors, the highest weights are $\alpha_p=[\lambda_p^+,\lambda_p^-]_N,$ whereas for the loop factors we use the fundamental representation $\Tr(U_{\ell_i})=\chi_{[(1),\varnothing]_N}(U_{\ell_i}).$ Accordingly, each plaquette $p$ contributes $|\lambda_p^+|+|\lambda_p^-|$ slot-polygons in the slot-expanded construction of Section~\ref{sec:construction-spanning-surfaces}, while each loop $\ell_i$ contributes exactly one distinguished loop-polygon $P_{\ell_i}$.

\begin{theorem}\label{thm:fixed-alpha-punctured-sum}
Fix a plaquette-decoration $\alpha:P(\Lambda)\to \widehat{\U(N)}$, and note $\alpha_p=[\lambda_p^+,\lambda_p^-]_N$ for all $p\in P(\Lambda)$. There exists a finite set $S_{\Lambda,\Lc}(\alpha)$ of decorated oriented spanning surfaces with boundary, such that each $\Sigma\in S_{\Lambda,\Lc}(\alpha)$ is equipped with a coefficient $\Omega_N(\Sigma;\alpha)\in\mathbb C$ and an integer $h(\Sigma)\geq 0$, and such that
\begin{equation}
\widehat{W}_{\Lambda,\Lc}(\alpha)=\sum_{\Sigma\in S_{\Lambda,\Lc}(\alpha)}\Omega_N(\Sigma;\alpha) N^{\chi(\Sigma)-h(\Sigma)}.
\end{equation}
Moreover, every $\Sigma\in S_{\Lambda,\Lc}(\alpha)$ comes equipped with:
\begin{itemize}
\item a cellular map $\varphi_\Sigma:\Sigma\to \Lambda^{(2)}$;
\item $k$ ordered marked boundary components $C_1,\dots,C_k$;
\item for each $i$, the restriction $\varphi_\Sigma|_{C_i}$ has cyclic boundary word $\ell_i$;
\item a plaquette-sheet labeling compatible with $\alpha$;
\item a record of the local walled Brauer and Haar-pairing data.
\end{itemize}
\end{theorem}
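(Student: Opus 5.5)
The plan is to realize $\widehat W_{\Lambda,\Lc}(\alpha)$ as an integral of the type treated in Proposition~\ref{prop:character_surface}, and then read off the geometric decorations from the construction of Section~\ref{sec:construction-spanning-surfaces}.

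\textbf{Reduction to Proposition~\ref{prop:character_surface}.}
First, choose an enumeration $E(\Lambda)^+=\{e_1,\dots,e_m\}$ of one orientation of each edge. This identifies $U\mapsto (U_{e_1},\dots,U_{e_m})$ with $\U(N)^m$, pushing the Haar measure $dU$ to the product Haar measure. Every plaquette boundary $\partial p$ and every loop $\ell_i$ then becomes a word in $\Fbb_m$; we take it in cyclically reduced form, which does not change the trace or character since characters are class functions.

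Next, write $\alpha_p=[\lambda_p^+,\lambda_p^-]_N$ by splitting the highest weight into its positive and negative parts, and use $\Tr(U_{\ell_i})=\chi_{[(1),\varnothing]_N}(U_{\ell_i})$. Then
\[
\widehat W_{\Lambda,\Lc}(\alpha)=\int_{\U(N)^m}\prod_{i=1}^k\chi_{[(1),\varnothing]_N}(\ell_i(U))\prod_{p\in P(\Lambda)}\chi_{[\lambda_p^+,\lambda_p^-]_N}(\partial p(U))\,dU ,
\]
which is exactly the setting of Theorem~\ref{prop:character_weingarten}, with $k+|P(\Lambda)|$ words. One technical point is that a signature with $\lambda_p^+=\lambda_p^-=\varnothing$ (the trivial representation) gives $T_{0,0}=\C$ and character $1$. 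Such plaquettes are simply dropped from the family of words.

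Applying Proposition~\ref{prop:character_surface} (or Corollary~\ref{cor:coarse-capped-quotient}) gives the sum over the finite set $S$ of decorated classes. We set $S_{\Lambda,\Lc}(\alpha):=S$ and $\Omega_N(\Sigma;\alpha):=\Omega_N(\Xi)$. The set is finite because $\mathcal I\times\mathcal J$ is finite, and $h(\Sigma)=h(\tau)\ge0$ because it counts horizontal bands.

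\textbf{Geometric decorations.}
It remains to equip each class with the listed structure, and this is where the work lies. We would define $\varphi_\Sigma$ cell by cell:
\begin{itemize}
\item each slot-polygon $P_{p,u}$ attached to a plaquette is sent homeomorphically onto the plaquette $p$, with orientation chosen according to whether $u$ is covariant or contravariant;
\item each rectangle of a Brauer gadget $G_{i,r}$ and each band of a Haar patch $H_a$ is collapsed onto the edge $e_{a_{i,r}}$, respectively $e_a$;
\item the vertices are sent to the lattice vertices at the endpoints of these edges.
\end{itemize}
Compatibility along glued sides must be checked. The Brauer gluings of Step~1 identify sides carrying the same letter, and the Haar patches only pair occurrences of the same letter $a$ with opposite signs $\delta$. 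Hence the edge maps agree and $\varphi_\Sigma$ is a well-defined cellular map to $\Lambda^{(2)}$.

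The marked boundary components come from the loop polygons. Each $\ell_i$ contributes a single slot-polygon $P_{\ell_i}$ (since $r_i=1$). We do \emph{not} cap this polygon into the surface; we remove its interior, so that its boundary, which reads $\ell_i$, becomes the marked component $C_i$. Correspondingly, the Euler characteristic must be re-indexed: it changes by $-1$ per loop relative to $\Sigma(\tau,\sigma)$. We absorb this by redefining $\Theta_N$ with the shifted exponent, which is a tautological rescaling like the one used to define $\Theta_N$ from $\mathcal K_N$.

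The plaquette-sheet labeling records, for each face of type $P$, the pair $(p,u)$ and the highest weight $\alpha_p$. The Brauer and Haar data are the pair $(\tau,\sigma)$, which is retained up to decorated isomorphism.

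\textbf{Expected obstacle.}
The main obstacle is to verify that $\varphi_\Sigma|_{C_i}$ really has cyclic boundary word $\ell_i$ and that these components are distinct and ordered. Two issues have to be handled:
\begin{itemize}
\item Brauer gadgets on loop slots are trivial, because $\B_{1,0}$ contains only the identity. So the loop polygon's sides pass through identity bands, and its boundary word is the letter sequence of $\ell_i$, read before the Haar gluing.
\item Decorated isomorphisms preserve the labels $(i,r,u)$, which fixes the ordering of the $C_i$.
\end{itemize}
The delicate part is ensuring that the convention of removing loop-polygon interiors (instead of capping them) is consistent with the exponent bookkeeping. I would treat this by defining $S_{\Lambda,\Lc}(\alpha)$ directly as the set of these punctured surfaces and transporting $\Omega_N$ via the capping bijection of Remark~\ref{rmk:capping-bijection}.
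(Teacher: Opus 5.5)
Your outline matches the paper's proof. You apply Proposition~\ref{prop:character_surface} to the plaquette words with labels $\alpha_p$ and the loop words in the fundamental representation, map slot-polygons onto plaquettes, collapse gadget and Haar cells, turn the loop polygons into the marked boundary, and let the coefficient absorb the change of exponent. The step that fails as written is the puncturing. You delete $\operatorname{int}(P_{\ell_i})$ from the \emph{uncapped} surface $\Sigma(\tau,\sigma)$ and shift the exponent by $-k$. But in $\Sigma(\tau,\sigma)$ every corner of $P_{\ell_i}$ lies on $\partial\Sigma(\tau,\sigma)$. Each side of $P_{\ell_i}$ is glued to the short end of a gadget band whose long sides are never glued, since Haar patches attach only along the outgoing intervals; so the ribbon boundary passes through each corner. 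Removing the open polygon therefore leaves, at each corner, two band corners meeting in a single point. The result is pinched: $\partial P_{\ell_i}$ is not a boundary component of a surface, and it touches the $b(\tau,\sigma)$ unmarked boundary circles. So the objects you build are not the spanning surfaces with marked boundary $C_1,\dots,C_k$ that the statement requires. Because the numerical identity is tautological once $\Omega_N$ absorbs the discrepancy, this geometric point is the real content of the step. Your ``expected obstacle'' is misplaced: the exponent bookkeeping is not the difficulty.

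The paper first caps all $b(\tau,\sigma)$ boundary components, which makes the corners of $P_{\ell_i}$ interior points. Only then does it set $\Sigma^\partial(\tau,\sigma)=\widehat\Sigma(\tau,\sigma)\setminus\bigsqcup_i\operatorname{int}(P_{\ell_i})$, a compact oriented surface whose boundary is exactly $C_1\sqcup\dots\sqcup C_k$. The correct bookkeeping is then $\chi(\Sigma(\tau,\sigma))=\chi(\Sigma^\partial(\tau,\sigma))+k-b(\tau,\sigma)$. So each term must absorb $N^{k-b(\tau,\sigma)}$, not $N^k$; this is harmless because it is done termwise in $(\tau,\sigma)$ before grouping by punctured class. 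Your closing appeal to Remark~\ref{rmk:capping-bijection} is the right tool, but it must be used to pass to $\widehat\Sigma$ before puncturing, and then the shift you state changes.

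Two smaller points. Use Proposition~\ref{prop:character_surface} and not Corollary~\ref{cor:coarse-capped-quotient}, since the coarse quotient forgets the walled Brauer data that must be recorded. Do not cyclically reduce the loop words; the Weingarten expansion does not need it. Otherwise $C_i$ reads the reduction of $\ell_i$ rather than $\ell_i$, and a loop with trivial reduction produces no polygon at all. Dropping plaquettes with trivial label, by contrast, is a sensible fix that the paper leaves implicit.
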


\begin{proof}
Apply Proposition~\ref{prop:character_surface} to the family of words
\[
(\partial p)_{p\in P(\Lambda)}\cup (\ell_1,\dots,\ell_k),
\]
with highest weights $\alpha_p=[\lambda_p^+,\lambda_p^-]_N$ on the plaquette factors and
$((1),\varnothing)$ on the loop factors. Since $\Tr(U_{\ell_i})=\chi_{[(1),\varnothing]_N}(U_{\ell_i}),$ this yields a finite expansion indexed by pairs $(\tau,\sigma)$, where $\tau$ is the local walled Brauer datum $\tau=(\tau_{i,r})_{i,r},$ and $\sigma$ is the letterwise Haar-pairing datum of Theorem~\ref{prop:character_weingarten}. For each such pair, Section~\ref{sec:construction-spanning-surfaces} and Remark~\ref{rmk:capping-bijection} produce a capped decorated pair $(\M(\tau,\sigma), \widehat\Sigma(\tau,\sigma)),$ where $\Sigma(\tau,\sigma)$ is a compact oriented surface with boundary and $\M(\tau,\sigma)$ is a topological map embedded in $\Sigma(\tau,\sigma)$. Because the loop factors are fundamental, each loop $\ell_i$ contributes exactly one distinguished
loop-polygon $P_{\ell_i}$ among the faces of $\M(\tau,\sigma)$.

We now define the canonical cellular map
\[
\widehat\varphi_{\tau,\sigma}:\widehat\Sigma(\tau,\sigma)\longrightarrow \Lambda^{(2)}.
\]
It is specified on the faces of the completed map $\M(\tau,\sigma)$ as follows:
\begin{itemize}
\item each plaquette slot-polygon corresponding to a plaquette word $\partial p$ is sent homeomorphically onto the plaquette $p$, respecting the cyclic boundary word $\partial p$;
\item each distinguished loop-polygon $P_{\ell_i}$ is sent to the loop $\ell_i$ in the $1$-skeleton of $\Lambda$, with boundary word $\ell_i$;
\item each edge of the map labelled by an oriented lattice edge $e^{\pm1}$ is sent homeomorphically onto the corresponding oriented edge of $\Lambda$;
\item each Brauer face and each Haar face is collapsed onto the corresponding lower-dimensional cell determined by its edge labels.
\end{itemize}
This defines a cellular map from the completed surface to $\Lambda^{(2)}$.

We then define the punctured spanning surface associated with $(\tau,\sigma)$ by removing the interiors of the distinguished loop-faces:
\[
\Sigma^\partial(\tau,\sigma):=\widehat\Sigma(\tau,\sigma)\setminus \bigsqcup_{i=1}^k \operatorname{int}(P_{\ell_i}).
\]
Its boundary contains $k$ distinguished components
\[
C_i:=\partial P_{\ell_i},\qquad i=1,\dots,k.
\]
Restricting $\widehat\varphi_{\tau,\sigma}$ gives a cellular map
\[
\varphi_{\tau,\sigma}:\Sigma^\partial(\tau,\sigma)\to \Lambda^{(2)}.
\]
By construction, each plaquette slot-polygon associated with a plaquette word $\partial p$ is mapped homeomorphically onto the plaquette $p$, while the boundary word of the marked boundary component $C_i$ is exactly the loop $\ell_i$. The restriction $\varphi_{\tau,\sigma}|_{C_i}$ need not be injective, so self-intersections and mutual intersections of the loops are allowed.

The contribution of $(\tau,\sigma)$ in Proposition~\ref{prop:character_surface} is
\[
c_N(\tau)\Bigl(\prod_a \mathrm{Wg}_N(\alpha_a^{-1}\beta_a)\Bigr)\Theta_N(\tau,\sigma) N^{\chi(\Sigma(\tau,\sigma))-h(\tau)},
\]
where $\Sigma(\tau,\sigma)$ is the uncapped surface of Section~\ref{sec:construction-spanning-surfaces}. Recall that, by~\eqref{eq:chi_capped_uncapped},
\[
\chi(\widehat\Sigma(\tau,\sigma))=\chi(\Sigma(\tau,\sigma))+b(\tau,\sigma).
\]
Moreover, since $\Sigma^\partial(\tau,\sigma)$ is obtained from $\widehat\Sigma(\tau,\sigma)$ by removing exactly the $k$ distinguished loop-faces, we also have
\[
\chi(\Sigma^\partial(\tau,\sigma))= \chi(\widehat\Sigma(\tau,\sigma)) - k.
\]
Hence
\[
\chi(\Sigma(\tau,\sigma)) = \chi(\Sigma^\partial(\tau,\sigma)) - b(\tau,\sigma)+k.
\]
Therefore
\[
c_N(\tau)\Bigl(\prod_a \mathrm{Wg}_N(\alpha_a^{-1}\beta_a)\Bigr)\Theta_N(\tau,\sigma) N^{\chi(\Sigma(\tau,\sigma))-h(\tau)}= \widetilde\Omega_N(\tau,\sigma;\alpha) N^{\chi(\Sigma^\partial(\tau,\sigma))-h(\tau)},
\]
where
\[
\widetilde\Omega_N(\tau,\sigma;\alpha):= c_N(\tau)\Bigl(\prod_a \mathrm{Wg}_N(\alpha_a^{-1}\beta_a)\Bigr)\Theta_N(\tau,\sigma) N^{k-b(\tau,\sigma)}.
\]

We now define $S_{\Lambda,\Lc}(\alpha)$ to be the finite set of all decorated punctured spanning surfaces $\Sigma=\Sigma^\partial(\tau,\sigma)$ arising in this way, modulo orientation-preserving homeomorphisms preserving the map to $\Lambda^{(2)}$, the ordered marked boundary components, the plaquette-sheet labels, and the local Brauer/Haar decorations. For $\Sigma\in S_{\Lambda,\Lc}(\alpha)$, define
\[
\Omega_N(\Sigma;\alpha) := \sum_{(\tau,\sigma):\Sigma^\partial(\tau,\sigma)=\Sigma} \widetilde\Omega_N(\tau,\sigma;\alpha),
\]
and let $h(\Sigma):=h(\tau),$ which is well-defined because the decoration remembers the local walled Brauer data. Grouping the finite expansion of Proposition~\ref{prop:character_surface} according to the punctured spanning surface produced yields
\[
\widehat{W}_{\Lambda,\Lc}(\alpha) = \sum_{\Sigma\in S_{\Lambda,\Lc}(\alpha)} \Omega_N(\Sigma;\alpha) N^{\chi(\Sigma)-h(\Sigma)}.
\]
This proves the result.
\end{proof}

We can now prove the decorated-surface expansion of Wilson loop expectations.

\begin{proof}[Proof of Theorem~\ref{thm:surface-sum-Wilsonloops}]
We combine Theorem~\ref{thm:state_sum_Wilson} that gives an expression of the Wilson loop expectation as a sum over plaquette decorations, and Theorem~\ref{thm:fixed-alpha-punctured-sum} that gives a finite expansion of topological coefficients, which yields
\[
\mathbb E[W_{\Lambda,\Lc}(U)]=\frac1{Z_{\Lambda,Q}} \sum_\alpha \kappa_{\Lambda,Q}(\alpha) \sum_{\Sigma'\in S_{\Lambda,\Lc}(\alpha)} \Omega_N(\Sigma';\alpha) N^{\chi(\Sigma')-h(\Sigma')}.
\]
This is exactly~\eqref{eq:surface_sum} by the definition of $\kappa_{\Lambda,Q}$.
\end{proof}

To conclude this section, let us mention that Theorem~\ref{thm:surface-sum-Wilsonloops} should be understood as an exact decorated-surface expansion at the level of moments. As such, it naturally mixes the connected components carrying the marked loop boundaries with closed vacuum components. Indeed, for $\Sigma\in S_{\Lambda,\Lc}(\alpha)$, one may decompose
\[
\Sigma=\Sigma_{\mathrm{obs}}\sqcup \Sigma_{\mathrm{vac}},
\]
where $\Sigma_{\mathrm{obs}}$ is the union of the connected components meeting at least one of the marked boundary components $C_1,\dots,C_k$, and $\Sigma_{\mathrm{vac}}$ is the union of the remaining connected components. The latter are closed connected components, namely vacuum bubbles. Thus the exact expansion of Theorem~\ref{thm:surface-sum-Wilsonloops} naturally mixes two kinds of contributions: the components carrying the loop boundaries, and the closed vacuum components independent of the loops. This is why the coarse decorated-surface expansion is exact but not yet the optimal form for large-$N$ analysis. A connected version could be obtained by refining the decorated surface classes so as to remember the connected-component decomposition prior to quotienting, and then applying the standard exponential formula for vacuum components together with M\"obius inversion on set partitions for Wilson loop cumulants. We do not develop this refinement here, since for the applications of the present paper the more useful reorganization is the local defect-ratio formalism of the next section, which absorbs the vacuum bubbles into the background partition function.

\section{Local channel model and defect partition functions}\label{sec:local-channel}

Section~\ref{sec:surface-sum} produced two complementary geometric descriptions of the coefficients $\widehat{W}_{\Lambda,\Lc}(\alpha).$ However, they are not yet the most convenient form for the later developments of the paper. Even in the refined formulation of Theorem~\ref{thm:fixed-alpha-punctured-sum}, the local building blocks are assembled into a global spanning surface, so that formulas still depend on global connectedness, boundary components, capping operations, and possible vacuum bubbles. For the purposes of defect partition functions and of the later master-loop analysis, one would instead like a genuinely finite-range description in which the coefficients are written directly in terms of local data attached to plaquettes, non-tree edges, and their incidences.

The goal of the present section is to extract exactly this local content from the refined geometric picture of Section~\ref{sec:surface-sum}. After gauge fixing along a spanning tree, each plaquette insertion can be resolved into finitely many local mixed Schur--Weyl channels. The contractions created by the corresponding projector insertions are not, in general, products of independent edge factors; they may connect tensor slots belonging to several incidences of the same plaquette. We therefore keep these contractions as finite-dimensional operators and only afterwards expand them in tensor-product coordinates indexed by the incidences. Once this scalarization has been performed, the remaining Haar integrations are genuinely edgewise and are represented by orthogonal Haar projections. The result is a finite-range channel model on the dual incidence graph.

In contrast with many previous dual formulations of lattice gauge theory in terms of spin foams~\cite{OeckPfei01,Oec03,Con05,CCK07}, we do not dualize the full Yang--Mills measure at once. Instead, we work after the universal spectral/topological splitting of Theorem~\ref{thm:state_sum_Wilson}: the dependence on the plaquette action is entirely carried by the spectral weights $\kappa_{\Lambda,Q}(\alpha)$, whereas the present section reconstructs the topological coefficients $\widehat{W}_{\Lambda,\Lc}(\alpha)$ as a local channel partition function. The construction is completely finite-$N$ and action-independent. After summing against the spectral coefficients $\kappa_{\Lambda,Q}$, it yields the exact defect-ratio formula of Theorem~\ref{thm:spin-foam2}.

\subsection{The dual incidence graph}

Fix a finite connected lattice $\Lambda$, and let $T\subset E(\Lambda)$ be a spanning tree. We define the \emph{dual incidence graph} $D_T(\Lambda)$ to be the bipartite graph whose vertex set is
\[
V\left(D_T(\Lambda)\right)=P(\Lambda)\sqcup \bigl(E(\Lambda)\setminus T\bigr),
\]
and with one incidence edge $(p,e)$ between a plaquette vertex $p\in P(\Lambda)$ and an edge vertex $e\in E(\Lambda)\setminus T$ whenever the non-tree edge $e$ occurs in the gauge-fixed boundary word of $p$. Equivalently, before gauge fixing one may say that the underlying unoriented edge of $e$ lies in $\partial p$; repeated occurrences, when present, will be remembered in the local tensor data below. Hence the plaquette vertices of $D_T(\Lambda)$ record the face variables, whereas the edge vertices record the non-tree holonomies that remain after gauge fixing. An example in two dimensions is given in Figure~\ref{fig:dual-incidence}.

\begin{figure}[h!]
    \centering
    \includegraphics[width=0.3\linewidth]{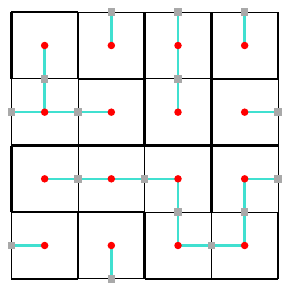}
    \caption{\small The dual incidence graph of a finite square lattice in $\Z^2$ for a given spanning tree (in bold). Plaquette vertices are red disks, edge vertices are grey squares, incidence edges are in blue.}
    \label{fig:dual-incidence}
\end{figure}

This graph is the natural local carrier for the coefficient $\widehat W_{\Lambda,\Lc}(\alpha)$. Indeed, after gauge fixing along $T$, every plaquette holonomy $U_{\partial p}$ is a word in the non-tree variables. The tensor associated with the plaquette label $\alpha_p$ therefore has external matrix-entry legs only at the incidences $(p,e)$ with $e\notin T$. Conversely, once the plaquette tensors have been expressed in incidence coordinates, the Haar variable $U_e$ occurs only in the factors attached to the edge vertex $e$, together with the Wilson strands traversing $e$. Thus:
\begin{itemize}
\item the \emph{plaquette side} of the model is localized at the plaquette vertices $p$ of $D_T(\Lambda)$;
\item the \emph{Haar constraints} are localized at the non-tree edge vertices $e$;
\item the \emph{channel variables} live on the incidence edges $(p,e)$;
\item the Wilson observable contributes an additional finite tensor-network defect along the edge vertices traversed by the loop family.
\end{itemize}

From this viewpoint, the global surface expansion of Section~\ref{sec:surface-sum} and the local model of the present section retain complementary information. The surface expansion remembers global topology---connected components, Euler characteristic and spanning surfaces---whereas the channel model forgets this global organization and keeps the local representation-theoretic compatibility data. The latter are the data relevant for locality questions and for the coefficientwise master-loop equation.

It is worth emphasizing that $D_T(\Lambda)$ does not arise ad hoc. Once a spanning tree is fixed, the independent group variables are exactly the non-tree edge holonomies, and a plaquette tensor can interact with another plaquette tensor only through a common non-tree variable. Thus the incidence relation
\[
\{\text{plaquettes}\}\longleftrightarrow\{\text{non-tree edges}\}
\]
is the minimal factor-graph structure describing the gauge-fixed integral. In particular, it is the natural higher-dimensional analogue of the local graphs appearing in tensor-network and spin-foam formulations.

A loop family $\Lc$ enters the gauge-fixed integral only through the non-tree edges that it traverses. We therefore define its \emph{defect support} by
\begin{equation}\label{eq:defect-support-sec3}
D(\Lc):=\{e\in E(\Lambda)\setminus T:\text{ some loop of }\Lc\text{ traverses }e\}.
\end{equation}
All plaquette-resolution data constructed below are independent of $\Lc$; only the Wilson scalarization and the resulting local edge factors are modified by the observable, and this modification is supported on $D(\Lc)$.

\subsection{Gauge-fixed local resolutions}\label{sec:local-resolution}

We first resolve a single plaquette character into incidence-local tensor coordinates. The argument starts from the mixed Schur--Weyl realization used in Section~\ref{sec:surface-sum}, but we keep the tensor contractions produced by the projectors as operators until the final tensor-product expansion. This point is essential: a contraction joining two different occurrences of a plaquette boundary can connect two different incidences, so it should not be split edgewise before scalarization.

We begin with the standard tree gauge fixing.

\begin{lemma}\label{lem:gauge-fixing-top-coeff}
Let $\Lambda$ be a finite connected lattice, let $T\subset E(\Lambda)$ be a spanning tree, and let $\alpha:P(\Lambda)\to\widehat{\U(N)}$ be a plaquette decoration. Then the coefficient
\[
\widehat W_{\Lambda,\Lc}(\alpha)
=\int_{\U(N)^{E(\Lambda)}}W_{\Lambda,\Lc}(U)
\prod_{p\in P(\Lambda)}\chi_{\alpha_p}(U_{\partial p})\,dU
\]
is unchanged by a rooted-tree gauge fixing sending every tree-edge holonomy to the identity. Consequently, after gauge fixing the only integration variables are $(U_e)_{e\in E(\Lambda)\setminus T}$.
\end{lemma}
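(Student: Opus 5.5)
The plan is to use gauge invariance of the integrand together with the invariance of the Haar measure under the induced change of variables. For a gauge transformation $g=(g_v)_{v\in V(\Lambda)}\in\U(N)^{V(\Lambda)}$, set $(g\cdot U)_e:=g_{s(e)}U_e g_{t(e)}^{-1}$ for an edge $e$ from $s(e)$ to $t(e)$ (with the convention $U_{e^{-1}}=U_e^{-1}$). Then for any closed path $c$ based at $v$ one has $(g\cdot U)_c=g_vU_cg_v^{-1}$. Since traces and irreducible characters are class functions, both $W_{\Lambda,\Lc}(U)$ and each factor $\chi_{\alpha_p}(U_{\partial p})$ are invariant under $U\mapsto g\cdot U$. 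The integrand of $\widehat W_{\Lambda,\Lc}(\alpha)$ is therefore a gauge-invariant function $F$ on $\U(N)^{E(\Lambda)}$. It is a continuous function, and hence bounded and integrable.

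Next I would construct the gauge fixing explicitly. Fix a root $v_0$ and, for each vertex $v$, let $\gamma_v$ be the unique path in $T$ from $v_0$ to $v$. Define $g_v(U):=U_{\gamma_v}^{-1}$, with $g_{v_0}=1$, so that $g_v(U)$ depends only on the tree variables $(U_e)_{e\in T}$. For a tree edge $e$ from $v$ to $w$, we have $\gamma_w=\gamma_v e$ up to backtracking. Hence $(g(U)\cdot U)_e=U_{\gamma_v}^{-1}U_eU_{\gamma_w}=1$. For a non-tree edge, $(g(U)\cdot U)_e=U_{\gamma_v}^{-1}U_eU_{\gamma_w}=U_{\gamma_v e\gamma_w^{-1}}=:V_e$, which is the holonomy of the fundamental loop of $e$. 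Gauge invariance, applied pointwise with the $U$-dependent gauge $g(U)$, gives
\[
F(U)=F\bigl((1)_{e\in T},(V_e)_{e\notin T}\bigr).
\]

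The remaining step is to show that the pushforward of Haar measure on $\U(N)^{E(\Lambda)}$ under $U\mapsto (V_e)_{e\notin T}$ is Haar measure on $\U(N)^{E(\Lambda)\setminus T}$. I would argue this with Fubini: condition on the tree variables $(U_e)_{e\in T}$. For fixed tree variables, each map $U_e\mapsto U_{\gamma_v}^{-1}U_eU_{\gamma_w}$ is left and right multiplication by fixed elements, so it preserves Haar measure by bi-invariance. These maps act on distinct independent coordinates, so the conditional law of $(V_e)_{e\notin T}$ is Haar measure for every value of the tree variables. Integrating out the tree variables, whose total mass is $1$ since Haar measure is normalized, then yields
\[
\widehat W_{\Lambda,\Lc}(\alpha)=\int_{\U(N)^{E(\Lambda)\setminus T}}F\bigl((1)_{T},(V_e)\bigr)\,\prod_{e\notin T}dV_e .
\]
This is the claimed statement, and after this reduction the only integration variables are the non-tree holonomies.

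The only point requiring care is orientation bookkeeping. I would have to check that the dependent gauge $g(U)$ depends only on tree variables, which is true by construction, so that the Fubini argument is legitimate. I would also check that every lattice edge has a consistent reversed-edge convention.
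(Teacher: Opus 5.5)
Your route is the paper's: root the tree, gauge by tree holonomies along $\gamma_v$, use that traces and characters are class functions, and integrate out the tree variables. Your Fubini step is a more careful version of what the paper calls ``successive left and right translations''. The map $U\mapsto g(U)\cdot U$ collapses the tree coordinates, so it is not itself a measure-preserving bijection of $\U(N)^{E(\Lambda)}$. Conditioning on the tree variables, where each $U_e\mapsto V_e$ becomes a fixed bi-translation, is the right way to make it precise.

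However, your choice of gauge has an inversion error, and as written the key identity fails. You use the action $(g\cdot U)_e=g_{s(e)}U_eg_{t(e)}^{-1}$, and your identity $(g\cdot U)_c=g_vU_cg_v^{-1}$ forces the left-to-right holonomy convention. Then $U_{\gamma_w}=U_{\gamma_v}U_e$ for a tree edge $e$ from $v$ to $w$, so $g_v=U_{\gamma_v}^{-1}$ gives
\[
(g(U)\cdot U)_e=U_{\gamma_v}^{-1}U_eU_{\gamma_v}U_e .
\]
This equals $U_e^2\neq 1$ already for an edge leaving the root. Likewise, $U_{\gamma_v}^{-1}U_eU_{\gamma_w}$ is not $U_{\gamma_ve\gamma_w^{-1}}$, which equals $U_{\gamma_v}U_eU_{\gamma_w}^{-1}$.

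The fix is to take $g_v(U):=U_{\gamma_v}$, which is exactly the paper's choice. Tree edges then map to $U_{\gamma_v}U_e(U_{\gamma_v}U_e)^{-1}=1$, and non-tree edges map to $V_e=U_{\gamma_v}U_eU_{\gamma_w}^{-1}=U_{\gamma_ve\gamma_w^{-1}}$. The gauge $g$ still depends only on tree variables, so the rest of your argument, including the Fubini step, goes through unchanged.
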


\begin{proof}
Choose a root vertex $v_0$ of $T$. For every vertex $v$, let $\gamma_v$ be the unique oriented path in $T$ joining $v_0$ to $v$, and let $g_v$ be the ordered product of the tree-edge variables along $\gamma_v$, with inverse matrices when an edge is traversed against its chosen orientation. Apply the gauge transformation
\[
U_e\longmapsto g_{e^-}U_eg_{e^+}^{-1}
\]
to every edge variable. By construction, every transformed tree-edge variable is $I_N$. The product Haar measure is invariant under the successive left and right translations used in this change of variables. Each plaquette holonomy is conjugated, and each loop holonomy is conjugated at its base point; hence every character and every trace in the integrand is unchanged. Integrating out the now-trivial tree variables gives the announced gauge-fixed integral.
\end{proof}

Fix a plaquette $p\in P(\Lambda)$. In the rooted-tree gauge, write its boundary holonomy as the cyclic word
\begin{equation}\label{eq:gauge-fixed-plaquette-word}
\partial^Tp=e_{p,1}^{\varepsilon_{p,1}}\cdots e_{p,b_p}^{\varepsilon_{p,b_p}},
\qquad e_{p,r}\in E(\Lambda)\setminus T,
\quad \varepsilon_{p,r}\in\{+1,-1\}.
\end{equation}
The integer $b_p$ is the number of non-tree occurrences remaining in the gauge-fixed boundary word. We also write $e\in\partial^Tp$ when $e$ occurs at least once in~\eqref{eq:gauge-fixed-plaquette-word}; multiplicities are kept in the occurrence sets defined below.

Write
\[
\alpha_p=[\lambda_p^+,\lambda_p^-]_N,
\qquad n_p^\pm=|\lambda_p^\pm|,
\qquad r_p=n_p^++n_p^-,
\]
and
\[
T_{\alpha_p}=V^{\otimes n_p^+}\otimes(V^*)^{\otimes n_p^-},
\qquad P_{\alpha_p}=P_N^{[\lambda_p^+,\lambda_p^-]}.
\]
By~\eqref{eq:Koike_character},
\begin{equation}\label{eq:plaquette-character-projector}
\chi_{\alpha_p}(U_{\partial p})
=\Tr_{T_{\alpha_p}}\!\left(
P_{\alpha_p}\rho_{n_p^+,n_p^-}(U_{e_{p,1}}^{\varepsilon_{p,1}})\cdots
\rho_{n_p^+,n_p^-}(U_{e_{p,b_p}}^{\varepsilon_{p,b_p}})
\right).
\end{equation}
Since $P_{\alpha_p}$ is a $\U(N)$-equivariant projection, it commutes with every factor in the group representation and satisfies $P_{\alpha_p}^2=P_{\alpha_p}$. We may therefore insert a copy of $P_{\alpha_p}$ between every pair of consecutive letters in~\eqref{eq:plaquette-character-projector}. Expanding each copy with Corollary~\ref{cor:PN-nm}, or equivalently~\eqref{eq:PN-nm}, gives
\begin{equation}\label{eq:local-brauer-expansion-sec3}
\chi_{\alpha_p}(U_{\partial p})
=\sum_{\tau_p\in\mathcal B_{n_p^+,n_p^-}^{b_p}}
 c_N^{\alpha_p}(\tau_p)
\Tr_{T_{\alpha_p}}\!\left(
\prod_{r=1}^{b_p}
\rho_N(\tau_{p,r})\rho_{n_p^+,n_p^-}(U_{e_{p,r}}^{\varepsilon_{p,r}})
\right),
\end{equation}
where $\tau_p=(\tau_{p,1},\ldots,\tau_{p,b_p})$ and
\[
c_N^{\alpha_p}(\tau_p)=\prod_{r=1}^{b_p}c_N^{\alpha_p}(\tau_{p,r}).
\]

We next reorganize the tensor factors by incidences. Put
\[
\kappa_p(u)=
\begin{cases}
+,&1\le u\le n_p^+,\\
-,&n_p^+<u\le r_p,
\end{cases}
\qquad V^+=V,\quad V^-=V^*.
\]
For a boundary occurrence $(r,u)$, define its effective matrix-entry type by
\[
\eta_{p,r,u}=
\begin{cases}
\kappa_p(u),&\varepsilon_{p,r}=+1,\\
-\kappa_p(u),&\varepsilon_{p,r}=-1,
\end{cases}
\qquad -(+)=-,\quad -(-)=+.
\]
This convention simply records whether the corresponding matrix coefficient is a coefficient of $U_{e_{p,r}}$ or of $\overline{U_{e_{p,r}}}$; any static transposition of the input/output indices caused by an inverse occurrence is absorbed into the deterministic contraction operator below.

For every incidence $e\in\partial^Tp$, define the ordered occurrence set
\[
S_{p,e}:=\{(r,u):1\le r\le b_p,\ 1\le u\le r_p,\ e_{p,r}=e\}
\]
and the incidence space
\begin{equation}\label{eq:incidence-space-sec3}
\mathcal H_{p,e}(\tau_p;\alpha_p):=
\bigotimes_{(r,u)\in S_{p,e}}V^{\eta_{p,r,u}}.
\end{equation}
The notation keeps $\tau_p$ because the coordinate operator placed on this space will depend on the Brauer tuple, although the underlying list of $V/V^*$ factors is determined by the boundary word and $\alpha_p$. Set
\begin{equation}\label{eq:plaquette-incidence-product}
\mathcal K_p(\tau_p;\alpha_p)
:=\bigotimes_{e\in\partial^Tp}\mathcal H_{p,e}(\tau_p;\alpha_p).
\end{equation}

Let us make explicit why the expression~\eqref{eq:local-brauer-expansion-sec3} has the required form on~\eqref{eq:plaquette-incidence-product}. If $H$ is a finite-dimensional vector space and $m\ge1$, let $\operatorname{Cyc}^{(m)}_H$ be the cyclic permutation on $H^{\otimes m}$ characterized by
\[
\Tr_{H^{\otimes m}}\!\left(\operatorname{Cyc}^{(m)}_H(A_1\otimes\cdots\otimes A_m)\right)
=\Tr_H(A_1\cdots A_m).
\]
Applying this identity with $H=T_{\alpha_p}$ turns the trace in~\eqref{eq:local-brauer-expansion-sec3} into a trace on $T_{\alpha_p}^{\otimes b_p}$. Reorder the $r_pb_p$ elementary tensor slots so that all slots carrying the same edge label are adjacent. The tensor product of the group-dependent factors then becomes
\[
\bigotimes_{e\in\partial^Tp}\pi^N_{p,e,\tau_p}(U_e),
\]
where $\pi^N_{p,e,\tau_p}$ is the tensor product of the fundamental and dual actions on the factors of~\eqref{eq:incidence-space-sec3}. The cyclic permutation, the Brauer diagrams and the static index transpositions are independent of the gauge variables; after the same reordering, they form a single operator
\[
\Phi^N_{p,\tau_p}\in\End\bigl(\mathcal K_p(\tau_p;\alpha_p)\bigr).
\]
Consequently,
\begin{equation}\label{eq:plaquette-operator-form}
\chi_{\alpha_p}(U_{\partial p})
=\sum_{\tau_p}c_N^{\alpha_p}(\tau_p)
\Tr_{\mathcal K_p(\tau_p;\alpha_p)}\!\left[
\Phi^N_{p,\tau_p}
\left(\bigotimes_{e\in\partial^Tp}\pi^N_{p,e,\tau_p}(U_e)\right)
\right].
\end{equation}

Formula~\eqref{eq:plaquette-operator-form} is the point at which we differ from a naive edgewise factorization. The operator $\Phi^N_{p,\tau_p}$ may couple several incidence spaces, and in general there is no reason for it to be a single tensor product over $e$. It is, however, an element of the finite-dimensional tensor product
\[
\End\bigl(\mathcal K_p(\tau_p;\alpha_p)\bigr)
\simeq
\bigotimes_{e\in\partial^Tp}\End\bigl(\mathcal H_{p,e}(\tau_p;\alpha_p)\bigr).
\]
Choose once and for all, for every incidence space appearing here, a finite basis of its endomorphism algebra. Expanding $c_N^{\alpha_p}(\tau_p)\Phi^N_{p,\tau_p}$ in the induced tensor-product basis gives the finite identity
\begin{equation}\label{eq:finiteN-decomp-cphi}
c_N^{\alpha_p}(\tau_p)\Phi^N_{p,\tau_p}
=\sum_{\iota_{\partial p}}
 a_p^N(\tau_p,\iota_{\partial p};\alpha_p)
\bigotimes_{e\in\partial^Tp}
C^N_{p,e}(\tau_p,\iota_{p,e};\alpha_p).
\end{equation}
Here $\iota_{\partial p}=(\iota_{p,e})_{e\in\partial^Tp}$ ranges over a finite set of coordinate indices. This finite tensor-product expansion, rather than any factorization of $\Phi^N_{p,\tau_p}$ itself, is what produces the local channels.

\begin{definition}\label{def:local-resolution}
A \emph{local plaquette resolution} of the label $\alpha_p$ is a pair
\[
r_p=(\tau_p,\iota_{\partial p})
\]
for which the coefficient $a_p^N(\tau_p,\iota_{\partial p};\alpha_p)$ in~\eqref{eq:finiteN-decomp-cphi} is non-zero. We denote the finite set of such resolutions by $\mathcal R_p(\alpha_p)$ and write
\[
a_p^N(r_p;\alpha_p):=a_p^N(\tau_p,\iota_{\partial p};\alpha_p),
\qquad
C^N_{p,e}(r_p;\alpha_p):=C^N_{p,e}(\tau_p,\iota_{p,e};\alpha_p),
\]
\[
\mathcal H_{p,e}(r_p;\alpha_p):=\mathcal H_{p,e}(\tau_p;\alpha_p),
\qquad
\pi^N_{p,e,r_p}:=\pi^N_{p,e,\tau_p}.
\]
The \emph{incidence channel} associated with $r_p$ at $(p,e)$ is the finite record
\[
\Gamma_{p,e}(r_p;\alpha_p)
:=\bigl(\mathcal H_{p,e}(r_p;\alpha_p),C^N_{p,e}(r_p;\alpha_p)\bigr),
\]
together with the ordered list of $V/V^*$ types in the incidence space. The \emph{boundary-channel tuple} is
\[
\Gamma_{\partial p}(r_p;\alpha_p)
:=\bigl(\Gamma_{p,e}(r_p;\alpha_p)\bigr)_{e\in\partial^Tp}.
\]
\end{definition}

\begin{figure}[h!]
    \centering
    \includegraphics[width=0.7\linewidth]{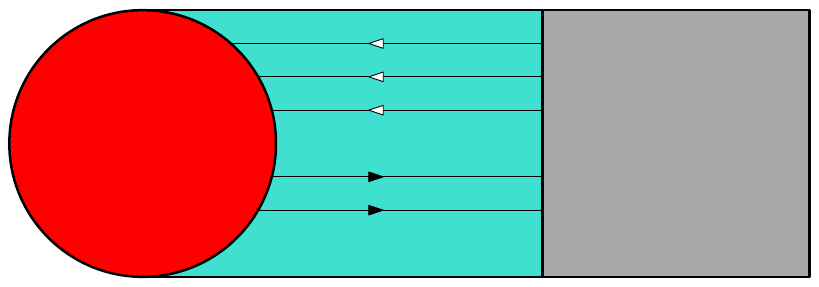}
    \caption{\small A boundary channel (in blue) between a non-tree edge (in grey) and an adjacent plaquette (in red). The oriented lines represent the incidence tensor legs: $V$ for black arrows and $V^*$ for white arrows. The deterministic operator $C^N_{p,e}$ records the corresponding incidence coordinate of the plaquette contraction.}
    \label{fig:local-channel}
\end{figure}

The preceding construction can now be packaged as a local resolution statement.

\begin{proposition}\label{prop:local-gauge-fixed-resolutions}
For every plaquette $p$ and every label $\alpha_p\in\widehat{\U(N)}$, there exists a finite set $\mathcal R_p(\alpha_p)$ of local plaquette resolutions such that
\begin{align}\label{eq:local-resolution-character}
\chi_{\alpha_p}(U_{\partial p})
=\sum_{r_p\in\mathcal R_p(\alpha_p)}a_p^N(r_p;\alpha_p)
\Tr_{\mathcal K_p(r_p;\alpha_p)}\!\left[
\left(\bigotimes_{e\in\partial^Tp}C^N_{p,e}(r_p;\alpha_p)\right)
\left(\bigotimes_{e\in\partial^Tp}\pi^N_{p,e,r_p}(U_e)\right)
\right],
\end{align}
where $\mathcal K_p(r_p;\alpha_p)=\bigotimes_{e\in\partial^Tp}\mathcal H_{p,e}(r_p;\alpha_p)$. In particular, all dependence on a non-tree variable $U_e$ is carried by the representation on the corresponding incidence space, while all contractions linking different incidences have been transferred to the finite sum over tensor-product coordinate operators.
\end{proposition}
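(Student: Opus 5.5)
The proof assembles the identities already derived just before Definition~\ref{def:local-resolution}; the only real work is to check that each step is an exact finite identity. The plan has three steps.

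\emph{Step 1: projector insertion and Brauer expansion.} Start from~\eqref{eq:plaquette-character-projector}, which is~\eqref{eq:Koike_character} applied to the gauge-fixed word $\partial^Tp$. Gauge fixing is justified by Lemma~\ref{lem:gauge-fixing-top-coeff}: the plaquette holonomy is only conjugated, and the character is central. Because $P_{\alpha_p}$ is an equivariant idempotent, we insert $b_p$ copies of it, exactly as in the proof of Theorem~\ref{prop:character_weingarten}. Expanding each copy with Corollary~\ref{cor:PN-nm} gives~\eqref{eq:local-brauer-expansion-sec3}, a finite sum over $\tau_p\in\mathcal B_{n_p^+,n_p^-}^{b_p}$. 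If $b_p=0$, the trace is the constant $d_{\alpha_p}$ and the statement is trivial, with a single empty resolution.

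\emph{Step 2: cyclic unfolding and regrouping by incidences.} Use the identity $\Tr_H(A_1\cdots A_m)=\Tr_{H^{\otimes m}}(\operatorname{Cyc}^{(m)}_H(A_1\otimes\cdots\otimes A_m))$ with $H=T_{\alpha_p}$ and $A_r=\rho_N(\tau_{p,r})\rho(U_{e_{p,r}}^{\varepsilon_{p,r}})$. Each $A_r$ is the product of a constant operator $\rho_N(\tau_{p,r})$ with a product-form group operator on $H$. Hence $\bigotimes_r A_r$ factors as a constant operator times $\bigotimes_{r,u}(\text{elementary action on slot }(r,u))$.

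For an inverse occurrence, the elementary action $U^{-1}=U^*$ on $V$ acts as $\overline{U}$ after transposition. We therefore write it as the $V^*$-type action composed with a fixed transpose map. These transposes are constant, linear and partial-trace compatible once written through index contractions, so they are absorbed into $\Phi^N_{p,\tau_p}$. This is the step needing the most care. I would verify it by writing every trace in coordinates, as in the proof of Theorem~\ref{prop:character_weingarten}, so that all "transposition" subtleties become index relabellings.

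Then apply the slot-permutation unitary $\Sigma$ that groups slots by edge label. Conjugating by $\Sigma$ preserves the trace, and it turns the group part into $\bigotimes_{e}\pi^N_{p,e,\tau_p}(U_e)$ on $\mathcal K_p(\tau_p;\alpha_p)$. This yields~\eqref{eq:plaquette-operator-form}, with $\Phi^N_{p,\tau_p}$ the conjugated product of the cyclic permutation, the Brauer operators and the transposes.

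\emph{Step 3: tensor-basis expansion and discarding zeros.} Since $\End(\bigotimes_e\mathcal H_{p,e})=\bigotimes_e\End(\mathcal H_{p,e})$ in finite dimension, expand $c_N^{\alpha_p}(\tau_p)\Phi^N_{p,\tau_p}$ in a product basis, giving~\eqref{eq:finiteN-decomp-cphi}. Substitute this into the trace and use linearity to exchange the finite sums. Terms with vanishing coefficient $a_p^N$ are dropped. The remaining pairs form $\mathcal R_p(\alpha_p)$ as in Definition~\ref{def:local-resolution}, which is finite because both $\mathcal B_{n_p^+,n_p^-}^{b_p}$ and the coordinate sets are finite. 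This gives~\eqref{eq:local-resolution-character}.

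The final assertion then holds by construction: $U_e$ appears only through $\pi^N_{p,e,r_p}$, and every cross-incidence coupling now sits in the scalar coefficients $a_p^N$.

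The main obstacle I expect is the honest bookkeeping in Step 2, namely that the inverse occurrences and the cyclic unfolding truly produce a $U$-independent operator. The plan is to carry out this step entirely in coordinates.
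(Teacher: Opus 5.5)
Your proposal is correct and follows the paper's proof step for step. You insert and Brauer-expand the projector to get \eqref{eq:local-brauer-expansion-sec3}, unfold the trace cyclically and reorder slots by edge label to get \eqref{eq:plaquette-operator-form}, then expand $c_N^{\alpha_p}(\tau_p)\Phi^N_{p,\tau_p}$ in a tensor-product basis and group $(\tau_p,\iota_{\partial p})$ into $r_p$. The transposition bookkeeping you flag, which the paper only asserts, is settled by the partial-transpose identity $\Tr(\Phi\,X^{\top_S})=\Tr(\Phi^{\top_S}X)$, where $\top_S$ denotes transposition on the inverse-occurrence slots, so those static index reversals are indeed absorbed into a $U$-independent operator.
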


\begin{proof}
Starting from~\eqref{eq:plaquette-character-projector}, insert the mixed Schur--Weyl projector between consecutive boundary letters and expand it in walled Brauer diagrams. This gives the finite sum~\eqref{eq:local-brauer-expansion-sec3}. The cyclic-trace identity and the canonical reordering by edge labels yield~\eqref{eq:plaquette-operator-form}. Finally expand the deterministic operator $c_N^{\alpha_p}(\tau_p)\Phi^N_{p,\tau_p}$ in the tensor-product basis~\eqref{eq:finiteN-decomp-cphi}. Substituting this expansion into~\eqref{eq:plaquette-operator-form} and grouping the pair $(\tau_p,\iota_{\partial p})$ into the single resolution variable $r_p$ gives~\eqref{eq:local-resolution-character}.
\end{proof}

The resolution is auxiliary rather than canonical term by term. Outside the stable range, the coefficients in the Brauer expansion~\eqref{eq:PN-nm} are not uniquely determined by the spanning family of diagrams, and the tensor-product bases used in~\eqref{eq:finiteN-decomp-cphi} are also choices. Different choices therefore produce different local channel sets and individual amplitudes. What is canonical is their resummation: Proposition~\ref{prop:local-gauge-fixed-resolutions} always reconstructs the same character, hence the subsequent full channel sum always reconstructs the same topological coefficient.

For later regrouping, define the finite set of admissible boundary-channel tuples
\[
\mathcal C_p(\alpha_p):=
\{\Gamma_{\partial p}(r_p;\alpha_p):r_p\in\mathcal R_p(\alpha_p)\}.
\]
For $\eta_{\partial p}\in\mathcal C_p(\alpha_p)$ set
\begin{equation}\label{eq:compressed-plaquette-amplitude}
A_p^N(\eta_{\partial p};\alpha_p)
:=\sum_{\substack{r_p\in\mathcal R_p(\alpha_p)\\
\Gamma_{\partial p}(r_p;\alpha_p)=\eta_{\partial p}}}
a_p^N(r_p;\alpha_p).
\end{equation}
Equality of boundary-channel tuples means equality of the incidence spaces, their ordered $V/V^*$ types and the coordinate operators $C^N_{p,e}$; hence all edge data appearing below are unchanged by this regrouping.

\subsection{Edgewise Haar projections and compression}\label{subsec:edgewise-haar}

We now add the Wilson insertion and perform the edge integrations. The role of this subsection is to justify carefully the locality statement in Theorem~\ref{thm:spin-foam}. Before scalarization, neither a plaquette trace nor a Wilson trace need factor as a product of functions of individual edge variables. After the finite tensor-product expansions of the preceding subsection and the analogous expansion of the cyclic Wilson contraction, however, the full integrand becomes a tensor product over non-tree edges. At that stage Fubini's theorem applies edge by edge, and each Haar integral is simply the orthogonal projection onto an invariant subspace.

Let the gauge-fixed word of the $i$-th loop be
\[
\ell_i=f_{i,1}^{\delta_{i,1}}\cdots f_{i,n_i}^{\delta_{i,n_i}},
\qquad f_{i,r}\in E(\Lambda)\setminus T,
\quad \delta_{i,r}\in\{+1,-1\}.
\]
As in the plaquette construction, an occurrence with exponent $+1$ contributes a fundamental matrix-entry leg and an occurrence with exponent $-1$ contributes a dual leg. For every non-tree edge $e$ define the Wilson incidence space
\begin{equation}\label{eq:wilson-edge-space}
\mathcal W_e(\Lc):=
\bigotimes_{(i,r):f_{i,r}=e}V^{\delta_{i,r}},
\end{equation}
with the convention $\mathcal W_e(\Lc)=\C$ if $e$ is not traversed. The corresponding representation is
\begin{equation}\label{eq:wilson-edge-rep}
\pi^W_{e,N}(g):=
\bigotimes_{(i,r):f_{i,r}=e}g^{\delta_{i,r}},
\end{equation}
where the notation means the fundamental action on $V$ and the dual action on $V^*$; the static reversal of matrix indices at inverse occurrences is again included in the contraction tensor.

The product of loop traces is obtained from these matrix-entry legs by a fixed cyclic contraction. More precisely, the cyclic identity used above for the plaquette traces, applied separately to the loops and followed by the reordering of all loop occurrences according to their non-tree edge label, gives a deterministic operator
\[
C^W_{\Lambda,\Lc,N}
\in
\End\!\left(\bigotimes_{e\in D(\Lc)}\mathcal W_e(\Lc)\right)
\]
such that the gauge-fixed Wilson functional is the trace of $C^W_{\Lambda,\Lc,N}$ against the tensor product of the representations~\eqref{eq:wilson-edge-rep}. Loop components which reduce to the trivial word after tree gauge fixing contribute the scalar $N$ and may be removed at this stage; we keep these explicit scalar factors as part of the Wilson contraction coefficient.

Since the defect tensor product is finite-dimensional, choose a decomposable expansion
\begin{equation}\label{eq:wilson-decomposable-expansion}
C^W_{\Lambda,\Lc,N}
=\sum_{\omega\in\Omega^W_{\Lambda,\Lc}}
\bigotimes_{e\in E(\Lambda)\setminus T}C^W_{e,\omega,N},
\end{equation}
where $C^W_{e,\omega,N}=1$ on the one-dimensional space $\mathcal W_e(\Lc)=\C$ whenever $e\notin D(\Lc)$. Scalar coefficients of a decomposable term are absorbed into one of its non-trivial defect factors. If $D(\Lc)=\varnothing$, the Wilson functional is a constant and the conclusions below are immediate after multiplying the vacuum coefficient by this constant. For $\Lc=\varnothing$,~\eqref{eq:wilson-decomposable-expansion} consists of the single scalar term $1$.

\begin{lemma}\label{lem:wilson-scalarization}
After tree gauge fixing, the Wilson functional admits a finite expansion of the form
\begin{equation}\label{eq:wilson-scalarized-trace}
W^T_{\Lambda,\Lc}(U)
=\sum_{\omega\in\Omega^W_{\Lambda,\Lc}}
\prod_{e\in E(\Lambda)\setminus T}
\Tr_{\mathcal W_e(\Lc)}\!\left(
C^W_{e,\omega,N}\pi^W_{e,N}(U_e)
\right),
\end{equation}
where the scalar factors $N$ contributed by loop components which become trivial in the tree gauge are included in the coefficients of the decomposable expansion~\eqref{eq:wilson-decomposable-expansion}. Moreover, the factors in~\eqref{eq:wilson-scalarized-trace} are equal to $1$ away from $D(\Lc)$.
\end{lemma}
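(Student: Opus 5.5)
The plan is to mirror, for the loop traces, the argument already used for the plaquette characters in Proposition~\ref{prop:local-gauge-fixed-resolutions}, with no projector insertions since each loop carries the fundamental representation. By Lemma~\ref{lem:gauge-fixing-top-coeff}, after rooted-tree gauge fixing each loop holonomy is conjugate to the word $U_{f_{i,1}}^{\delta_{i,1}}\cdots U_{f_{i,n_i}}^{\delta_{i,n_i}}$ in the non-tree variables. Since traces are conjugation invariant, $W^T_{\Lambda,\Lc}(U)=\prod_i\Tr(U_{f_{i,1}}^{\delta_{i,1}}\cdots U_{f_{i,n_i}}^{\delta_{i,n_i}})$. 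If some word is empty ($n_i=0$), its trace is $N$. We record this factor as a scalar and drop the loop, so from now on every remaining loop has $n_i\ge1$.

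For each remaining loop, we apply the cyclic identity $\Tr_V(A_1\cdots A_{n_i})=\Tr_{V^{\otimes n_i}}(\operatorname{Cyc}^{(n_i)}_V(A_1\otimes\cdots\otimes A_{n_i}))$. An inverse occurrence $U^{-1}=\overline{U}^{\,t}$ is rewritten as the dual action on $V^*$; the index transposition this requires is static, so we absorb it into a fixed operator. Taking the tensor product over $i$ expresses the Wilson functional as a trace, over $\bigotimes_{i,r}V^{\delta_{i,r}}$, of a deterministic operator composed with $\bigotimes_{i,r}U_{f_{i,r}}^{\delta_{i,r}}$. Reordering the tensor slots by their edge label $e$ is a permutation of factors. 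Conjugating by this permutation leaves the trace unchanged and groups the group-dependent factors into $\bigotimes_{e\in D(\Lc)}\pi^W_{e,N}(U_e)$, acting on $\bigotimes_{e}\mathcal W_e(\Lc)$. The cyclic shifts, the transpositions, the reordering and the scalar factors $N$ together form $C^W_{\Lambda,\Lc,N}$.

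Next, $\End(\bigotimes_{e\in D(\Lc)}\mathcal W_e)\simeq\bigotimes_e\End(\mathcal W_e)$ is finite-dimensional. Expanding $C^W_{\Lambda,\Lc,N}$ in a product basis therefore gives the finite decomposable expansion \eqref{eq:wilson-decomposable-expansion}, with each scalar coefficient absorbed into one nontrivial factor. For a decomposable term, $\Tr_{\bigotimes_e\mathcal W_e}\bigl((\bigotimes_e C_e)(\bigotimes_e\pi_e(U_e))\bigr)=\prod_e\Tr_{\mathcal W_e}(C_e\pi_e(U_e))$, which yields \eqref{eq:wilson-scalarized-trace} by linearity. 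For $e\notin D(\Lc)$ we have $\mathcal W_e=\C$, $\pi_e=1$ and $C_e=1$, so those factors equal $1$. The degenerate cases $D(\Lc)=\varnothing$ and $\Lc=\varnothing$ give a constant, respectively the single term $1$, as already noted.

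The only delicate step is the bookkeeping for inverse occurrences. One must check that the relevant matrix entry is $\overline{U}$ with row and column indices swapped, and that this swap is static and independent of $U$, so it can be absorbed into the contraction. I would verify this entrywise: $(U^{-1})_{jk}=\overline{U_{kj}}$, which is the $(k,j)$ entry of the dual action $\overline{U}$ on $V^*$ in the dual basis. The rest of the argument is purely multilinear algebra.
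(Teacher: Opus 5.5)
Your proposal is correct and follows essentially the same route as the paper's proof. Both apply the cyclic-trace identity to each non-trivial gauge-fixed loop, tensor over loops, and reorder the occurrence slots by edge label, so that the group-dependent part becomes $\bigotimes_e\pi^W_{e,N}(U_e)$ against a static operator $C^W_{\Lambda,\Lc,N}$ that also carries the factors $N$ from trivialized loops; both then expand in a product basis, use multiplicativity of the trace, and note trivial factors on untraversed edges.

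Your entrywise check of inverse occurrences only makes explicit the static index reversal that the paper absorbs into $C^W_{\Lambda,\Lc,N}$. Strictly, a partial transpose is not composition with a fixed operator, but the resulting multilinear functional is linear on $\End\bigl(\bigotimes_e\mathcal W_e(\Lc)\bigr)$. It is therefore of the form $X\mapsto\Tr(C^W_{\Lambda,\Lc,N}X)$, which is all either argument needs.
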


\begin{proof}
For a non-trivial gauge-fixed loop $\ell_i=f_{i,1}^{\delta_{i,1}}\cdots f_{i,n_i}^{\delta_{i,n_i}}$, use the cyclic-trace identity to write $\Tr(U_{\ell_i})$ as a trace on the tensor product of the $n_i$ elementary $V/V^*$ occurrence spaces. Taking the tensor product over all loop components turns the product of traces into one trace on the tensor product over all loop occurrences. Reorder these factors according to their edge labels. The group-dependent part is then exactly $\bigotimes_e\pi^W_{e,N}(U_e)$, while all cyclic permutations and static index reversals form the deterministic operator $C^W_{\Lambda,\Lc,N}$. The scalar factors $N$ coming from gauge-fixed trivial loop components have already been incorporated into $C^W_{\Lambda,\Lc,N}$ and hence into the coefficients of~\eqref{eq:wilson-decomposable-expansion}. Expanding this operator and using multiplicativity of the trace on tensor products gives the exact identity~\eqref{eq:wilson-scalarized-trace}. If an edge is not traversed, its Wilson space is $\C$, its representation and coordinate operator are both the identity, and the corresponding trace equals $1$.
\end{proof}

We now combine one plaquette resolution at every plaquette with one Wilson scalarization term. Let
\[
r=(r_p)_{p\in P(\Lambda)}\in
\mathcal R(\alpha):=\prod_{p\in P(\Lambda)}\mathcal R_p(\alpha_p),
\qquad \omega\in\Omega^W_{\Lambda,\Lc}.
\]
For a non-tree edge $e$, collect all incidence spaces adjacent to $e$ and the Wilson space at $e$:
\begin{equation}\label{eq:edge-hilbert-space}
\mathcal H_e(r,\omega;\alpha,\Lc)
:=\left(\bigotimes_{p:e\in\partial^Tp}\mathcal H_{p,e}(r_p;\alpha_p)\right)
\otimes\mathcal W_e(\Lc).
\end{equation}
The group acts through
\begin{equation}\label{eq:edge-representation}
\pi^{r,\omega}_{e,N}(g)
:=\left(\bigotimes_{p:e\in\partial^Tp}\pi^N_{p,e,r_p}(g)\right)
\otimes\pi^W_{e,N}(g),
\end{equation}
and the deterministic contraction operator is
\begin{equation}\label{eq:edge-contraction-operator}
C_{e,N}(r,\omega;\alpha,\Lc)
:=\left(\bigotimes_{p:e\in\partial^Tp}C^N_{p,e}(r_p;\alpha_p)\right)
\otimes C^W_{e,\omega,N}.
\end{equation}
Finally define the edge Haar projection
\begin{equation}\label{eq:edge-haar-projection}
\Pi_{e,N}(r,\omega;\alpha,\Lc)
:=\int_{\U(N)}\pi^{r,\omega}_{e,N}(g)\,dg
=\operatorname{Proj}_{\mathcal H_e(r,\omega;\alpha,\Lc)^{\U(N)}}.
\end{equation}
The second equality follows from the standard averaging argument for unitary representations of compact groups.

The diagrams in Figures~\ref{fig:local-channel-2} and~\ref{fig:local-channel-3} give a strand picture of the two situations. The individual pairings drawn there may be obtained by expanding the Haar projection with the ordinary Weingarten formula; for the present local model we keep the entire Haar average as the single operator~\eqref{eq:edge-haar-projection}.

\begin{figure}[h!]
    \centering
    \includegraphics[width=0.9\linewidth]{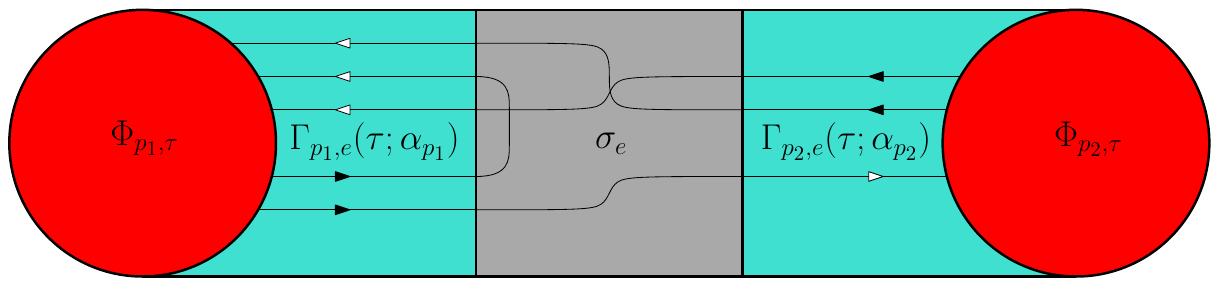}
    \caption{\small Schematic local channel at a non-tree edge $e$ without Wilson insertion. The incidence spaces and deterministic contraction coordinates are attached to the adjacent plaquettes. Haar integration couples all legs adjacent to $e$ through the projection onto the invariant subspace; expanding this projection in matrix entries recovers the usual finite sum over Weingarten pairings represented by the strands.}
    \label{fig:local-channel-2}
\end{figure}

\begin{figure}[h!]
    \centering
    \includegraphics[width=0.9\linewidth]{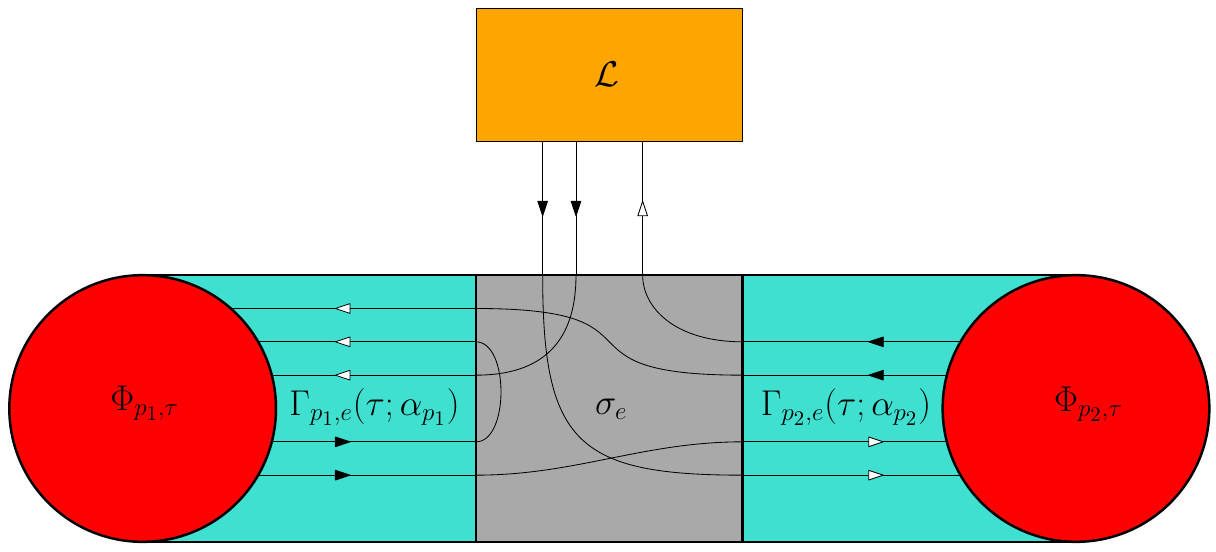}
    \caption{\small The same local picture with a Wilson insertion. Each traversal of $e$ or $e^{-1}$ contributes an additional $V$- or $V^*$-leg to the edge space. The Haar projection is therefore modified only at edge vertices belonging to the defect support $D(\Lc)$.}
    \label{fig:local-channel-3}
\end{figure}

The crucial factorization is now valid at the operator level.

\begin{lemma}\label{lem:edgewise-projection-factorization}
Fix $\alpha$, a global plaquette-resolution field $r\in\mathcal R(\alpha)$ and a Wilson scalarization index $\omega\in\Omega^W_{\Lambda,\Lc}$. The corresponding summand of the gauge-fixed integrand is
\begin{equation}\label{eq:correct-edgewise-factorization}
\prod_{e\in E(\Lambda)\setminus T}
\Tr_{\mathcal H_e(r,\omega;\alpha,\Lc)}\!\left(
C_{e,N}(r,\omega;\alpha,\Lc)\,
\pi^{r,\omega}_{e,N}(U_e)
\right).
\end{equation}
Consequently, its integral over the non-tree variables is
\begin{equation}\label{eq:correct-edgewise-integral}
\prod_{e\in E(\Lambda)\setminus T}
\Tr_{\mathcal H_e(r,\omega;\alpha,\Lc)}\!\left(
C_{e,N}(r,\omega;\alpha,\Lc)\,
\Pi_{e,N}(r,\omega;\alpha,\Lc)
\right).
\end{equation}
In particular, after the finite plaquette and Wilson scalarizations, each remaining Haar integration depends only on the data adjacent to the single edge vertex $e$.
\end{lemma}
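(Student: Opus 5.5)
The plan is to assemble the gauge-fixed integrand from the two finite scalarizations already established, namely Proposition~\ref{prop:local-gauge-fixed-resolutions} for each plaquette and Lemma~\ref{lem:wilson-scalarization} for the loop family. The resulting summand is a product of traces, one for each plaquette and one for each Wilson factor. Each trace is taken on a tensor product of incidence spaces, and its operator is a tensor product of local coordinate operators. The only real content is a regrouping of tensor factors by edge label, followed by Fubini.

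\emph{Step 1 (regrouping).} Fix $r=(r_p)_p$ and $\omega$. The summand is
\[
\prod_{p}a_p^N(r_p;\alpha_p)\,\Tr_{\mathcal K_p}\Bigl[\bigotimes_{e\in\partial^Tp}C^N_{p,e}\pi^N_{p,e,r_p}(U_e)\Bigr]\cdot\prod_{e}\Tr_{\mathcal W_e(\Lc)}\bigl(C^W_{e,\omega,N}\pi^W_{e,N}(U_e)\bigr).
\]
Multiplicativity of the trace on tensor products gives $\Tr_{\mathcal K_p}(\bigotimes_e X_e)=\prod_{e\in\partial^Tp}\Tr_{\mathcal H_{p,e}}(X_e)$. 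After this, the whole summand is a product over incidences $(p,e)$ and Wilson edges of scalar traces. Each of these traces involves only the single variable $U_e$.

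\emph{Step 2 (edgewise reassembly).} Group these scalar factors by $e\in E(\Lambda)\setminus T$ and use multiplicativity in the reverse direction on the space $\mathcal H_e(r,\omega;\alpha,\Lc)$ of \eqref{eq:edge-hilbert-space}. This gives
\[
\prod_{p:e\in\partial^Tp}\Tr_{\mathcal H_{p,e}}(C^N_{p,e}\pi^N_{p,e,r_p}(U_e))\cdot\Tr_{\mathcal W_e}(C^W_{e,\omega,N}\pi^W_{e,N}(U_e))=\Tr_{\mathcal H_e}\bigl(C_{e,N}\,\pi^{r,\omega}_{e,N}(U_e)\bigr),
\]
with $C_{e,N}$ and $\pi^{r,\omega}_{e,N}$ as in \eqref{eq:edge-contraction-operator}--\eqref{eq:edge-representation}. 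Note that $(A_1\otimes A_2)(B_1\otimes B_2)=A_1B_1\otimes A_2B_2$. Two kinds of factor need care. First, the scalar amplitudes $a_p^N(r_p;\alpha_p)$ carry no edge dependence. The statement calls \eqref{eq:correct-edgewise-factorization} ``the corresponding summand'', so I will either absorb these amplitudes into one coordinate operator $C^N_{p,e}$ for each plaquette (as was done for the Wilson coefficients), or state the identity up to the overall constant $\prod_p a_p^N$. Second, a plaquette whose gauge-fixed word is empty contributes only the constant $d_{\alpha_p}$, which is treated the same way. This yields \eqref{eq:correct-edgewise-factorization}.

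\emph{Step 3 (integration).} The integrand is now a finite product of continuous functions of distinct variables $U_e$, and the gauge-fixed measure is the product Haar measure by Lemma~\ref{lem:gauge-fixing-top-coeff}. Fubini therefore factorizes the integral into one integral for each edge. Within a single edge, trace and integral commute by linearity in finite dimension, so
\[
\int\Tr(C\pi(g))\,dg=\Tr\Bigl(C\int\pi(g)\,dg\Bigr)=\Tr(C\,\Pi_{e,N}),
\]
using \eqref{eq:edge-haar-projection}. This gives \eqref{eq:correct-edgewise-integral}, and the locality claim follows because $\mathcal H_e$, $C_{e,N}$ and $\Pi_{e,N}$ depend only on the data at incidences adjacent to $e$ and on $\mathcal W_e(\Lc)$.

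The main obstacle is bookkeeping rather than analysis. I must check that the reordering of tensor slots used to define $\mathcal K_p$ and the Wilson defect space is compatible with the edgewise regrouping. Concretely, the two tensor-product decompositions $\bigotimes_p\bigotimes_{e\in\partial^Tp}$ and $\bigotimes_e\bigotimes_{p\ni e}$ of the total space must differ only by a permutation of factors, which preserves traces of tensor products. The placement of the scalar amplitudes described in Step 2 also has to be fixed consistently.
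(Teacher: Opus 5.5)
Your proposal is correct and follows essentially the paper's route. You substitute the plaquette and Wilson scalarizations, regroup by non-tree edge using multiplicativity of the trace on tensor products, and then apply Fubini and commute the trace with the Haar integral to obtain $\Pi_{e,N}$. The paper instead assembles one global trace and permutes tensor factors; you split each plaquette trace into incidence-level scalar traces first, so the reordering obstacle you flag disappears (regrouping scalars is commutative), and treating $\prod_p a_p^N(r_p;\alpha_p)$ as an overall constant matches how the lemma is used in the proof of Theorem~\ref{thm:spin-foam}.
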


\begin{proof}
Substitute the plaquette expansion~\eqref{eq:local-resolution-character} at every plaquette and the Wilson expansion~\eqref{eq:wilson-scalarized-trace}. Fix one choice of all resolution indices and of $\omega$. Before reordering by edges, the corresponding term is the trace of the tensor product of all deterministic plaquette-coordinate operators and Wilson-coordinate operators against the tensor product of all group representations. Reorder the tensor factors so that all incidence and Wilson spaces carrying the same non-tree variable $U_e$ are adjacent. By~\eqref{eq:edge-hilbert-space}--\eqref{eq:edge-contraction-operator}, the term becomes the trace on $\bigotimes_e\mathcal H_e$ of
\[
\left(\bigotimes_e C_{e,N}\right)
\left(\bigotimes_e\pi^{r,\omega}_{e,N}(U_e)\right).
\]
The trace of a tensor product is the product of the traces, which gives~\eqref{eq:correct-edgewise-factorization}. This is precisely the step that would fail before the tensor-product scalarizations: the unsplit plaquette and Wilson contraction operators need not themselves be decomposable over the edges.

The Haar measure on the remaining variables is the product $\prod_{e\notin T}dU_e$. Applying Fubini to~\eqref{eq:correct-edgewise-factorization} gives a product of one-edge integrals. Linearity of the trace and~\eqref{eq:edge-haar-projection} yield
\[
\int_{\U(N)}
\Tr_{\mathcal H_e}\bigl(C_{e,N}\pi^{r,\omega}_{e,N}(U_e)\bigr)\,dU_e
=
\Tr_{\mathcal H_e}\bigl(C_{e,N}\Pi_{e,N}\bigr),
\]
which proves~\eqref{eq:correct-edgewise-integral}.
\end{proof}

We can now compress the auxiliary resolution labels. A \emph{scalar local-channel field} consists of a choice of a boundary-channel tuple $\Gamma_{\partial p}\in\mathcal C_p(\alpha_p)$ for every plaquette $p$, together with a Wilson scalarization index $\omega\in\Omega^W_{\Lambda,\Lc}$. We denote the finite set of all such compatible fields by $\mathcal C^\Lc_\Lambda(\alpha)$. At a non-tree edge $e$, the local datum $\Gamma_e$ collects the incidence channels $\Gamma_{p,e}$ of the adjacent plaquettes and the local Wilson factor $C^W_{e,\omega,N}$. Since equality of boundary-channel tuples includes the incidence spaces, their ordered types and their coordinate operators, the following definitions do not depend on the choice of representatives $r_p$ realizing the prescribed boundary tuples:
\[
\mathcal H_e(\Gamma_e;\alpha,\Lc)
:=\left(\bigotimes_{p:e\in\partial^Tp}\mathcal H_{p,e}(r_p;\alpha_p)\right)
\otimes\mathcal W_e(\Lc),
\]
\[
\Pi^\Gamma_{e,N}(\alpha,\Lc)
:=\int_{\U(N)}\pi^\Gamma_{e,N}(g)\,dg,
\qquad
C_{e,N}(\Gamma_e;\alpha,\Lc)
:=\left(\bigotimes_{p:e\in\partial^Tp}C^N_{p,e}(r_p;\alpha_p)\right)
\otimes C^W_{e,\omega,N}.
\]
We define the \emph{scalar edge kernel} by
\begin{equation}\label{eq:projection-edge-kernel}
K_e^N(\Gamma_e;\alpha,\Lc)
:=\Tr_{\mathcal H_e(\Gamma_e;\alpha,\Lc)}\!\left(
C_{e,N}(\Gamma_e;\alpha,\Lc)\Pi^\Gamma_{e,N}(\alpha,\Lc)
\right).
\end{equation}

We are now ready to prove the local-channel theorem in full detail.

\begin{proof}[Proof of Theorem~\ref{thm:spin-foam}]
Fix a plaquette decoration $\alpha:P(\Lambda)\to\widehat{\U(N)}$. By Lemma~\ref{lem:gauge-fixing-top-coeff},
\begin{equation}\label{eq:gauge-fixed-topological-coefficient-sec3}
\widehat W_{\Lambda,\Lc}(\alpha)
=\int_{\U(N)^{E(\Lambda)\setminus T}}
W^T_{\Lambda,\Lc}(U)
\prod_{p\in P(\Lambda)}\chi_{\alpha_p}(U_{\partial p})
\prod_{e\in E(\Lambda)\setminus T}dU_e.
\end{equation}
Insert Proposition~\ref{prop:local-gauge-fixed-resolutions} for each plaquette and Lemma~\ref{lem:wilson-scalarization} for the Wilson functional. Since every indexing set is finite, the sums may be moved outside the integral. We obtain a finite sum over
\[
r=(r_p)_{p\in P(\Lambda)}\in\mathcal R(\alpha)
\qquad\text{and}\qquad
\omega\in\Omega^W_{\Lambda,\Lc}
\]
whose coefficient contains the plaquette factor $\prod_{p\in P(\Lambda)}a_p^N(r_p;\alpha_p).$ For fixed $(r,\omega)$, Lemma~\ref{lem:edgewise-projection-factorization} evaluates the remaining integral as
\[
\prod_{e\in E(\Lambda)\setminus T}
\Tr_{\mathcal H_e(r,\omega;\alpha,\Lc)}\!\left(
C_{e,N}(r,\omega;\alpha,\Lc)
\Pi_{e,N}(r,\omega;\alpha,\Lc)
\right).
\]
Thus
\begin{align}\label{eq:precompressed-channel-sum}
\widehat W_{\Lambda,\Lc}(\alpha)
=\sum_{r\in\mathcal R(\alpha)}\sum_{\omega\in\Omega^W_{\Lambda,\Lc}}
\left(\prod_{p\in P(\Lambda)}a_p^N(r_p;\alpha_p)\right)
\left(\prod_{e\in E(\Lambda)\setminus T}
K_e^N(\Gamma_e(r,\omega);\alpha,\Lc)\right).
\end{align}
Here $\Gamma_e(r,\omega)$ denotes the edge datum determined by the incidence channels of the local resolutions adjacent to $e$ and by the Wilson index $\omega$.

It remains only to forget the redundant plaquette-resolution labels. Fix a scalar local-channel field $\Gamma\in\mathcal C^\Lc_\Lambda(\alpha)$. The condition that a resolution field $r$ induce the prescribed boundary-channel tuples $\Gamma_{\partial p}$ is plaquettewise. Moreover, by construction of the tuples, the edge kernels in~\eqref{eq:precompressed-channel-sum} are constant when $r_p$ is replaced by another local resolution with the same $\Gamma_{\partial p}$. Hence the sum over resolutions inducing $\Gamma$ factorizes:
\[
\sum_{\substack{r\in\mathcal R(\alpha)\\
\Gamma_{\partial p}(r_p;\alpha_p)=\Gamma_{\partial p}\ \forall p}}
\prod_{p\in P(\Lambda)}a_p^N(r_p;\alpha_p)
=
\prod_{p\in P(\Lambda)}
\sum_{\substack{r_p\in\mathcal R_p(\alpha_p)\\
\Gamma_{\partial p}(r_p;\alpha_p)=\Gamma_{\partial p}}}
a_p^N(r_p;\alpha_p).
\]
By~\eqref{eq:compressed-plaquette-amplitude}, the last expression is
\[
\prod_{p\in P(\Lambda)}A_p^N(\Gamma_{\partial p};\alpha_p).
\]
Regrouping~\eqref{eq:precompressed-channel-sum} according to the scalar channel field therefore gives
\[
\widehat W_{\Lambda,\Lc}(\alpha)
=\sum_{\Gamma\in\mathcal C^\Lc_\Lambda(\alpha)}
\left(\prod_{p\in P(\Lambda)}A_p^N(\Gamma_{\partial p};\alpha_p)\right)
\left(\prod_{e\in E(\Lambda)\setminus T}K_e^N(\Gamma_e;\alpha,\Lc)\right),
\]
which is the required formula.

The three locality assertions follow directly from the construction. The coefficient $A_p^N$ is defined entirely from the single plaquette character $\chi_{\alpha_p}$ and its incidence coordinates. At a fixed edge $e$, the kernel~\eqref{eq:projection-edge-kernel} sees only the incidence channels of plaquettes adjacent to $e$ and the Wilson factor at $e$. Finally, the entire plaquette resolution is independent of $\Lc$; the loop family appears only through the finite scalarization~\eqref{eq:wilson-decomposable-expansion}, whose non-trivial factors occur on $D(\Lc)$. This proves Theorem~\ref{thm:spin-foam}.
\end{proof}

\subsection{Defect partition functions}

We now sum the local coefficient formula against the action-dependent spectral weights. Recall the defect partition function from~\eqref{eq:defect-pf}:
\[
\mathcal Z_{\Lambda,Q}^{(\Lc)}
=\sum_{\alpha:P(\Lambda)\to\widehat{\U(N)}}\kappa_{\Lambda,Q}(\alpha)
\sum_{\Gamma\in\mathcal C^\Lc_\Lambda(\alpha)}
\left(\prod_{p\in P(\Lambda)}A_p^N(\Gamma_{\partial p};\alpha_p)\right)
\left(\prod_{e\in E(\Lambda)\setminus T}K_e^N(\Gamma_e;\alpha,\Lc)\right).
\]
The point of this notation is that the numerator and denominator of a Wilson expectation are built from the same bulk plaquette-channel system. The observable adds a finite defect tensor, not a new bulk model.

\begin{proposition}\label{prop:finite-range-defect-property}
Fix a plaquette decoration $\alpha$. The plaquette-resolution sets $\mathcal R_p(\alpha_p)$, the admissible boundary-channel tuples $\mathcal C_p(\alpha_p)$ and the coefficients $A_p^N$ are independent of the loop family. The only additional choices in $\mathcal C^\Lc_\Lambda(\alpha)$ are the Wilson scalarization data. Moreover, if $e\notin D(\Lc)$, then
\[
\mathcal W_e(\Lc)=\C,
\qquad \pi^W_{e,N}=1,
\qquad C^W_{e,\omega,N}=1,
\]
and the edge kernel~\eqref{eq:projection-edge-kernel} is exactly the vacuum edge kernel associated with the same incident plaquette channels.
\end{proposition}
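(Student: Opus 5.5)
The proof will be a direct inspection of the constructions in Section~\ref{sec:local-resolution} and Section~\ref{subsec:edgewise-haar}, tracing where the loop family $\Lc$ enters. No new estimate is needed. The task is to check that every object built from the plaquette side was defined before $\Lc$ was introduced, and that the Wilson objects trivialize off $D(\Lc)$.

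First, I will address the independence of the plaquette data. The sets $\mathcal R_p(\alpha_p)$ are defined in Definition~\ref{def:local-resolution} from four ingredients: the gauge-fixed word $\partial^T p$, which depends only on $p$ and $T$; the Brauer expansion~\eqref{eq:PN-nm} of $P_{\alpha_p}$, which depends only on $\alpha_p$ and $N$; the deterministic operator $\Phi^N_{p,\tau_p}$; and the tensor-product basis expansion~\eqref{eq:finiteN-decomp-cphi}. I will fix the choices of Brauer coefficients and of bases once and for all, independently of $\Lc$, as is already implicit in the text ("choose once and for all"). With these choices fixed, $\mathcal R_p(\alpha_p)$, the boundary-channel tuples $\Gamma_{\partial p}(r_p;\alpha_p)$, the set $\mathcal C_p(\alpha_p)$ and the compressed amplitudes $A_p^N$ of~\eqref{eq:compressed-plaquette-amplitude} are all functions of $(p,T,\alpha_p,N)$ alone. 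A scalar local-channel field is by definition a choice of $\Gamma_{\partial p}\in\mathcal C_p(\alpha_p)$ for each $p$ together with $\omega\in\Omega^W_{\Lambda,\Lc}$. Hence $\mathcal C^\Lc_\Lambda(\alpha)\simeq\bigl(\prod_p\mathcal C_p(\alpha_p)\bigr)\times\Omega^W_{\Lambda,\Lc}$, and the only $\Lc$-dependent choice is $\omega$.

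Second, I will treat an edge $e\notin D(\Lc)$. No gauge-fixed loop word contains $e$, so the occurrence set $\{(i,r):f_{i,r}=e\}$ is empty. The empty tensor product in~\eqref{eq:wilson-edge-space} is $\C$ and~\eqref{eq:wilson-edge-rep} is the trivial representation $1$. The expansion~\eqref{eq:wilson-decomposable-expansion} was chosen with $C^W_{e,\omega,N}=1$ for such $e$, with scalar coefficients absorbed into nontrivial defect factors. I need to handle the degenerate case $D(\Lc)=\varnothing$ separately. There the constant value of the Wilson functional is a multiplicative scalar, and I would place it in the overall coefficient, as already indicated in the text; I should check that this does not alter the edge kernels. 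Substituting these identities into~\eqref{eq:edge-hilbert-space}, \eqref{eq:edge-representation} and~\eqref{eq:edge-contraction-operator} gives
\[
\mathcal H_e(\Gamma_e;\alpha,\Lc)=\bigotimes_{p:e\in\partial^Tp}\mathcal H_{p,e},
\]
with the representation and contraction operator reducing to the plaquette parts only. Therefore $\Pi^\Gamma_{e,N}$ is the Haar projection for the plaquette legs alone, and $K_e^N$ in~\eqref{eq:projection-edge-kernel} coincides with the kernel obtained for $\Lc=\varnothing$ with the same incident channels. That is precisely the vacuum edge kernel.

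The only delicate point I expect is the bookkeeping of scalar factors. These are the trivial gauge-fixed loops contributing $N$, and the case $D(\Lc)=\varnothing$. I must verify that the convention "absorb scalars into a nontrivial defect factor" is consistent, so that no scalar leaks into an edge outside $D(\Lc)$. I would state explicitly that in the degenerate case the scalar multiplies the whole vacuum sum rather than any edge kernel.
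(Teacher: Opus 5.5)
Your proposal is correct and follows essentially the same route as the paper's proof. Both arguments observe that everything up to $A_p^N$ is built from a single plaquette character before $\Lc$ enters, and that off $D(\Lc)$ the Wilson space, representation and coordinate operator are trivial, so the edge Hilbert space, Haar projection and contraction operator reduce to their vacuum counterparts. Your extra remarks make explicit two conventions the paper leaves implicit or states elsewhere: the non-canonical Brauer and basis choices must be fixed independently of $\Lc$, and in the degenerate case $D(\Lc)=\varnothing$ the scalar multiplies the whole vacuum sum.
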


\begin{proof}
Everything up to and including the compressed plaquette coefficient~\eqref{eq:compressed-plaquette-amplitude} was constructed from a single plaquette character and does not involve the Wilson observable. The loop family enters for the first time through the spaces~\eqref{eq:wilson-edge-space}, the representations~\eqref{eq:wilson-edge-rep} and the decomposable Wilson factors~\eqref{eq:wilson-decomposable-expansion}. If $e\notin D(\Lc)$, there is no Wilson occurrence at $e$, so the Wilson tensor factor at this edge is the one-dimensional trivial representation and its contraction operator is the scalar identity. Therefore the edge Hilbert space~\eqref{eq:edge-hilbert-space}, the Haar projection~\eqref{eq:edge-haar-projection} and the deterministic contraction operator~\eqref{eq:edge-contraction-operator} reduce exactly to their vacuum counterparts. Taking the trace gives the vacuum kernel.
\end{proof}

\begin{proof}[Proof of Theorem~\ref{thm:spin-foam2}]
By Theorem~\ref{thm:state_sum_Wilson} and Theorem~\ref{thm:spin-foam},
\[
\E[W_{\Lambda,\Lc}(U)]
=\frac{\mathcal Z_{\Lambda,Q}^{(\Lc)}}{Z_{\Lambda,Q}}.
\]
For the empty loop family the Wilson observable is identically $1$. Applying the same identity with $\Lc=\varnothing$ therefore gives
\[
\mathcal Z_{\Lambda,Q}^{(0)}=Z_{\Lambda,Q}.
\]
Consequently,
\[
\E[W_{\Lambda,\Lc}(U)]
=\frac{\mathcal Z_{\Lambda,Q}^{(\Lc)}}{\mathcal Z_{\Lambda,Q}^{(0)}}.
\]
Proposition~\ref{prop:finite-range-defect-property} shows that numerator and denominator use the same bulk plaquette-channel data and that every edge factor outside $D(\Lc)$ is the corresponding vacuum factor. The only additional Wilson-dependent data are therefore supported on the finite defect. This proves the theorem.
\end{proof}

\section{The master loop equation}\label{sec:master-loop}

In this section we derive the universal coefficientwise master loop equation and make its closure on the topological coefficients completely explicit. Starting from a Haar integration-by-parts identity, we will identify its two main contributions: the classical cut-and-join loop term, and a new Fourier-side local recoupling term. We will then prove Theorem~\ref{thm:universal-coefficientwise-master-equation}.

\subsection{Edge derivatives and integration by parts}

Let $\Lambda$ be a finite connected oriented lattice, with oriented edge set $E(\Lambda)$ and oriented plaquette set $P(\Lambda)$. For a configuration $U=(U_e)_{e\in E(\Lambda)}\in\U(N)^{E(\Lambda)},$ and an oriented path $\gamma=e_1^{\varepsilon_1}\cdots e_m^{\varepsilon_m}$, $\varepsilon_j\in\{\pm1\}$, write $U_\gamma:=U_{e_1}^{\varepsilon_1}\cdots U_{e_m}^{\varepsilon_m}.$ For a loop family $\Lc=(\ell_1,\dots,\ell_r)$, set $W_{\Lc}(U):=\prod_{i=1}^r \Tr(U_{\ell_i}).$ Let us endow the Lie algebra $\mathfrak u(N)=\{X\in M_N(\C):\ X^*=-X\}$ with the invariant Hilbert--Schmidt inner product $\langle X,Y\rangle=-\Tr(XY)=\Tr(XY^*)$ for anti-Hermitian matrices. Let $(X_a)_{a=1}^{N^2}$ be an orthonormal basis of $\mathfrak u(N)$. For this inner product, one has the so-called \emph{magic formulas}
\begin{equation}\label{eq:magic}
\sum_{a=1}^{N^2}X_a^2=-NI_N,\qquad \sum_{a=1}^{N^2} X_aAX_a=-\Tr(A)I_N,\qquad \sum_{a=1}^{N^2} \Tr(X_aA)\Tr(X_aB)=-\Tr(AB).
\end{equation}
The name ``magic formulas'' is due to Driver--Hall--Kemp \cite{DHK13}, but they were already known before by Sengupta \cite{Sen08}. Note that we use a different normalization for the inner product on $\mathfrak u(N)$, which yields a different scaling in $N$.

For $e\in E(\Lambda)$ and $F\in \mathscr{C}^1(\U(N)^{E(\Lambda)})$, define
\[
(\partial_{e,a}F)(U):=\left.\frac{d}{dt}\right|_{t=0} F\bigl((U_{e'})_{e'\neq e}, e^{tX_a}U_e\bigr).
\]
For $F\in \mathscr{C}^2(\U(N)^{E(\Lambda)})$, define by $\Delta_e F:=\sum_{a=1}^{N^2}\partial_{e,a}^2 F$ the local Laplacian acting on $U_e$. For a plaquette decoration $\alpha:P(\Lambda)\to \widehat{\U(N)}$, define
\[
F_{\Lc,\alpha}(U):=W_{\Lc}(U)\prod_{p\in P(\Lambda)} \chi_{\alpha_p}(U_{\partial p}).
\]

\begin{proposition}
\label{prop:first-order-haar-ibp}
For every oriented edge $e\in E(\Lambda)$, every loop family $\Lc$, and every plaquette decoration
$\alpha:P(\Lambda)\to \widehat{\U(N)}$, one has
\begin{align}
0 &= \int_{\U(N)^{E(\Lambda)}}\sum_{a=1}^{N^2}\partial_{e,a}\Bigl((\partial_{e,a}W_{\Lc})\prod_{p\in P(\Lambda)}\chi_{\alpha_p}(U_{\partial p})\Bigr) dU \notag \\
&=\int_{\U(N)^{E(\Lambda)}}(\Delta_e W_{\Lc})\prod_{p\in P(\Lambda)}\chi_{\alpha_p}(U_{\partial p}) dU\notag \\
&\qquad+\sum_{p\ni e}\int_{\U(N)^{E(\Lambda)}}\sum_{a=1}^{N^2}(\partial_{e,a}W_{\Lc})\partial_{e,a}\chi_{\alpha_p}(U_{\partial p})\prod_{q\neq p}\chi_{\alpha_q}(U_{\partial q}) dU.\label{eq:first-order-coefficientwise-ibp}
\end{align}
\end{proposition}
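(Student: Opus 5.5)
The plan has two parts: a vanishing statement and a Leibniz expansion. The first equality says that the integral of a total Lie derivative vanishes. The second equality is the product rule applied to that derivative.

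\emph{First equality.} For a fixed edge $e$ and a fixed $a$, put
\[
G_a:=(\partial_{e,a}W_{\Lc})\prod_{p}\chi_{\alpha_p}(U_{\partial p}).
\]
This is a smooth function on $\U(N)^{E(\Lambda)}$, being a polynomial in the matrix entries of the $U_{e'}$ and their conjugates. We use left invariance of the normalized Haar measure $dU_e$ under $U_e\mapsto e^{tX_a}U_e$. The function
\[
t\mapsto \int G_a\bigl((U_{e'})_{e'\neq e},e^{tX_a}U_e\bigr)\,dU
\]
is therefore constant. Compactness and smoothness let us differentiate under the integral sign at $t=0$, which gives $\int \partial_{e,a}G_a\,dU=0$. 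Summing over $a=1,\dots,N^2$ proves the first equality.

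\emph{Second equality.} We apply the Leibniz rule for the derivation $\partial_{e,a}$ to $G_a$:
\[
\partial_{e,a}G_a=(\partial_{e,a}^2W_{\Lc})\prod_p\chi_{\alpha_p}+\sum_{p}(\partial_{e,a}W_{\Lc})\,\partial_{e,a}\chi_{\alpha_p}(U_{\partial p})\prod_{q\neq p}\chi_{\alpha_q}(U_{\partial q}).
\]
Summing over $a$, the first term becomes $\Delta_eW_{\Lc}\prod_p\chi_{\alpha_p}$ by definition of $\Delta_e$. In the second term, $\chi_{\alpha_p}(U_{\partial p})$ depends on $U_e$ only if $e$ (or $e^{-1}$) occurs in the boundary word $\partial p$. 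Otherwise $\partial_{e,a}\chi_{\alpha_p}(U_{\partial p})=0$, so the sum restricts to plaquettes $p\ni e$. Integrating and using linearity of the finite sums gives \eqref{eq:first-order-coefficientwise-ibp}.

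The only delicate point is notational. The convention $p\ni e$ must include plaquettes traversing $e$ with either orientation. If $e$ appears several times in $\partial p$, the derivative of $\chi_{\alpha_p}(U_{\partial p})$ is still a single term $\partial_{e,a}\chi_{\alpha_p}(U_{\partial p})$, and the multiplicity is handled inside that derivative. No explicit formula for the derivatives is needed at this stage; those will be expanded later via the magic formulas \eqref{eq:magic}. I therefore expect no real obstacle beyond stating the invariance of Haar measure precisely.
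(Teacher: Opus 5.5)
Your proposal is correct and follows the paper's proof essentially verbatim. Left-invariance of Haar measure in the $U_e$ coordinate gives $\int \partial_{e,a}G_a\,dU=0$; the Leibniz rule, together with the fact that $\chi_{\alpha_p}(U_{\partial p})$ depends on $U_e$ only when $e$ lies on $\partial p$, then gives the second equality (your remarks on differentiating under the integral sign and on the orientation convention for $p\ni e$ are just added detail within the same argument).
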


\begin{proof}
Fix $e\in E(\Lambda)$ and $a\in\{1,\dots,N^2\}$. For any smooth function $G\in C^1(\U(N)^{E(\Lambda)})$, Haar invariance on the copy of $\U(N)$ attached to the edge $e$ implies
\[
\int_{\U(N)^{E(\Lambda)}} G\bigl((U_{e'})_{e'\neq e},e^{tX_a}U_e\bigr)dU = \int_{\U(N)^{E(\Lambda)}} G(U)dU \qquad (t\in\mathbb R),
\]
because the product Haar measure is invariant under left translation in each coordinate. Differentiating at $t=0$ gives
\[
\int_{\U(N)^{E(\Lambda)}} \partial_{e,a}G(U)dU=0.
\]
Apply this identity to
\[
G(U):=(\partial_{e,a}W_{\Lc})(U)\prod_{p\in P(\Lambda)}\chi_{\alpha_p}(U_{\partial p}).
\]
We obtain
\[
0=
\int_{\U(N)^{E(\Lambda)}}\partial_{e,a}\left((\partial_{e,a}W_{\Lc})(U)\prod_{p\in P(\Lambda)}\chi_{\alpha_p}(U_{\partial p})\right)dU.
\]
Summing over $a=1,\dots,N^2$ yields the first displayed identity in the statement.

We now expand the derivative by the Leibniz rule. Since
\[
\sum_{a=1}^{N^2}\partial_{e,a}^2W_{\Lc}=\Delta_e W_{\Lc},
\]
the contribution in which the derivative falls twice on $W_{\Lc}$ is
\[
\int_{\U(N)^{E(\Lambda)}} (\Delta_eW_{\Lc})(U)\prod_{p\in P(\Lambda)}\chi_{\alpha_p}(U_{\partial p})dU.
\]
For the derivative of the plaquette part, note that $\chi_{\alpha_p}(U_{\partial p})$ depends on
$U_e$ if and only if the edge $e$ lies on the boundary of $p$. Therefore
\[
\partial_{e,a}\left(\prod_{p\in P(\Lambda)}\chi_{\alpha_p}(U_{\partial p})\right) =\sum_{p\ni e}\bigl(\partial_{e,a}\chi_{\alpha_p}(U_{\partial p})\bigr) \prod_{q\neq p}\chi_{\alpha_q}(U_{\partial q}).
\]
Substituting this into the previous identity gives
\begin{align*}
0={}&\int_{\U(N)^{E(\Lambda)}}
(\Delta_eW_{\Lc})(U)
\prod_{p\in P(\Lambda)}\chi_{\alpha_p}(U_{\partial p})\,dU\\
&+\sum_{p\ni e}\int_{\U(N)^{E(\Lambda)}}
\sum_{a=1}^{N^2}(\partial_{e,a}W_{\Lc})
\partial_{e,a}\chi_{\alpha_p}(U_{\partial p})
\prod_{q\neq p}\chi_{\alpha_q}(U_{\partial q})\,dU,
\end{align*}
which is exactly the second displayed identity.
\end{proof}

Proposition~\ref{prop:first-order-haar-ibp} contains the essential information for the master loop equation, and it remains to find the right interpretation. Equation~\eqref{eq:first-order-coefficientwise-ibp} can be rewritten as a relationship between two terms:
\begin{equation}
I_{\Delta}+ I_{\mathrm{mix}}=0
\end{equation}
where
\[
I_\Delta=\int_{\U(N)^{E(\Lambda)}}(\Delta_e W_{\Lc})\prod_{p\in P(\Lambda)}\chi_{\alpha_p}(U_{\partial p}) dU,
\]
and
\[
I_\mathrm{mix}=\sum_{p\ni e}\int_{\U(N)^{E(\Lambda)}}\sum_{a=1}^{N^2}(\partial_{e,a}W_{\Lc})\partial_{e,a}\chi_{\alpha_p}(U_{\partial p})\prod_{q\neq p}\chi_{\alpha_q}(U_{\partial q}) dU.
\]
We will provide interpretations of these terms in separate subsections: the first one will be related to surgery of loop families, and the second one involves simultaneously deformation of loops and branching rules on plaquette decorations.

\subsection{Cut-and-join term}\label{sec:cut-and-join}

Let us start with $I_\Delta$. Its interpretation will involve standard local loop surgeries, that can be found in \cite{Cha19,Jaf16,SSZ24,SSZ25b} for instance. In particular, all surgeries we present here are conveniently illustrated in the aforementioned papers and we advise the reader to look at them for a better understanding. The main result will be Proposition~\ref{prop:universal-loop-laplacian-term}. Before we state it, recall a bit of terminology.

Let $\Lc=(\ell_1,\ldots,\ell_k)$ be a finite family of loops. In order to describe local surgeries at an active edge, we choose for each loop $\ell_i$ a cyclic word representative
\[
w_{\ell_i}=e_{i,1}^{\varepsilon_{i,1}}\cdots e_{i,m_i}^{\varepsilon_{i,m_i}}, \qquad \varepsilon_{i,r}\in\{\pm1\},
\]
in oriented edges of $\Lambda$. We write $|\ell_i|:=m_i$ and $\ell(\Lc):=\sum_{i=1}^k |\ell_i|$ for the total loop length. All constructions below are independent of the choice of cyclic representative, up to cyclic rotation of the resulting loops. When needed, we denote by $[w]$ the equivalence class of the word $w$ for backtracking equivalence. We denote by $\widehat{D}_i(\Lc)$ the set of all edges $e\in E(\Lambda)$ such that $e$ or $e^{-1}$ is contained in $\ell_i$, and by $D_i(\Lc)$ a fixed subset of $\widehat{D}_i(\Lc)$ keeping only one oriented edge among $e$ and $e^{-1}$. For $e\in D_i(\Lc)$, define
\[
A_i(e;\Lc):=\{r\in\{1,\dots,m_i\}: e_{i,r}^{\varepsilon_{i,r}}=e\},\qquad B_i(e;\Lc):=\{r\in\{1,\dots,m_i\}: e_{i,r}^{\varepsilon_{i,r}}=e^{-1}\},
\]
and
\[
C_i(e;\Lc):=A_i(e;\Lc)\cup B_i(e;\Lc),\qquad m_i(e;\Lc):=|C_i(e;\Lc)|.
\]
For $j\neq i$, define similarly $A_j(e;\Lc)$, $B_j(e;\Lc)$ and $C_j(e;\Lc)$.

\begin{definition}[Splittings]
Let us fix $e\in D_i(\Lc)$, and
$x\neq y\in C_i(e;\Lc)$.

\begin{itemize}
\item If $x,y\in A_i(e;\Lc)$, write $w_{\ell_i}=aebec.$ The associated positive splitting is the ordered pair
\[
\times^1_{x,y}\ell_i:=[aec], \qquad \times^2_{x,y}\ell_i:=[be].
\]

\item If $x,y\in B_i(e;\Lc)$, write $w_{\ell_i}=ae^{-1}be^{-1}c.$ The associated positive splitting is again
\[
\times^1_{x,y}\ell_i:=[aec],\qquad \times^2_{x,y}\ell_i:=[be^{-1}].
\]

\item If $x\in A_i(e;\Lc)$ and $y\in B_i(e;\Lc)$, write $w_{\ell_i}=aebe^{-1}c.$ The associated negative splitting is the ordered pair
\[
\times^1_{x,y}\ell_i:=[ac], \qquad \times^2_{x,y}\ell_i:=[b].
\]

\item If $x\in B_i(e;\Lc)$ and $y\in A_i(e;\Lc)$, write $w_{\ell_i}=ae^{-1}bec.$ The associated negative splitting is the ordered pair
\[
\times^1_{x,y}\ell_i:=[ac],\qquad \times^2_{x,y}\ell_i:=[b].
\]
\end{itemize}
\end{definition}

\begin{definition}[Mergers]
Let us fix $e\in D_i(\Lc)$, and
$x\in C_i(e;\Lc)$ and $y\in C_j(e;\Lc)$.

\begin{itemize}
\item If $x\in A_i(e;\Lc)$ and $y\in A_j(e;\Lc)$, write $w_{\ell_i}=aeb,$ $w_{\ell_j}=ced.$
Define
\[
\ell_i\oplus_{x,y}\ell_j:=[aedceb],\qquad \ell_i\ominus_{x,y}\ell_j:=[ac^{-1}d^{-1}b].
\]
\item If $x\in B_i(e;\Lc)$ and $y\in B_j(e;\Lc)$, write $w_{\ell_i}=ae^{-1}b,$  $w_{\ell_j}=ce^{-1}d.$
Define
\[
\ell_i\oplus_{x,y}\ell_j:=[aedceb],\qquad \ell_i\ominus_{x,y}\ell_j:=[ac^{-1}d^{-1}b].
\]

\item If $x\in A_i(e;\Lc)$ and $y\in B_j(e;\Lc)$, write $w_{\ell_i}=aeb,$ $w_{\ell_j}=ce^{-1}d.$
Define
\[
\ell_i\oplus_{x,y}\ell_j:=[aec^{-1}d^{-1}eb], \qquad \ell_i\ominus_{x,y}\ell_j:=[adcb].
\]

\item If $x\in B_i(e;\Lc)$ and $y\in A_j(e;\Lc)$, write $w_{\ell_i}=ae^{-1}b,$ $w_{\ell_j}=ced.$
Define
\[
\ell_i\oplus_{x,y}\ell_j:=[aec^{-1}d^{-1}eb],\qquad \ell_i\ominus_{x,y}\ell_j:=[adcb].
\]
\end{itemize}
\end{definition}

\begin{definition}[Plaquette deformations]
Let us fix $e\in D_i(\Lc)$, $i\in\{1,\dots,k\}$, and let $x\in C_i(e;\Lc)$ be an active occurrence of an oriented edge $e$ or $e^{-1}$ in $w_{\ell_i}$. Let $p$ be a positively oriented plaquette whose boundary $\partial p$ contains the same oriented edge occurrence (we denote by $P^+(e)$ the set of such plaquettes). We define
\[
\ell_i\oplus_x p,\qquad \ell_i\ominus_x p,
\]
to be the positive and negative merger of $\ell_i$ with the oriented boundary loop $\partial p$ at the active incidence.
\end{definition}

\begin{definition}[Local surgery multisets on ordered loop families]
Let $\Lc=(\ell_1,\dots,\ell_k)$, let $i\in\{1,\dots,k\}$, and let $e\in D_i(\Lc)$.

\begin{itemize}
\item The positive and negative splitting multisets are
\[
S^+_{i,e}(\Lc):=\Bigl\{(\ell_1,\dots,\ell_{i-1},\times^1_{x,y}\ell_i,\times^2_{x,y}\ell_i,\ell_{i+1},\dots,\ell_k):x,y\in A_i(e;\Lc),\ x\neq y\Bigr\}
\]
\[
\sqcup\ \Bigl\{(\ell_1,\dots,\ell_{i-1},\times^1_{x,y}\ell_i,\times^2_{x,y}\ell_i,\ell_{i+1},\dots,\ell_k):x,y\in B_i(e;\Lc),\ x\neq y\Bigr\},
\]
and
\[
S^-_{i,e}(\Lc):=\Bigl\{(\ell_1,\dots,\ell_{i-1},\times^1_{x,y}\ell_i,\times^2_{x,y}\ell_i,\ell_{i+1},\dots,\ell_k):x\in A_i(e;\Lc),\ y\in B_i(e;\Lc)\Bigr\}
\]
\[
\sqcup\ \Bigl\{(\ell_1,\dots,\ell_{i-1},\times^1_{x,y}\ell_i,\times^2_{x,y}\ell_i,\ell_{i+1},\dots,\ell_k):x\in B_i(e;\Lc),\ y\in A_i(e;\Lc)\Bigr\}.
\]

\item The positive and negative $\U(N)$-merger multisets are
\[
M^+_{U,i,e}(\Lc):=\bigsqcup_{j\neq i}\Bigl\{\text{replace }\ell_i,\ell_j\text{ by }\ell_i\oplus_{x,y}\ell_j:x\in A_i(e;\Lc),\ y\in A_j(e;\Lc)\Bigr\}
\]
\[
\sqcup\ \bigsqcup_{j\neq i}\Bigl\{\text{replace }\ell_i,\ell_j\text{ by }\ell_i\oplus_{x,y}\ell_j:x\in B_i(e;\Lc),\ y\in B_j(e;\Lc)\Bigr\},
\]
and
\[
M^-_{U,i,e}(\Lc):=\bigsqcup_{j\neq i}\Bigl\{\text{replace }\ell_i,\ell_j\text{ by }\ell_i\ominus_{x,y}\ell_j:x\in A_i(e;\Lc),\ y\in B_j(e;\Lc)\Bigr\}
\]
\[
\sqcup\ \bigsqcup_{j\neq i}\Bigl\{\text{replace }\ell_i,\ell_j\text{ by }\ell_i\ominus_{x,y}\ell_j:x\in B_i(e;\Lc),\ y\in A_j(e;\Lc)\Bigr\}.
\]

\item The positive and negative deformation multisets are
\[
D^+_{i,e}(\Lc):=\Bigl\{(\ell_1,\dots,\ell_{i-1},\ell_i\oplus_x p,\ell_{i+1},\dots,\ell_k):p\in P^+(e),\ x\in C_i(e;\Lc)\Bigr\},
\]
and
\[
D^-_{i,e}(\Lc):=\Bigl\{(\ell_1,\dots,\ell_{i-1},\ell_i\ominus_x p,\ell_{i+1},\dots,\ell_k):p\in P^+(e),\ x\in C_i(e;\Lc)\Bigr\}.
\]
\end{itemize}
\end{definition}

We now return to the coefficient
\[
\widehat{W}_{\Lambda,\Lc}(\alpha)=\int_{\U(N)^{E(\Lambda)}} F_{\Lc,\alpha}(U) dU.
\]
Let $e\in E(\Lambda)$ be an oriented edge. For a loop family $\Lc$ and a plaquette decoration $\alpha$, define
\begin{align*}
\mathscr L_e \widehat{W}_{\Lambda,\Lc}(\alpha) &:=-N\sum_{i:e\in D_i(\Lc)}m_i(e;\Lc) \widehat{W}_{\Lambda,\Lc}(\alpha) \\
&\quad+ \sum_{i:e\in D_i(\Lc)}\ \sum_{\Lc'\in S^-_{i,e}(\Lc)} \widehat{W}_{\Lambda,\Lc'}(\alpha)- \sum_{i:e\in D_i(\Lc)}\ \sum_{\Lc'\in S^+_{i,e}(\Lc)} \widehat{W}_{\Lambda,\Lc'}(\alpha) \\
&\quad+ \sum_{i:e\in D_i(\Lc)}\ \sum_{\Lc'\in M^-_{U,i,e}(\Lc)} \widehat{W}_{\Lambda,\Lc'}(\alpha)- \sum_{i:e\in D_i(\Lc)}\ \sum_{\Lc'\in M^+_{U,i,e}(\Lc)} \widehat{W}_{\Lambda,\Lc'}(\alpha).
\end{align*}

This operator $\mathscr{L}_e$ can be described as follows: it turns the topological coefficient associated to the loop family $\Lc$ into a linear combination, with coefficients $-N$ and $\pm1$, of topological coefficients associated to loop families $\Lc'$ obtained from $\Lc$ by splitting and merger. The next proposition shows that it encodes the behavior of the Laplacian acting on the loop part of the coefficient, namely $I_\Delta$.

\begin{proposition}
\label{prop:universal-loop-laplacian-term}
For every oriented edge $e$, loop sequence $\Lc$, and plaquette decoration $\alpha$, one has
\[
I_\Delta:=\int_{\U(N)^{E(\Lambda)}}(\Delta_e W_{\Lc}(U))\prod_{p\in P(\Lambda)}\chi_{\alpha_p}(U_{\partial p}) dU=(\mathscr L_e \widehat{W}_{\Lambda,\Lc})(\alpha).
\]
\end{proposition}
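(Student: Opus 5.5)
The plan is to prove the identity pointwise in $U$, before integrating: $\Delta_e W_{\Lc}(U)$ should equal the corresponding linear combination of the Wilson functionals $W_{\Lc'}(U)$. Multiplying by $\prod_p\chi_{\alpha_p}(U_{\partial p})$ and integrating then gives $I_\Delta=(\mathscr L_e\widehat W_{\Lambda,\Lc})(\alpha)$ by linearity. The plaquette characters are never differentiated in $I_\Delta$, so the decoration $\alpha$ plays no role; this is why the operator $\mathscr L_e$ is the same for every $\alpha$.

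\textbf{First-order derivatives.} For each occurrence $x\in C_i(e;\Lc)$, write $\ell_i$ cyclically starting at that occurrence. The perturbation $U_e\mapsto e^{tX_a}U_e$ acts on the occurrence in two ways. If $x\in A_i(e;\Lc)$, it inserts $X_a$ to the left of $U_e$. If $x\in B_i(e;\Lc)$, we have $(e^{tX_a}U_e)^{-1}=U_e^{-1}e^{-tX_a}$, so it inserts $-X_a$ to the right of $U_e^{-1}$. Hence
\[
\partial_{e,a}\Tr(U_{\ell_i})=\sum_{x\in C_i(e;\Lc)}\pm\Tr(X_aU_{\ell_i^{(x)}}),
\]
where $\ell_i^{(x)}$ is the rotation of $\ell_i$ placing the insertion point first, with sign $+$ on $A_i$ and $-$ on $B_i$. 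The total derivative of $W_{\Lc}$ follows from the Leibniz rule over the loops.

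\textbf{Second-order terms.} Applying $\sum_a\partial_{e,a}^2$ produces three kinds of terms, and each is evaluated with the magic formulas \eqref{eq:magic}.
\begin{itemize}
\item Both derivatives hit the same occurrence. This gives $\sum_aX_a^2=-NI_N$, with sign $(\pm1)^2=+1$. Summing over occurrences yields $-N\sum_i m_i(e;\Lc)\,W_{\Lc}$.
\item The derivatives hit two distinct occurrences $x\neq y$ of the same loop $\ell_i$. This gives $\sum_a\Tr(X_aAX_aB)=-\Tr(A)\Tr(B)$, so the loop splits into two. The sign is $-1$ when the orientations agree ($x,y$ both in $A_i$ or both in $B_i$), producing the $S^+_{i,e}$ term with a minus sign. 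It is $+1$ for mixed orientations, producing the $S^-_{i,e}$ term with a plus sign. For mixed orientations the word is $aebe^{-1}c$, and the two $U_e$'s cancel in the resulting pieces $[ac]$ and $[b]$.
\item The derivatives hit occurrences in two different loops $\ell_i,\ell_j$. This gives $\sum_a\Tr(X_aA)\Tr(X_aB)=-\Tr(AB)$, a merger, again with sign $-1$ for equal orientations and $+1$ otherwise. The merged word has to be matched with $\ell_i\oplus_{x,y}\ell_j$ or $\ell_i\ominus_{x,y}\ell_j$.
\end{itemize}

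\textbf{Expected main obstacle.} The algebra is routine; the difficulty is the bookkeeping that makes the counts and words agree with the definitions of $\mathscr L_e$.
\begin{itemize}
\item Ordered pairs $(x,y)$ versus the multisets $S^\pm_{i,e}$. The multisets range over ordered pairs, and $\partial_{e,a}^2$ also produces both orders, so the counts should agree; I would check that no factor of $2$ appears.
\item Cross-loop mergers. $\sum_a$ of the product of two first derivatives also counts both orders $(i,j)$ and $(j,i)$. The definition of $\mathscr L_e$ sums over $i$ and over $j\neq i$, which is consistent.
\item The sum over $i:e\in D_i(\Lc)$. I would interpret it as running over all occurrences of $e^{\pm1}$, which is consistent with how $A_i$ and $B_i$ are defined.
\item Matching words. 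I must verify that the word produced by $\Tr(AB)$ after cyclic rotation, for example $aedceb$ when the $A$-type occurrence at $y$ is rotated to the front, coincides up to backtracking with the paper's $\oplus/\ominus$ definitions. Backtracking does not change $U_\gamma$, so equivalence classes suffice.
\end{itemize}
I would carry out this matching case by case over the four orientation cases, using the stated factorizations $w_{\ell_i}=aeb$ and $w_{\ell_j}=ced$.
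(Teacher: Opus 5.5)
Your proposal is correct and follows the paper's proof essentially step for step: you prove a pointwise identity for $\Delta_e W_{\Lc}$ using the first-derivative formula, the product rule and the three magic formulas, applied to same-occurrence, same-loop and cross-loop terms, with ordered pairs on both sides so that no factor $2$ appears, and then you integrate against the undifferentiated characters. When you do the case-by-case word matching, you will find that in the two cases where both active occurrences are of $e^{-1}$ the computation gives $[ae^{-1}c]$, $[be^{-1}]$ and $[ae^{-1}dce^{-1}b]$, so the paper's displayed words $[aec]$ and $[aedceb]$ for those cases (in the definitions and in its proof) must be read with $e^{-1}$ in place of $e$.
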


\begin{proof}
We prove first a pointwise identity for $\Delta_eW_{\Lc}$, and then integrate against the undifferentiated plaquette factors $\prod_{p\in P(\Lambda)}\chi_{\alpha_p}(U_{\partial p})$. Fix an oriented edge $e\in E(\Lambda)$. For each $i\in\{1,\dots,k\}$, choose the cyclic representative
\[
w_{\ell_i}=e_{i,1}^{\varepsilon_{i,1}}\cdots e_{i,m_i}^{\varepsilon_{i,m_i}},\qquad \varepsilon_{i,r}\in\{\pm1\}.
\]
If $x\in A_i(e;\Lc)$, write $w_{\ell_i}=a_xeb_x,$ and if $x\in B_i(e;\Lc)$, write $w_{\ell_i}=a_xe^{-1}b_x.$ By differentiation, we have
\[
\partial_{e,a}\Tr(U_{\ell_i})=\sum_{x\in A_i(e;\Lc)}\Tr\bigl(U_{a_x}X_aU_eU_{b_x}\bigr)-\sum_{x\in B_i(e;\Lc)}\Tr\bigl(U_{a_x}U_e^{-1}X_aU_{b_x}\bigr).
\]
Using cyclicity of the trace, this may be rewritten as
\begin{equation}\label{eq:local-first-derivative-loop}
\partial_{e,a}\Tr(U_{\ell_i})=\sum_{x\in A_i(e;\Lc)} \Tr\bigl(X_aU_{e b_x a_x}\bigr)-\sum_{x\in B_i(e;\Lc)} \Tr\bigl(X_aU_{b_x a_x e^{-1}}\bigr).
\end{equation}

Since $W_{\Lc}=\prod_{i=1}^k \Tr(U_{\ell_i})$, the first derivative is
\[
\partial_{e,a}W_{\Lc}=\sum_{i:e\in D_i(\Lc)}\left(\prod_{j\neq i}\Tr(U_{\ell_j})\right)\partial_{e,a}\Tr(U_{\ell_i}),
\]
and therefore, by differentiating once more and summing over $a$,
\[
\Delta_eW_{\Lc}=\sum_{i=1}^k\left(\prod_{j\neq i}\Tr(U_{\ell_j})\right)\Delta_e\Tr(U_{\ell_i})+\sum_{i\neq j}\sum_{a=1}^{N^2}\left(\prod_{m\neq i,j}\Tr(U_{\ell_m})\right)\bigl(\partial_{e,a}\Tr(U_{\ell_i})\bigr)\bigl(\partial_{e,a}\Tr(U_{\ell_j})\bigr).
\]
We analyze separately the two terms on the right-hand side and use the magic formulas~\eqref{eq:magic}.

\smallskip

\emph{Step 1: second derivatives inside a single loop.}
Fix $i$ with $e\in D_i(\Lc)$. Differentiating \eqref{eq:local-first-derivative-loop} once more, the two derivatives may hit either the same active occurrence or two distinct active occurrences.
\begin{itemize}
\item If they hit the same active occurrence $x\in C_i(e;\Lc)$, then one gets a factor $\sum_a X_a^2$, hence, with the normalization matching the definition of $\Lc_e$,
\[
\sum_{a=1}^{N^2}\partial_{e,a}^2 \Tr(U_{\ell_i})\Big|_{\text{same }x}=-N\Tr(U_{\ell_i}).
\]
Summing over the $m_i(e;\Lc)=|C_i(e;\Lc)|$ active occurrences gives the contribution
\[
-Nm_i(e;\Lc)\Tr(U_{\ell_i}).
\]
\item If they hit two distinct active occurrences $x\neq y\in C_i(e;\Lc)$, we have to distinguish between four cases.
\begin{enumerate}
\item If $x,y\in A_i(e;\Lc)$, write $w_{\ell_i}=aebec$. Then the contribution is
\[
\sum_{a=1}^{N^2}\Tr(U_aX_aU_eU_bX_aU_eU_c).
\]
By cyclicity of the trace and the standard $\U(N)$ contraction identity in the chosen
normalization,
\[
\sum_{a=1}^{N^2}\Tr(AX_aBX_a)=-\Tr(A)\Tr(B),
\]
this equals $-\Tr(U_{aec})\Tr(U_{be}),$ which is exactly the Wilson weight associated with the positive splitting $(\times^1_{x,y}\ell_i,\times^2_{x,y}\ell_i)$.
\item If $x,y\in B_i(e;\Lc)$, writing $w_{\ell_i}=ae^{-1}be^{-1}c$, the same computation gives $-\Tr(U_{aec})\Tr(U_{be^{-1}}),$
 i.e. the positive splitting contribution again.
\item If $x\in A_i(e;\Lc)$ and $y\in B_i(e;\Lc)$, write $w_{\ell_i}=aebe^{-1}c$. Then the second
derivative carries one minus sign from differentiating the inverse occurrence, and one gets
\[
-\sum_{a=1}^{N^2}\Tr(U_aX_aU_eU_bU_e^{-1}X_aU_c).
\]
Using cyclicity and the same contraction identity, this becomes $\Tr(U_{ac})\Tr(U_b),$ which is exactly the contribution of the negative splitting $(\times^1_{x,y}\ell_i,\times^2_{x,y}\ell_i)$.
\item If $x\in B_i(e;\Lc)$ and $y\in A_i(e;\Lc)$, the case is identical to (iii) and gives the other half of
$S^-_{i,e}(\Lc)$.
\end{enumerate}
\end{itemize}

Collecting all same-loop contributions, we obtain
\begin{align}
\Delta_e\Tr(U_{\ell_i}) &=-Nm_i(e;\Lc)\Tr(U_{\ell_i})+ \sum_{\Lc'\in S^-_{i,e}(\Lc)} W_{\Lc'}^{(i)}- \sum_{\Lc'\in S^+_{i,e}(\Lc)} W_{\Lc'}^{(i)},\label{eq:single-loop-Laplacian}
\end{align}
where $W_{\Lc'}^{(i)}$ means the product of traces obtained from $\Lc'$ after reinserting the undifferentiated loops $\ell_j$, $j\neq i$.

\smallskip

\emph{Step 2: cross terms between two different loops.}
Fix $i\neq j$ with $e\in D_i(\Lc)\cap D_j(\Lc)$. By \eqref{eq:local-first-derivative-loop}, the cross term is a sum over pairs of active occurrences $x\in C_i(e;\Lc)$, $y\in C_j(e;\Lc)$.

If $x,y$ have the same orientation, say $x\in A_i(e;\Lc)$, $y\in A_j(e;\Lc)$, write $w_{\ell_i}=aeb$ and $w_{\ell_j}=ced$. Then
\[
\sum_{a=1}^{N^2}\Tr(U_aX_aU_eU_b)\Tr(U_cX_aU_eU_d).
\]
Using the contraction identity
\[
\sum_{a=1}^{N^2}\Tr(X_aA)\Tr(X_aB)=-\Tr(AB)
\]
in the normalization matching $\Lc_e$, this becomes
\[
-\Tr(U_{aedceb}),
\]
namely the positive $\U(N)$-merger $\ell_i\oplus_{x,y}\ell_j$.

If $x,y\in B_i(e;\Lc)\times B_j(e;\Lc)$, one obtains similarly the other half of $M^+_{U,i,e}(\Lc)$.

If $x,y$ have opposite orientations, say $x\in A_i(e;\Lc)$, $y\in B_j(e;\Lc)$, there is one minus sign coming from differentiating the inverse occurrence in $\ell_j$, and the same contraction identity yields
\[
\Tr(U_{adcb}),
\]
which is the negative $\U(N)$-merger $\ell_i\ominus_{x,y}\ell_j$. The case $x\in B_i(e;\Lc)$, $y\in A_j(e;\Lc)$ is identical.

Therefore the total cross-term contribution is
\begin{align}
\sum_{a=1}^{N^2}
\bigl(\partial_{e,a}\Tr(U_{\ell_i})\bigr)\bigl(\partial_{e,a}\Tr(U_{\ell_j})\bigr)=\sum_{\Lc'\in M^-_{U,i,e}(\Lc;j)} W_{\Lc'}-\sum_{\Lc'\in M^+_{U,i,e}(\Lc;j)} W_{\Lc'},
\label{eq:cross-loop-Laplacian}
\end{align}
where $M^\pm_{U,i,e}(\Lc;j)$ denotes the part of the merger multiset involving the pair
$(i,j)$.

\smallskip

\emph{Step 3: conclusion.}
Substituting \eqref{eq:single-loop-Laplacian} and \eqref{eq:cross-loop-Laplacian} into the product-rule expansion of $\Delta_eW_{\Lc}$, and summing over all active loops and all active pairs of loops, we obtain the pointwise identity
\[
\Delta_eW_{\Lc}(U)=-N\sum_{i:e\in D_i(\Lc)} m_i(e;\Lc)W_{\Lc}(U)+\sum_{i:e\in D_i(\Lc)}\sum_{\Lc'\in S^-_{i,e}(\Lc)} W_{\Lc'}(U)-\sum_{i:e\in D_i(\Lc)}\sum_{\Lc'\in S^+_{i,e}(\Lc)} W_{\Lc'}(U)
\]
\[
\hspace{3cm}
+\sum_{i:e\in D_i(\Lc)}\sum_{\Lc'\in M^-_{U,i,e}(\Lc)} W_{\Lc'}(U)-\sum_{i:e\in D_i(\Lc)}\sum_{\Lc'\in M^+_{U,i,e}(\Lc)} W_{\Lc'}(U).
\]
Multiplying this identity by $\prod_{p\in P(\Lambda)}\chi_{\alpha_p}(U_{\partial p})$ and
integrating over $\U(N)^{E(\Lambda)}$ gives the expected formula.
\end{proof}

\subsection{Mixed terms}

The next proposition isolates the genuinely new part of the coefficientwise master loop equation. Indeed, Proposition~\ref{prop:universal-loop-laplacian-term} treats the loop-Laplacian term, and this is still governed by the classical cut-and-join calculus on loop families. Its proof is therefore very close to previous proofs obtained for the Wilson action \cite{Cha19,Jaf16,SSZ24}. By contrast, the mixed term
\[
\sum_{a=1}^{N^2} (\partial_{e,a}W_{\Lc})\partial_{e,a}\chi_{\alpha_p}(U_{\partial p})
\]
is the place where the Fourier expansion in plaquette variables interacts with the geometric loop sector. We will see in Proposition~\ref{prop:local-incidence-tensor} that this interaction remains local: after differentiating at a fixed incidence $(e,p)$, one does not generate any nonlocal reorganization of the whole plaquette decoration, but only a finite list of local reconnection types at the active incidence together with a finite local recoupling of the active plaquette label $\alpha_p$. This is the mechanism behind the closure of the master loop equation on the family of topological coefficients $\widehat{W}_{\Lambda,\Lc}(\alpha)$. The resulting operator acts on the topological coefficients $\widehat{W}_{\Lambda,\Lc}(\alpha)$ while the loop family is still kept in geometric position space.

\begin{definition}[loop-plaquette mergers]
Fix an oriented incidence $(e,p)$, and choose a cyclic representative of the plaquette boundary $\partial p = aeb.$ Let
\[
\gamma_{p,u}:=\begin{cases}
aeb=\partial p, & 1\le u\le n_p^+,\\[1mm]
b^{-1}e^{-1}a^{-1}=(\partial p)^{-1}, & n_p^+<u\le r_p,
\end{cases}
\]
so that covariant slots carry the positively oriented plaquette word and contravariant slots carry the oppositely oriented plaquette word.

Let $\Lc=(\ell_1,\dots,\ell_k)$ be a loop family, and let $x$ be an active occurrence of $e$ or $e^{-1}$ in the loop $\ell_i$. Write
\[
w_{\ell_i}=
\begin{cases}
ced, & x\in A_i(e;\Lc),\\
ce^{-1}d, & x\in B_i(e;\Lc),
\end{cases}
\]
where the displayed occurrence of $e$ or $e^{-1}$ is the active one corresponding to $x$. The \emph{loop-plaquette merger} $\tau_{x,u}(\Lc)$ is the loop family obtained from $\Lc$ by replacing $\ell_i$ with the loop produced by merging $\ell_i$ and $\gamma_{p,u}$ at their active occurrences. More precisely, if $y_u$ denotes the active occurrence in the slot-word $\gamma_{p,u}$, then
\[
\tau_{x,u}(\Lc)=\begin{cases}
\text{replace }\ell_i\text{ by }\ell_i\oplus_{x,y_u}\gamma_{p,u},
& \text{if }x\text{ and }y_u\text{ have the same orientation},\\
\text{replace }\ell_i\text{ by }\ell_i\ominus_{x,y_u}\gamma_{p,u},
& \text{if }x\text{ and }y_u\text{ have opposite orientations}.
\end{cases}
\]
\end{definition}

Given a loop-plaquette merger $\tau_{x,u}(\Lc)$ as above, we define the auxiliary mixed tensor space
\[
T^{\mathrm{aux}}_{x,u}:=
\begin{cases}
V^{\otimes(n_p^+-1)}\otimes (V^*)^{\otimes n_p^-}, & x\in A_i(e;\Lc),\ 1\le u\le n_p^+,\\[1mm]
V^{\otimes(n_p^++1)}\otimes (V^*)^{\otimes n_p^-}, & x\in B_i(e;\Lc),\ 1\le u\le n_p^+,\\[1mm]
V^{\otimes n_p^+}\otimes (V^*)^{\otimes(n_p^-+1)}, & x\in A_i(e;\Lc),\ n_p^+<u\le r_p,\\[1mm]
V^{\otimes n_p^+}\otimes (V^*)^{\otimes(n_p^--1)}, & x\in B_i(e;\Lc),\ n_p^+<u\le r_p.
\end{cases}
\]
These correspond to the following four local cases
\[
\begin{array}{c|c|c}
\text{local type of }(x,u) & \tau_{x,u}(\Lc) & \text{plaquette operation} \\ \hline
x\in A_i(e;\Lc),\ u\le n_p^+ & \ell_i\oplus_{x,y_u}\gamma_{p,u} & \text{covariant contraction} \\
x\in B_i(e;\Lc),\ u\le n_p^+ & \ell_i\ominus_{x,y_u}\gamma_{p,u} & \text{covariant coevaluation} \\
x\in A_i(e;\Lc),\ u>n_p^+ & \ell_i\ominus_{x,y_u}\gamma_{p,u} & \text{contravariant coevaluation} \\
x\in B_i(e;\Lc),\ u>n_p^+ & \ell_i\oplus_{x,y_u}\gamma_{p,u} & \text{contravariant contraction}
\end{array}
\]
where $i=i(x)$ denotes the index of the loop containing $x$, and $y_u$ denotes the active occurrence in the slot-word $\gamma_{p,u}$. The next result translates the mixed term of the master loop equation in terms of these loop-plaquette mergers.

\begin{proposition}\label{prop:local-incidence-tensor}
Fix an oriented incidence $(e,p)$, and choose a cyclic representative $\partial p=aeb$. Write $\alpha_p=[\lambda_p^+,\lambda_p^-]_N,\ n_p^+:=|\lambda_p^+|,\ n_p^-:=|\lambda_p^-|,\ r_p:=n_p^++n_p^-,$ and let
\[
T_p:=V^{\otimes n_p^+}\otimes (V^*)^{\otimes n_p^-},\qquad P_p:=P^{[\lambda_p^+,\lambda_p^-]}_N.
\]
Then, for each pair $(x,u)$, there exists an endomorphism $R_{x,u}(U)\in \mathrm{End}(T^{\mathrm{aux}}_{x,u})$ such that
\[
\sum_{a=1}^{N^2}(\partial_{e,a}W_\Lc)(U)\,\partial_{e,a}\chi_{\alpha_p}(U_{\partial p})=\sum_x\sum_{u=1}^{r_p} W_{\tau_{x,u}(\Lc)}(U)\,\mathrm{Tr}_{T^{\mathrm{aux}}_{x,u}}\!\bigl(R_{x,u}(U)\bigr),
\]
where the outer sum runs over all active occurrences $x$ of $e$ or $e^{-1}$ in $\Lc$.
\end{proposition}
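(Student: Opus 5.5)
We compute both derivatives explicitly and then contract over $a$ with the third magic formula in~\eqref{eq:magic}.

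\textbf{Loop derivative.} Formula~\eqref{eq:local-first-derivative-loop} gives
\[
\partial_{e,a}W_\Lc=\sum_x \pm\Tr(X_aU_{\beta_x})\prod_{j\neq i(x)}\Tr(U_{\ell_j}),
\]
where $\beta_x$ is the loop $\ell_{i(x)}$ rerooted at the active occurrence $x$. The sign is $+$ for $x\in A_i(e;\Lc)$ and $-$ for $x\in B_i(e;\Lc)$.

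\textbf{Plaquette derivative.} Write
\[
\chi_{\alpha_p}(U_{\partial p})=\Tr_{T_p}\bigl(P_p\rho_p(U_a)\rho_p(U_e)\rho_p(U_b)\bigr).
\]
We have
\[
\partial_{e,a}\rho_p(U_e)=d\rho_p(X_a)\rho_p(U_e),\qquad d\rho_p(X)=\sum_{u=1}^{r_p}X^{(u)},
\]
where $X^{(u)}$ acts by $X$ on a covariant slot and by $-X^{T}$ on a contravariant slot. Hence
\[
\partial_{e,a}\chi_{\alpha_p}=\sum_{u=1}^{r_p}\Tr_{T_p}\bigl(X_a^{(u)}M_p(U)\bigr),
\qquad M_p(U):=\rho_p(U_eU_bU_a)P_p,
\]
after cycling the trace and using the equivariance of $P_p$.

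\textbf{Contraction over $a$.} For each $u$, take the partial trace over all slots except $u$. This gives
\[
\Tr_{T_p}\bigl(X_a^{(u)}M_p\bigr)=\pm\Tr_V\bigl(X_aM_p^{(u)}\bigr),
\]
where $M_p^{(u)}\in\End(V)$ is, up to transposition for contravariant $u$, the partial trace of $M_p$ over the complementary slots. The identity $\sum_a\Tr(X_aA)\Tr(X_aB)=-\Tr(AB)$ then produces
\[
\mp\Tr_V\bigl(U_{\beta_x}M_p^{(u)}\bigr).
\]
This expression is a trace on the whole space $T_p$ of an operator in which the loop matrix $U_{\beta_x}$ has been spliced into slot $u$. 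I would write it as
\[
\Tr_{T_p}\bigl((U_{\beta_x})^{(u)}\,\rho_p(U_eU_bU_a)P_p\bigr).
\]
The matrix $(U_{\beta_x})^{(u)}$ acts by $U_{\beta_x}$ on slot $u$ if $u$ is covariant, and by its transpose or inverse-dual action if $u$ is contravariant.

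\textbf{Reorganisation into $W_{\tau_{x,u}(\Lc)}$ times an auxiliary trace.} In slot $u$, the product of $U_{\beta_x}$ with the slot-$u$ factor of $\rho_p(U_{\gamma_{p,u}})$ is exactly the holonomy of the merged loop $\ell_i\oplus_{x,y_u}\gamma_{p,u}$ or $\ell_i\ominus_{x,y_u}\gamma_{p,u}$. The $\oplus$ or $\ominus$ choice follows the relative orientation recorded in the four-case table.

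The projector $P_p$ does not factor through this slot, so I would not try to isolate slot $u$ as a genuine $V$-trace. Instead, I would use $\Tr(A)\Tr_{T'}(B)=\Tr_{V\otimes T'}(A\otimes B)$ in reverse. This turns the slot-$u$ line into the separate factor $\Tr(U_{\tau})$ contributing to $W_{\tau_{x,u}(\Lc)}$. The remainder is packaged as an endomorphism $R_{x,u}(U)$: it is the rest of the operator with slot $u$ either contracted out or, via a coevaluation, with an extra leg inserted. This accounts for the four types of $T^{\mathrm{aux}}_{x,u}$, namely removing or adding a $V$ or $V^*$ factor. Concretely, one takes
\[
R_{x,u}(U)=\pm\,c_u\circ\bigl(\rho_p(U_{\gamma})P_p\bigr)\circ\iota_u
\]
or the analogous coevaluation variant, with all signs and the factor from the magic formula absorbed into $R_{x,u}$. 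The statement only asserts existence of $R_{x,u}$, so this freedom is allowed.

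\textbf{Main obstacle.} The hard step is the bookkeeping that matches the four orientation cases with contraction versus coevaluation and with the correct auxiliary space. In particular, for contravariant slots the transposition $-X^T$ must be shown to convert the loop insertion into the reversed plaquette word $\gamma_{p,u}=(\partial p)^{-1}$. I would verify the case $x\in A_i$, $u\le n_p^+$ in full, and obtain the other three cases by inverting $e$ and dualising.
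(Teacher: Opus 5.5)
Your derivative computations and the contraction with the third magic formula are correct, and up to that point your route is the same as the paper's. For $x\in A_i(e;\Lc)$ and covariant $u$ you correctly reach $-\Tr_{T_p}\bigl(K^{(u)}\rho_p(H)P_p\bigr)$, where $K$ and $H$ are the holonomies of $\ell_i$ and of $\partial p$ rerooted at the active occurrence of $e$.

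\textbf{Where it breaks.} The argument fails at the reorganisation step. Reading $\Tr(A)\Tr_{T'}(B)=\Tr_{V\otimes T'}(A\otimes B)$ backwards requires the operator on $V\otimes T'$ to be a pure tensor. But $\bigl(KH\otimes\rho'(H)\bigr)P_p$ is not one: $P_p$ is a combination of walled Brauer diagrams. In every diagram where slot $u$ is not joined to itself by a vertical strand, the spliced strand runs through further slots before closing. It then produces $\Tr(KH^k)$, i.e.\ $\ell_i$ merged with $(\partial p)^k$ with $k\neq1$ in general, rather than $\Tr(KH)$ times a plaquette remainder. You observed yourself that $P_p$ does not factor through slot $u$. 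That observation is exactly what rules out a clean factor $W_{\tau_{x,u}(\Lc)}$, and no choice of $R_{x,u}(U)$ can compensate.

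\textbf{The identity itself fails for $r_p\ge2$.} Take $\alpha_p=[(2),\varnothing]_N$, so $P_p=\tfrac12(1+S)$ on $V^{\otimes2}$ and $\chi_{\alpha_p}(g)=\tfrac12\bigl(\Tr(g)^2+\Tr(g^2)\bigr)$. Take a single loop $\ell$ that contains $e$ once, positively, and shares no other edge with $p$; for instance $\ell=(\partial q)^{-1}$ for the other plaquette $q\ni e$ in $\Z^2$. Then $(K,H)$ ranges over all of $\U(N)^2$.
\begin{itemize}
\item The left side is $\sum_a\Tr(X_aK)\bigl(\Tr(X_aH)\Tr(H)+\Tr(X_aH^2)\bigr)=-\Tr(KH)\Tr(H)-\Tr(KH^2)$.
\item Both terms $u=1,2$ on the right carry the factor $W_{\tau_{x,u}(\Lc)}(U)=\Tr(KH)$.
\item For $N=2$ and $K=H=\mathrm{diag}(1,i)$ one has $\Tr(KH)=0$ but $\Tr(KH^2)=1-i\neq0$, so the right side vanishes while the left side does not.
\end{itemize}

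\textbf{Comparison with the paper.} The paper's proof makes the same leap: it simply asserts that the Casimir contraction reconnects the loop strand into $\tau_{x,u}(\Lc)$ and deletes the active slot. So following it does not supply the missing step. The factorization holds only for the Brauer terms in which slot $u$ is joined to itself (e.g.\ $r_p=1$). A correct general statement must either keep $K$ inside the trace over $T_p$, as in your intermediate formula, or allow mergers with $(\partial p)^k$.

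\textbf{A secondary point.} In all four orientation cases the splice consumes slot $u$, which is consistent with the $\U(1)$-charge under $U_e\mapsto zU_e$. Your \emph{coevaluation with an extra leg} variant, taken from the enlarged spaces $T^{\mathrm{aux}}_{x,u}$, therefore does not arise from the computation.
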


\begin{proof}
We start from the mixed-tensor expression
\[
\chi_{\alpha_p}(U_{\partial p})=\mathrm{Tr}_{T_p}\!\bigl(P_p\,\rho_p(U_a)\rho_p(U_e)\rho_p(U_b)\bigr).
\]
For the loop part, if $x$ is an active occurrence of $e$ or $e^{-1}$ in the loop $\ell_i$, choose a cyclic representative of $\ell_i$ of the form
\[
w_{\ell_i}=
\begin{cases}
cxe\,d_x,& x\in A_i(e;\Lc),\\
cxe^{-1}d_x,& x\in B_i(e;\Lc),
\end{cases}
\]
and write
\[
\Xi_{x,a}(U):=
\begin{cases}
\mathrm{Tr}(U_{c_x}X_aU_eU_{d_x}),& x\in A_i(e;\Lc),\\
-\mathrm{Tr}(U_{c_x}U_e^{-1}X_aU_{d_x}),& x\in B_i(e;\Lc).
\end{cases}
\]
Then
\[
\partial_{e,a}W_{\Lc}(U)=\sum_x W_{\Lc\setminus x}(U)\,\Xi_{x,a}(U),
\]
where $W_{\Lc\setminus x}(U)$ denotes the product of the loop traces not differentiated at the occurrence $x$.

For the plaquette part, differentiating only the active factor $\rho_p(U_e)$ gives
\[
\partial_{e,a}\chi_{\alpha_p}(U_{\partial p})=\sum_{u=1}^{r_p}
\mathrm{Tr}_{T_p}\!\bigl(P_p\,\rho_p(U_a)\,D_{u,a}\,\rho_p(U_b)\bigr),
\]
where $D_{u,a}$ is the infinitesimal action on the $u$-th slot. Thus
\[
\sum_{a=1}^{N^2}(\partial_{e,a}W_{\Lc})\,\partial_{e,a}\chi_{\alpha_p}(U_{\partial p})=\sum_x\sum_{u=1}^{r_p}W_{\Lc\setminus x}(U)\sum_{a=1}^{N^2} \Xi_{x,a}(U)\,\mathrm{Tr}_{T_p}\!\bigl(P_p\,\rho_p(U_a)\,D_{u,a}\,\rho_p(U_b)\bigr).
\]
Thus, it is enough to analyze a fixed pair $(x,u)$.

For such a pair, the index $a$ appears exactly once in the loop factor and exactly once in the active plaquette slot. Summing over $a$ therefore performs the unique local Casimir contraction at the incidence $(e,p)$.  The resulting contraction or coevaluation is $\U(N)$-equivariant, but the remaining endomorphism still contains the plaquette holonomy.  We therefore call it simply an endomorphism here; Proposition~\ref{prop:explicit-incidence-recoupling} will factor its gauge-variable dependence as a fixed equivariant operator times the natural mixed-tensor representation of $U_{\partial p}$. There are four cases.

If $x\in A_i(e;\Lc)$ and $u\le n_p^+$, the active loop occurrence and the active slot occurrence $y_u$ in $\gamma_{p,u}=\partial p$ have the same orientation. The Casimir contraction reconnects the loop strand with the active plaquette strand into the positive loop-plaquette merger $\tau_{x,u}(\Lc)=\ell_i\oplus_{x,y_u}\gamma_{p,u}$, while on the plaquette side it removes the active covariant slot. Hence the remaining plaquette factor is the trace of an endomorphism of
\[
V^{\otimes(n_p^+-1)}\otimes (V^*)^{\otimes n_p^-}.
\]

If $x\in B_i(e;\Lc)$ and $u\le n_p^+$, the active loop occurrence and $y_u$ have opposite orientations. The Casimir contraction yields the negative loop-plaquette merger $\tau_{x,u}(\Lc)=\ell_i\ominus_{x,y_u}\gamma_{p,u}$, while on the plaquette side it inserts one additional covariant slot, i.e. a covariant coevaluation. The resulting plaquette factor is therefore the trace of an endomorphism of
\[
V^{\otimes(n_p^++1)}\otimes (V^*)^{\otimes n_p^-}.
\]

If $x\in A_i(e;\Lc)$ and $u>n_p^+$, then $\gamma_{p,u}=(\partial p)^{-1}$, so the active loop occurrence and $y_u$ have opposite orientations. The loop part is the negative loop-plaquette merger $\tau_{x,u}(\Lc)=\ell_i\ominus_{x,y_u}\gamma_{p,u}$, while the plaquette side acquires one additional contravariant slot, i.e. a contravariant coevaluation. Thus the remaining plaquette
factor is the trace of an endomorphism of
\[
V^{\otimes n_p^+}\otimes (V^*)^{\otimes(n_p^-+1)}.
\]

Finally, if $x\in B_i(e;\Lc)$ and $u>n_p^+$, the active loop occurrence and $y_u$ have the same orientation. The loop part is the positive loop-plaquette merger $\tau_{x,u}(\Lc)=\ell_i\oplus_{x,y_u}\gamma_{p,u}$, while the plaquette side loses one contravariant slot, i.e. undergoes a contravariant contraction. Hence the remaining plaquette
factor is the trace of an endomorphism of
\[
V^{\otimes n_p^+}\otimes (V^*)^{\otimes(n_p^--1)}.
\]

In each case, all non-active loop and plaquette letters remain unchanged, so the contribution of the fixed pair $(x,u)$ has the form
\[
W_{\tau_{x,u}(\Lc)}(U)\,\mathrm{Tr}_{T^{\mathrm{aux}}_{x,u}}\!\bigl(R_{x,u}(U)\bigr).
\]
Summing over all active occurrences $x$ and all slots $u$ gives the stated identity.
\end{proof}

\begin{proposition}\label{prop:explicit-incidence-recoupling}
In the setting of Proposition~\ref{prop:local-incidence-tensor}, for each pair $(x,u)$ there is a $\U(N)$-equivariant endomorphism $A_{x,u;N}\in\End(T^{\mathrm{aux}}_{x,u})$, independent of the gauge variables, such that
\[
R_{x,u}(U)=A_{x,u;N}\,\rho_{x,u}(U_{\partial p}),
\]
where $\rho_{x,u}$ is the natural mixed-tensor representation on $T^{\mathrm{aux}}_{x,u}$. Consequently,
\[
\Tr_{T^{\mathrm{aux}}_{x,u}}\!\bigl(R_{x,u}(U)\bigr)=\sum_{\beta\in \B_{x,u}(\alpha_p)}c^N_{x,u}(\alpha_p,\beta)\,\chi_\beta(U_{\partial p}),
\]
where $\B_{x,u}(\alpha_p)$ is a finite subset of the irreducible summands of $T^{\mathrm{aux}}_{x,u}$. If $P^{T^{\mathrm{aux}}}_{\beta,N}$ denotes the orthogonal projection onto the $\beta$-isotypic component, then the coefficient is explicitly
\begin{equation}\label{eq:incidence-recoupling-coefficient}
c^N_{x,u}(\alpha_p,\beta)
=d_\beta^{-1}\Tr_{T^{\mathrm{aux}}_{x,u}}\!\left(A_{x,u;N}P^{T^{\mathrm{aux}}}_{\beta,N}\right).
\end{equation}
\end{proposition}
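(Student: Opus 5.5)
Starting from the four cases in the proof of Proposition~\ref{prop:local-incidence-tensor}, I will make the endomorphism $R_{x,u}(U)$ explicit. After summing over $a$ with the magic formulas~\eqref{eq:magic}, the loop factor $\Xi_{x,a}$ and the slot derivative $D_{u,a}$ are joined by the identity $\sum_a \Tr(X_aA)\Tr(X_aB)=-\Tr(AB)$, applied slotwise. The loop strand at $x$ is thereby cut open and reconnected with the $u$-th strand of $T_p$. The remaining object is a trace on $T_p$ in which the $u$-th tensor leg has been replaced by a contraction $c$ or a coevaluation $\iota$ of the type listed in the table. Its remaining gauge dependence is $P_p\rho_p(U_a)\rho_p(U_e)\rho_p(U_b)$ on the other $r_p-1$ legs, with the loop strand routed through the $u$-th leg. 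I will rewrite this as a trace on $T^{\mathrm{aux}}_{x,u}$. The natural candidate is
\[
A_{x,u;N}:=\pm\,\mathcal E_{x,u}\circ P_p\circ \mathcal E'_{x,u},
\]
where $\mathcal E_{x,u},\mathcal E'_{x,u}$ are the partial-trace/insertion maps between $T_p$ and $T^{\mathrm{aux}}_{x,u}$ built from $c_{u,\cdot}$ or $\iota_{u,\cdot}$ (or simply deletion of slot $u$), and the sign comes from inverse occurrences. In the contraction cases $\mathcal E$ is a partial trace over the $u$-th slot. In the coevaluation cases the merged loop contributes an extra $V$ or $V^*$ leg whose matrix entries are absorbed into $W_{\tau_{x,u}(\Lc)}$ only after a cyclic rearrangement.

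\textbf{Step 1 (equivariance).} Each building block, namely $P_p$, the slot contractions/coevaluations and the Casimir $\sum_a X_a\otimes X_a$, is $\U(N)$-equivariant. Hence $A_{x,u;N}$ commutes with $\rho_{x,u}(g)$ for all $g$.

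\textbf{Step 2 (factorization).} Since $P_p$ commutes with $\rho_p(U_a)$, $\rho_p(U_e)$ and $\rho_p(U_b)$, I move the three group factors past $P_p$ and past the equivariant maps $\mathcal E,\mathcal E'$. They then combine into $\rho_{x,u}(U_aU_eU_b)=\rho_{x,u}(U_{\partial p})$, up to cyclic conjugation, which is harmless under the trace. This yields $R_{x,u}(U)=A_{x,u;N}\rho_{x,u}(U_{\partial p})$ after conjugating by the cyclic rotation.

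\textbf{Main obstacle.} The hard step is bookkeeping in the coevaluation cases, to ensure that the loop holonomy ends up in $W_{\tau_{x,u}(\Lc)}$ and not inside $R_{x,u}$. I would first handle the covariant contraction case carefully in index notation and then treat the other three by duality ($V\leftrightarrow V^*$, $e\leftrightarrow e^{-1}$).

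\textbf{Step 3 (character expansion).} Decompose $T^{\mathrm{aux}}_{x,u}=\bigoplus_\beta M_\beta\otimes V_\beta$ into isotypic components, with finitely many $\beta$. By Schur's lemma, $A_{x,u;N}$ commuting with $\rho_{x,u}$ acts as $B_\beta\otimes \Id_{V_\beta}$ on each summand. Therefore
\[
\Tr(A\rho(g))=\sum_\beta \Tr(B_\beta)\chi_\beta(g),
\qquad
\Tr(AP^{T^{\mathrm{aux}}}_{\beta,N})=\Tr(B_\beta)\,d_\beta,
\]
which gives~\eqref{eq:incidence-recoupling-coefficient}. I take $\B_{x,u}(\alpha_p)$ to be the set of $\beta$ with nonzero coefficient. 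By Pieri-type rules these $\beta$ differ from $\alpha_p$ by one box, though the proof does not need this.
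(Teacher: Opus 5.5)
Your overall route is the paper's. You first factor $R_{x,u}(U)$ as a gauge-independent equivariant operator times $\rho_{x,u}(U_{\partial p})$, then decompose $T^{\mathrm{aux}}_{x,u}$ isotypically and apply Schur's lemma. Step~3 is correct and is exactly the second half of the paper's proof.

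The gap is Step~2, and it is not bookkeeping confined to the coevaluation cases: it already fails in the covariant contraction case. After the magic formula, the $(x,u)$ term equals $-W_{\Lc\setminus x}(U)\,\Tr_{T_p}\bigl(P_p\,M_u\,\rho_p(g)\bigr)$. Here $g$ is a cyclic rotation of $U_{\partial p}$, $M$ is the correspondingly rotated holonomy of $\ell_i$ at $x$, and $M_u$ acts on slot $u$ only. The loop holonomy sits on slot $u$, and $P_p$ is not of the form (operator on slot $u$)$\otimes$(operator on the other slots). So $M$ cannot be extracted as a single factor $\Tr(Mg)$: your maps $\mathcal E,\mathcal E'$ would have to be ``contract slot $u$ against $M$'', which depends on the loop. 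Concretely, take $\alpha_p=[(2),\varnothing]_N$ with $P_p=\frac12(1+\mathrm{swap})$ on $V\otimes V$. The $u=1$ term is
\[
-\tfrac12\bigl(\Tr(g)\Tr(Mg)+\Tr(Mg^2)\bigr),
\]
and $\Tr(Mg^2)$ is the trace of $\ell_i$ merged with $(\partial p)^2$. Here $T^{\mathrm{aux}}_{x,u}=V$, so any equivariant $A$ is a scalar $c$, and the claimed form is $c\,\Tr(g)\Tr(Mg)$. Now take $\Lc=(\ell_i)$ with $\ell_i$ the boundary of a plaquette sharing $e$ with $p$, $N=2$, $U_e=I$, $g=\mathrm{diag}(1,-1)$ and $M=I$. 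Then $\Tr(Mg)=0$, while the full mixed term (summed over $u$) equals $-2$. So no such $A$ exists, and even the weaker factorization of Proposition~\ref{prop:local-incidence-tensor} fails.

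Independently, the coevaluation rows have the wrong auxiliary space, and the ``extra leg absorbed into $W_{\tau_{x,u}(\Lc)}$'' does not exist. In the negative merger the backtrack $e^{-1}e$ cancels and the active strand is consumed just as in the contraction case. Hence the leftover plaquette tensor always has $r_p-1$ slots. For example, with $\alpha_p=[(1),\varnothing]_N$ and $x\in B_i(e;\Lc)$, the term is $+\Tr(U_{\ell_i\ominus\partial p})\cdot 1$. But $\Tr_{V\otimes V}(A\rho(g))$ for equivariant $A$ is a combination of $\chi_{[(2),\varnothing]_N}$ and $\chi_{[(1,1),\varnothing]_N}$, never the constant $1$. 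Equivalently, scaling $U_b\mapsto zU_b$ on an edge of $p$ not used by $\Lc$ shows the claimed $T^{\mathrm{aux}}_{x,u}$ has $\U(1)$-charge off by $2$.

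The paper's proof does not repair this; it only asserts that all gauge dependence is carried by $U_{\partial p}$, so following it will not close the gap. What is actually true is that the $(x,u)$ term is $\mp W_{\Lc\setminus x}\Tr_V\bigl(M\,F_u(g)\bigr)$, where $F_u(g)$ is the partial trace of $P_p\rho_p(g)$ over all slots but $u$ (transposed for a dual slot). By conjugation equivariance, $F_u(g)=\sum_k f_k(g)g^k$ with class functions $f_k$; in the example above, $F_1(g)=\frac12(\Tr(g)g+g^2)$. Only the $k=1$ part has the shape you (and the statement) want. A correct version must allow mergers winding $k\neq1$ times around $p$, and then the leftover labels need not differ from $\alpha_p$ by a single box.
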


\begin{proof}
Fix a pair \((x,u)\) as in Proposition~\ref{prop:local-incidence-tensor}, and write $T^{\mathrm{aux}}_{x,u}=T_{n',m'}:=V^{\otimes n'}\otimes (V^*)^{\otimes m'}.$ The pair \((n',m')\) is one of the following four possibilities:
\[
(n_p^+-1,n_p^-),\qquad (n_p^++1,n_p^-),\qquad
(n_p^+,n_p^-+1),\qquad (n_p^+,n_p^--1),
\]
according to the local type of \((x,u)\). By construction in Proposition~\ref{prop:local-incidence-tensor}, the contribution indexed by \((x,u)\) is obtained from the mixed-tensor formula for \(\chi_{\alpha_p}(U_{\partial p})\) by performing exactly one local contraction or one local coevaluation at the active incidence, while leaving all other plaquette letters unchanged. Therefore there exists a \(\U(N)\)-equivariant endomorphism $A_{x,u;N}\in \End(T_{n',m'})$ such that
\[
R_{x,u}(U)=A_{x,u;N}\,\rho_{n',m'}(U_{\partial p}),
\]
where \(\rho_{n',m'}\) denotes the natural mixed tensor representation on \(T_{n',m'}\). Indeed, all dependence on the gauge variables is still carried by the plaquette holonomy \(U_{\partial p}\), whereas the local contraction/coevaluation and the projector/idempotent data produce a \(\U(N)\)-equivariant operator independent of \(U\).

Since $T_{n',m'}$ is a finite-dimensional rational $\U(N)$-module, it is completely reducible. Write
\[
T_{n',m'}\simeq \bigoplus_{\beta\in\B_{n',m'}}M_\beta\otimes V_\beta,
\qquad
\rho_{n',m'}(g)=\bigoplus_{\beta\in\B_{n',m'}}\Id_{M_\beta}\otimes\rho_\beta(g).
\]
Because $A_{x,u;N}$ is equivariant, Schur's lemma gives
\[
A_{x,u;N}=\bigoplus_{\beta\in\B_{n',m'}}A_{x,u;\beta}(N)\otimes\Id_{V_\beta}.
\]
Hence
\[
\Tr_{T_{n',m'}}\!\bigl(A_{x,u;N}\rho_{n',m'}(g)\bigr)
=\sum_{\beta\in\B_{n',m'}}\Tr_{M_\beta}(A_{x,u;\beta}(N))\,\chi_\beta(g).
\]
The orthogonal $\beta$-isotypic projector is $P^{T^{\mathrm{aux}}}_{\beta,N}=\Id_{M_\beta}\otimes\Id_{V_\beta}$ on this summand and zero elsewhere, so
\[
\Tr_{T^{\mathrm{aux}}_{x,u}}\!\left(A_{x,u;N}P^{T^{\mathrm{aux}}}_{\beta,N}\right)
=d_\beta\Tr_{M_\beta}(A_{x,u;\beta}(N)).
\]
Thus the character coefficient is exactly~\eqref{eq:incidence-recoupling-coefficient}. Taking $\B_{x,u}(\alpha_p)$ to be the set of summands for which this coefficient is non-zero proves the result.
\end{proof}

\medskip

The finite recoupling set \(\B_{x,u}(\alpha_p)\) might seem abstract at first glance, but we can describe it quite explicitly: it is related to branching rules on rational highest weights already described in \cite[\S 2.2]{Lem25} for Wilson loops on compact surfaces: if \(\lambda,\nu\) are partitions, we write \(\lambda\nearrow\nu\) when \(\nu\) is obtained from \(\lambda\) by adding one
box. Let \(\alpha_p=[\lambda^+,\lambda^-]_N\), with
\(n^\pm=|\lambda^\pm|\). Combining Proposition~\ref{prop:local-incidence-tensor} with
the rational Pieri rules recalled in \cite{Lem25}, one obtains the following
support constraints:
\[
\B_{x,u}(\alpha_p)\subset
\begin{cases}
\bigl\{[\nu,\lambda^-]_N:\nu\nearrow\lambda^+\bigr\},
&
x\in A_i(e;\Lc),\ 1\leq u\leq n^+,
\\[0.5em]
\bigl\{[\nu,\lambda^-]_N:\lambda^+\nearrow\nu\bigr\}
\cup
\bigl\{[\lambda^+,\eta]_N:\eta\nearrow\lambda^-\bigr\},
&
x\in B_i(e;\Lc),\ 1\leq u\leq n^+,
\\[0.5em]
\bigl\{[\nu,\lambda^-]_N:\nu\nearrow\lambda^+\bigr\}
\cup
\bigl\{[\lambda^+,\eta]_N:\lambda^-\nearrow\eta\bigr\},
&
x\in A_i(e;\Lc),\ n^+<u\leq n^++n^-,
\\[0.5em]
\bigl\{[\lambda^+,\eta]_N:\eta\nearrow\lambda^-\bigr\},
&
x\in B_i(e;\Lc),\ n^+<u\leq n^++n^- .
\end{cases}
\]
Hence, the spectral part of the master loop equation translates into branching rules: at the incidence \((e,p)\), it changes only the active plaquette label \(\alpha_p\), and only by adding or removing a box on \(\lambda^+\) or \(\lambda^-\). The coefficients \(b^\tau_{e,p}(\alpha_p,\beta;N)\) contain the corresponding finite-\(N\) multiplicities and local contraction amplitudes.

A direct consequence of Proposition~\ref{prop:explicit-incidence-recoupling} is that
\[
\int_{\U(N)^{E(\Lambda)}} \sum_{a=1}^{N^2}(\partial_{e,a}W_{\Lc})\partial_{e,a}\chi_{\alpha_p}(U_{\partial p})\prod_{q\neq p}\chi_{\alpha_q}(U_{\partial q})dU=\sum_{x,u}\sum_{\beta\in \B_{x,u}(\alpha_p)}c^N_{x,u}(\alpha_p,\beta)\widehat W_{\Lambda,\tau_{x,u}(\Lc)}(\alpha^{p\to \beta}).
\]
Grouping the finitely many elementary contributions indexed by the pairs $(x,u)$ according to the resulting loop family and the resulting recoupled plaquette label, we obtain the following packaged form.

Let
\[
I_{e,p}(\Lc):=\{(x,u): x \text{ is an active occurrence of } e \text{ or } e^{-1}\text{ in }\Lc,\ 1\le u\le r_p\}.
\]
Let
\[
\mathcal T_{e,p}(\Lc):=\{\tau_{x,u}(\Lc):(x,u)\in I_{e,p}(\Lc)\}
\]
be the finite set of loop families obtained from the elementary local surgeries of Proposition~\ref{prop:local-incidence-tensor}.

For each $\tau\in \mathcal T_{e,p}(\Lc)$, define
\[
\B^\tau_{e,p}(\alpha_p):=\Bigl\{\beta\in \widehat{\U(N)}:\exists (x,u)\in I_{e,p}(\Lc)\text{ such that }\tau_{x,u}(\Lc)=\tau,\ \beta\in \B_{x,u}(\alpha_p)\Bigr\}.
\]
For $\beta\in \B^\tau_{e,p}(\alpha_p)$, define
\[
b^\tau_{e,p}(\alpha_p,\beta;N):=\sum_{\substack{(x,u)\in I_{e,p}(\Lc)\\ \tau_{x,u}(\Lc)=\tau\\ \beta\in \B_{x,u}(\alpha_p)}}c^N_{x,u}(\alpha_p,\beta).
\]
Then
\begin{align*}
&\int_{\U(N)^{E(\Lambda)}}
\sum_{a=1}^{N^2}(\partial_{e,a}W_\Lc)
\partial_{e,a}\chi_{\alpha_p}(U_{\partial p})
\prod_{q\neq p}\chi_{\alpha_q}(U_{\partial q})\,dU\\
&\qquad=
\sum_{\tau\in \mathcal T_{e,p}(\Lc)}
\sum_{\beta\in \B^\tau_{e,p}(\alpha_p)}
b^\tau_{e,p}(\alpha_p,\beta;N)
\widehat{W}_{\Lambda,\tau}(\alpha^{p\to\beta}).
\end{align*}
Before packaging the previous results into the operator $\mathscr{B}_{e,p}$, let us spell out its conceptual content. We have seen that the mixed incidence term is controlled by only three kinds of local data:
\begin{enumerate}
\item the choice of an active occurrence of $e$ or $e^{-1}$ in the loop family;
\item the choice of an active tensor slot in the mixed-tensor realization of
$\chi_{\alpha_p}(U_{\partial p})$;
\item the local contraction pattern determined by the invariant pairing
$\sum_a X_a \otimes X_a$ at the incidence $(e,p)$.
\end{enumerate}

Since each of these choices is finite for fixed $(\Lc,\alpha_p)$, the whole mixed term reduces to a finite linear combination of new decorated coefficients. The loop part changes only through a local surgery at $(e,p)$, while the plaquette part changes only through a finite recoupling of the single active label $\alpha_p$. In particular, no other plaquette labels are touched. This is the coefficientwise locality statement that replaces the classical Wilson-action deformation term.

From this point of view, Propositions~\ref{prop:local-incidence-tensor} and~\ref{prop:explicit-incidence-recoupling} are best understood as hybrid position/Fourier statements. On the loop side, it still looks like a local surgery rule in the one-skeleton of the lattice. On the plaquette side, however, it is already a representation-theoretic transfer rule: the active character $\chi_{\alpha_p}$ is replaced by a finite combination of characters $\chi_\beta$, with coefficients depending only on the local incidence type. This is why the operator $\mathscr{B}_{e,p}$ is generally not diagonal in the plaquette label, and also why the subsequent specializations behave differently: for the Wilson action, resummation collapses this local transfer rule back to the usual deformation terms, whereas for the heat-kernel action it remains visible as a genuinely spectral local operator on the active plaquette label.

This local-recoupling viewpoint is also closer in spirit to recent representation-theoretic formulas in two-dimensional Yang--Mills. For example, in L\'evy's recent combinatorial formula on compact surfaces \cite{Lev26}, Wilson loop expectations for a family of loops $\Lc$ in a compact surface $\Sigma$ are expressed as sums over highest-weight assignments to the connected components of $\Sigma\setminus\Lc$, with local factors attached to intersection points. Although L\'evy's results are more compact and elegant, thanks to special properties of the two-dimensional setting, the philosophy is still similar: once one has passed to the spectral side, the interaction of Wilson loops with the gauge field is encoded by local transfer rules between nearby representation labels.

We are now in position to define the second operator involved in the master loop equation.

\begin{definition}
For an incidence $(e,p)$, define the operator $\mathscr{B}_{e,p}$ on the decorated coefficients by
\[
(\mathscr{B}_{e,p}\widehat{W}_{\Lambda,\Lc})(\alpha):=\sum_{\tau\in\mathcal T_{e,p}(\Lc)}\ \sum_{\beta\in \B^\tau_{e,p}(\alpha_p)}b^\tau_{e,p}(\alpha_p,\beta;N)\widehat{W}_{\Lambda,\tau}(\alpha^{p\to \beta}),
\]
where the grouped local surgeries $\tau\in \mathcal T_{e,p}(\Lc)$, the finite sets $\B^\tau_{e,p}(\alpha_p)$, and the coefficients $b^\tau_{e,p}(\alpha_p,\beta;N)$ are those obtained at the end of Proposition~\ref{prop:explicit-incidence-recoupling}.
\end{definition}

\subsection{Proof of the master loop equation}

We can now prove the universal coefficientwise master loop equation.

\begin{proof}[Proof of Theorem~\ref{thm:universal-coefficientwise-master-equation}]
Start from Proposition~\ref{prop:first-order-haar-ibp}. The first term is identified with the universal loop operator by Proposition~\ref{prop:universal-loop-laplacian-term}. Each local incidence term is identified with the corresponding operator $\mathscr{B}_{e,p}$ by the grouped form obtained at the end of Proposition~\ref{prop:explicit-incidence-recoupling}. Summing over the incident plaquettes gives the result.
\end{proof}

\section{Wilson-action specializations}\label{sec:specializations}

The results of Sections~\ref{sec:surface-sum}--\ref{sec:master-loop} were formulated at fixed plaquette decoration $\alpha : P(\Lambda)\to \widehat{\U(N)}$. The purpose of this section is to show that the universal formalism recovers, with the correct normalizations, the known Wilson-action identities. Recall that we define the Wilson action by
\[
Q_t^{\mathrm W}(g)=\exp(t\Re\Tr(g))= \exp\left(\frac t2\bigl(\Tr(g)+\Tr(g^{-1})\bigr)\right),
\]
and in the general plaquette-weight notation of Section~\ref{sec:intro} we take $Q_p=Q^{\mathrm W}_{\beta_p}$, with plaquette couplings $(\beta_p)_{p\in P(\Lambda)}$. We still fix a loop family $\Lc=(\ell_1,\ldots,\ell_k)$.

\subsection{Surface-sum expansion}

The global surface expansion admits a more geometric specialization than the general coarse decorated-surface expansion of Theorem~\ref{thm:surface-sum-Wilsonloops}. The natural objects are no longer arbitrary decorated spanning surfaces, but edge-plaquette embeddings. However, we will recover the surface-sum from an intermediary step of Section~\ref{sec:surface-sum}.

Let us first prove a correspondence that recasts the coefficients in Proposition~\ref{prop:character_surface} in terms of Cao--Park--Sheffield's \emph{dual bipartite maps} (DBM). We refer to \cite{CPS25} for a precise description and definition.

\begin{lemma}\label{lem:trace-specialization-DBM}
Let $\Gamma=(\Gamma_1,\dots,\Gamma_k)$ be a balanced collection of words on the alphabet $\{\lambda_1,\dots,\lambda_L\}$. The following holds:
\begin{enumerate}
\item For every $i$, one has $r_i=1$, the Brauer index set $\mathcal I$ is a singleton, and $h(\tau)=0$. Consequently, the surface construction of Section~\ref{sec:construction-spanning-surfaces} involves no Brauer gadgets: it consists only of one polygon for each word $\Gamma_i$ and one Haar patch for each letter $\lambda_\ell$.

\item The data of a pair $(\tau,\sigma)$ occurring in Proposition~\ref{prop:character_surface} are therefore equivalent to the data of a collection of pairs of bijections
\[
(\alpha_\ell,\beta_\ell)\in S_{n_\ell,n_\ell},
\qquad \ell\in[L],
\]
where $n_\ell$ is the number of occurrences of $\lambda_\ell$ in $\Gamma$. The corresponding coarse surface class is canonically identified with an element $M\in DBM(\Gamma)$ in the sense of \cite{CPS25}.

\item Under this identification, the partition $\mu_\ell(M)$ is the cycle type of $\alpha_\ell^{-1}\beta_\ell$, and
\[
\Omega_N(\Xi)N^{\chi(\Xi)}=\left(\prod_{\ell=1}^L \mathrm{Wg}_N(\mu_\ell(M))\right)N^{V(M)},
\]
where $\Xi$ is the coarse surface class corresponding to $M$, and $V(M)$ denotes the number of vertices of $M$.
\end{enumerate}
\end{lemma}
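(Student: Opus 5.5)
The plan is to specialize the general machinery of Theorem~\ref{prop:character_weingarten} and Proposition~\ref{prop:character_surface} to the trace-only case. In that case every factor is $\Tr(\Gamma_i(U))=\chi_{[(1),\varnothing]_N}(\Gamma_i(U))$, so $\lambda_i^+=(1)$ and $\lambda_i^-=\varnothing$. Each of the three items then follows from a direct inspection of the construction.

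\textbf{Item 1.} Since $n_i^+=1$ and $n_i^-=0$, we get $r_i=1$. The walled Brauer set $\B_{1,0}$ contains only the identity diagram, because a horizontal strand would have to cross the wall and there is nothing on the other side. Moreover the projector $P_N^{[(1),\varnothing]}$ is the identity of $V$: there are no contractions, so $\TT_{1,0}=V$. Hence the expansion of Corollary~\ref{cor:PN-nm} is the single term $c_N(\Id)=1$, with no choice involved even outside the stable range. It follows that $\mathcal I$ is a singleton and $h(\tau)=0$. Each Brauer gadget is then a single vertical band, i.e.\ a rectangle glued along one side to a polygon edge. Collapsing these rectangles is a homotopy equivalence that preserves $\chi$ and the map structure up to decorated isomorphism. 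I will note that we either perform this collapse or regard the gadgets as trivial, so that only one polygon per word and one Haar patch per letter remain.

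\textbf{Item 2.} Since $\tau$ is fixed, the data $(\tau,\sigma)$ reduce to $\sigma=((\alpha_\ell,\beta_\ell))_\ell$. With $r_i=1$, the sign $\delta(x)=\varepsilon_{i,r}$, so $|\mathcal S_\ell^\pm|$ counts the occurrences of $\lambda_\ell^{\pm1}$. The balancedness assumption gives $p_\ell=n_\ell$. I will then match this with the CPS construction. There, a dual bipartite map is obtained by taking one polygon per word and, for each letter, pairing the $\lambda_\ell$-edges with the $\lambda_\ell^{-1}$-edges via two bijections, recording row and column identifications. Concretely, I would verify that the Haar patch $H_\ell(\alpha_\ell,\beta_\ell)$ is exactly CPS's ``blue/yellow'' gluing at a letter. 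Its row Kronecker deltas glue the starting vertices and its column deltas glue the ending vertices of the paired edges. The resulting cell complex is the DBM, with the patch faces corresponding to its $\ell$-labelled faces.

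\textbf{Item 3.} Computing the weight is the step I expect to be the real obstacle, namely pinning down normalizations. First, $\mathrm{Wg}_N(\alpha_\ell^{-1}\beta_\ell)$ depends only on the cycle type, and in the patch the face boundary alternates along the cycles of $\alpha_\ell^{-1}\beta_\ell$. This identifies $\mu_\ell(M)$ with that cycle type, matching CPS's definition of the face degrees. Second, with $c_N=1$ and $h=0$, we have $\Omega_N(\Xi)N^{\chi(\Xi)}=\sum \prod_\ell\mathrm{Wg}_N(\cdot)\,\mathcal K_N(\tau,\sigma)$. The index sum $\mathcal K_N$ runs over free indices $I,J$ subject to the Kronecker constraints. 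After those identifications, each free index lives on one equivalence class of polygon corners, that is, on one vertex of $M$, so $\mathcal K_N=N^{V(M)}$. The remaining difficulty is that $\Omega_N(\Xi)$ is defined as a sum over all $(\tau,\sigma)$ in a class. I must check that distinct $\sigma$ give non-isomorphic decorated classes. This should hold because the decorated isomorphisms preserve the ordered labellings of $\mathcal S_\ell^\pm$, which recover $(\alpha_\ell,\beta_\ell)$, so each fibre is a single term. Finally, I would remark that $\Theta_N$ was defined precisely so that the powers $N^{-\chi+h}$ cancel, leaving the stated identity.
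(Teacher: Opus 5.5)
Your proposal is correct and follows the paper's proof. You trivialize the Brauer layer ($r_i=1$, $\B_{1,0}=\{\Id\}$, $c_N=1$, $h=0$), identify the Haar patches with the blue faces of a dual bipartite map, read $\mu_\ell(M)$ off the cycles of $\alpha_\ell^{-1}\beta_\ell$, and show $\mathcal K_N(\tau,\sigma)=N^{V(M)}$; the only differences are that you verify directly, by counting indices on corner classes, the two facts the paper imports from \cite{CPS25} (Definition~3.5 and Lemma~3.4), and you make explicit that each class has a single preimage $(\tau,\sigma)$, which the paper leaves implicit.
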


\begin{proof}
Since $\chi_{[(1),\varnothing]_N}(g)=\Tr(g)$, the mixed tensor space attached to each word is just $V$, hence $r_i=1$ for every $i$. Therefore the local Brauer index set is trivial, there is no nontrivial projector expansion, and $h(\tau)=0$. The construction of Section~\ref{sec:construction-spanning-surfaces} reduces to the following: one starts with one polygon for each word $\Gamma_i$, and for each letter $\lambda_\ell$ one glues in a Haar patch determined by a pair of bijections $(\alpha_\ell,\beta_\ell)\in S_{n_\ell,n_\ell}$. This is exactly the construction of a map from the strand diagram of $\Gamma$: the word-polygons are the yellow faces, and the Haar patches are the blue faces. Hence the resulting coarse surface classes are canonically identified
with the elements of $DBM(\Gamma)$. 

By \cite[Definition 3.5]{CPS25}, for $M\in DBM(\Gamma)$ and each $\ell\in[L]$, the partition $\mu_\ell(M)$ is given by the half-degrees of the blue faces glued into the strand diagram of $\lambda_\ell$. Equivalently, $\mu_\ell(M)$ is the cycle type of $\alpha_\ell^{-1}\beta_\ell$. Thus the Weingarten factor attached to $(\tau,\sigma)$ is exactly
\[
\prod_{\ell=1}^L \mathrm{Wg}_N(\mu_\ell(M)).
\]
Moreover, by \cite[Lemma 3.4]{CPS25}, the number of connected components of the strand diagram associated with the bijection data $(\alpha_\ell,\beta_\ell)_{\ell\in[L]}$ is exactly the number $V(M)$ of vertices of the corresponding map. In the present trace-only specialization, the index-contraction factor $\mathcal K_N(\tau,\sigma)$ is therefore equal to $N^{V(M)}$. Since
\[
\Omega_N(\Xi)N^{\chi(\Xi)}
\]
is the total contribution of the corresponding pair $(\tau,\sigma)$, it follows that
\[
\Omega_N(\Xi)N^{\chi(\Xi)} = \left(\prod_{\ell=1}^L \mathrm{Wg}_N(\mu_\ell(M))\right)N^{V(M)},
\]
as claimed. 
\end{proof}

\begin{corollary}[\cite{CPS25}, Proposition 3.8]\label{cor:cps}
Let $\Gamma=(\Gamma_1,\dots,\Gamma_k)$ be a balanced collection of words on the alphabet $\{\lambda_1,\dots,\lambda_L\}$. For each $\ell\in\{1,\dots,L\}$, let $n_\ell$ be the number of occurrences of $\lambda_\ell$ in $\Gamma$. For $n\ge 0$, denote by $\mathrm{Wg}_{N}:S_n\to\mathbb{C}$ the ordinary unitary Weingarten function. If $\mu\vdash n$, write $\mathrm{Wg}_{N}(\mu)$ for its common value on permutations of cycle type $\mu$. If one defines
\[
\widetilde{\mathrm{Wg}}_{N}(\pi):=
N^{n+|\pi|}\mathrm{Wg}_{N}(\pi),
\qquad
|\pi|:=n-\#\mathrm{cycles}(\pi),
\]
then
\[
\mathbb{E}\left[\prod_{i=1}^k \Tr(\Gamma_i(U))\right]=\sum_{M\in \mathrm{DBM}(\Gamma)}\left(\prod_{\ell=1}^L \widetilde{\mathrm{Wg}}_{N}(\mu_\ell(M))\right)N^{\chi(M)-k}.
\]
\end{corollary}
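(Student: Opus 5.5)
The plan is to derive Corollary~\ref{cor:cps} directly from Proposition~\ref{prop:character_surface} in the trace-only case, using Lemma~\ref{lem:trace-specialization-DBM} to identify surface classes with dual bipartite maps. After that, the work is to convert the raw Weingarten weights into the renormalized ones $\widetilde{\mathrm{Wg}}_N$ and match the powers of $N$ with $\chi(M)-k$.

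First, we apply Proposition~\ref{prop:character_surface} with every representation equal to $[(1),\varnothing]_N$, so that each character is $\Tr(\Gamma_i(U))$. By parts~1 and~2 of Lemma~\ref{lem:trace-specialization-DBM}, $\mathcal I$ is a singleton and $h\equiv 0$. The sum over classes $\Xi\in S$ becomes a sum over $M\in\mathrm{DBM}(\Gamma)$, and part~3 gives
\[
\mathbb E\Bigl[\prod_i\Tr(\Gamma_i(U))\Bigr]=\sum_{M}\Bigl(\prod_\ell \mathrm{Wg}_N(\mu_\ell(M))\Bigr)N^{V(M)}.
\]
We need the identification $(\alpha_\ell,\beta_\ell)\leftrightarrow M$ to be a bijection rather than a many-to-one quotient. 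This holds because the decorated classes remember the labelled Haar ports and the orderings of $\mathcal S_\ell^\pm$, so no automorphism factors appear. This is the same convention as in \cite{CPS25}.

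Second, we substitute $\mathrm{Wg}_N(\mu_\ell)=N^{-n_\ell-|\mu_\ell|}\widetilde{\mathrm{Wg}}_N(\mu_\ell)$. The claim then reduces to the purely combinatorial identity
\[
V(M)-\sum_\ell\bigl(n_\ell+|\mu_\ell(M)|\bigr)=\chi(M)-k.
\]
To prove it, we count cells of $M$ in the CPS convention: $k$ yellow faces (one per word), blue faces of half-degrees given by the parts of $\mu_\ell$, so $\sum_\ell \ell(\mu_\ell)=\sum_\ell(n_\ell-|\mu_\ell|)$ blue faces in total, and one edge per letter occurrence on each of the two sides of the strand diagram. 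Each occurrence of $\lambda_\ell$ is paired by $\alpha_\ell$ and by $\beta_\ell$, which gives $2\sum_\ell n_\ell$ edges. Then
\[
\chi(M)=V(M)-2\sum_\ell n_\ell+k+\sum_\ell(n_\ell-|\mu_\ell|),
\]
which rearranges exactly to the desired identity.

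The main obstacle is this edge count. It must agree with the precise edge and face structure of \cite[Definition 3.5]{CPS25}, and the bookkeeping can easily shift by a factor of two. If the count does not close, I would first check it on the simplest example, $\Gamma=(\lambda\lambda^{-1})$, where the expectation is $N$. There $M$ should be a sphere with $\chi=2$, $k=1$, and $\widetilde{\mathrm{Wg}}_N=N^{1}\cdot N^{-1}=1$, so the right-hand side is $N^{2-1}=N$. With this example fixed, I would adjust the edge convention until the formula matches, then verify the general count.
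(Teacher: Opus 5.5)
Your proposal is correct and follows the paper's proof essentially verbatim. You reduce via Lemma~\ref{lem:trace-specialization-DBM} to $\sum_{M}\bigl(\prod_\ell\mathrm{Wg}_N(\mu_\ell(M))\bigr)N^{V(M)}$, then convert to $\widetilde{\mathrm{Wg}}_N$ and $N^{\chi(M)-k}$ using $E(M)=2\sum_\ell n_\ell$, $F(M)=k+\sum_\ell\ell(\mu_\ell(M))$ and Euler's formula. These are exactly the identities the paper quotes from \cite{CPS25}, so no adjustment of edge conventions is needed. One notational fix: write $|\pi_\ell|$ rather than $|\mu_\ell|$, since in this paper $|\mu|$ denotes the size of a partition, which would be $n_\ell$.
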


\begin{proof}
Apply Proposition~\ref{prop:character_surface} in the trace-only specialization described in Lemma~\ref{lem:trace-specialization-DBM}. By that lemma, the sum over coarse surface
classes may be rewritten as a sum over $M\in DBM(\Gamma)$, and one obtains
\[
\mathbb E\left[\prod_{i=1}^k \Tr(\Gamma_i(U))\right]=\sum_{M\in DBM(\Gamma)}\left(\prod_{\ell=1}^L \mathrm{Wg}_N(\mu_\ell(M))\right)N^{V(M)}.
\]
This is already exactly the CPS map expansion before normalization of the Weingarten function. 

Now recall the identities stated in \cite{CPS25}:
\[
E(M)=2\sum_{\ell=1}^L n_\ell,\qquad F(M)=k+\sum_{\ell=1}^L \ell(\mu_\ell(M)),
\]
where $\ell(\mu)$ denotes the number of parts of the partition $\mu$. If $\pi_\ell\in S_{n_\ell}$ has cycle type $\mu_\ell(M)$, then
\[
|\pi_\ell|=n_\ell-\ell(\mu_\ell(M)),
\]
hence
\[
\sum_{\ell=1}^L \bigl(n_\ell+|\pi_\ell|\bigr)=2\sum_{\ell=1}^L n_\ell-\sum_{\ell=1}^L \ell(\mu_\ell(M))=E(M)-F(M)+k.
\]
Therefore, by the definition
\[
\widetilde{\mathrm{Wg}}_{N}(\pi):=N^{n+|\pi|}\mathrm{Wg}_{N}(\pi),
\]
we get
\[
\prod_{\ell=1}^L \widetilde{\mathrm{Wg}}_N(\mu_\ell(M))=\left(\prod_{\ell=1}^L \mathrm{Wg}_N(\mu_\ell(M))\right)
N^{E(M)-F(M)+k}.
\]
Using Euler's identity
\[
\chi(M)=V(M)-E(M)+F(M),
\]
we conclude that
\[
\left(\prod_{\ell=1}^L \widetilde{\mathrm{Wg}}_N(\mu_\ell(M))\right)N^{\chi(M)-k}=\left(\prod_{\ell=1}^L \mathrm{Wg}_N(\mu_\ell(M))\right)N^{V(M)}.
\]
Substituting this into the previous formula yields the desired identity. 
\end{proof}

For two multiplicity fields
\[
K^+,K^-:P(\Lambda)\to \mathbb N,
\]
define the associated collection of words
\[
\Gamma_{\Lambda,\Lc}(K^+,K^-):=(\ell_1,\dots,\ell_k)\sqcup\bigsqcup_{p\in P(\Lambda)} (\partial p)^{\sqcup K^+(p)}\sqcup\bigsqcup_{p\in P(\Lambda)} \bigl((\partial p)^{-1}\bigr)^{\sqcup K^-(p)} .
\]
We write
\[
|K|:=\sum_{p\in P(\Lambda)} \bigl(K^+(p)+K^-(p)\bigr),\qquad(K^+,K^-)!:=\prod_{p\in P(\Lambda)} K^+(p)!K^-(p)!,
\]
and
\[
\beta^{K^+,K^-}:=\prod_{p\in P(\Lambda)}\left(\frac{\beta_p}{2}\right)^{K^+(p)+K^-(p)}.
\]

We denote by $\EPE_{\Lambda}(\Lc;K^+,K^-)$ the set of edge-plaquette embeddings $(M,\phi)$ with boundary $\Lc=(\ell_1,\dots,\ell_k)$ and oriented plaquette counts $(K^+,K^-)$, meaning that:
\begin{itemize}
\item $M$ is a compact oriented surface with $k$ ordered boundary components;
\item $\phi:M\to \Lambda^{(2)}$ is cellular;
\item the $i$-th boundary component maps to the loop $\ell_i$;
\item for each plaquette $p$, exactly $K^+(p)$ plaquette-faces of $M$ are mapped homeomorphically onto $p$ with the chosen orientation, and exactly $K^-(p)$ plaquette-faces are mapped homeomorphically onto $p$ with the opposite orientation.
\end{itemize}
For $(M,\phi)\in \EPE_{\Lambda}(\Lc;K^+,K^-)$ and $e\in E(\Lambda)$, let $\mu_e(\phi)$ be the partition whose parts are half the boundary-degrees of the edge-faces mapped to $e$.

\begin{corollary}
\label{cor:wilson-epe}
For the Wilson action,
\[
Z\E[W_{\Lambda,\Lc}(U)]=\sum_{K^+,K^-:P(\Lambda)\to \mathbb N}\frac{\beta^{K^+,K^-}}{(K^+,K^-)!}\sum_{(M,\phi)\in \EPE_{\Lambda}(\Lc;K^+,K^-)}\left(\prod_{e\in E(\Lambda)}\widetilde{\mathrm{Wg}}_N(\mu_e(\phi))\right)N^{\chi(M)-k},
\]
where $k$ is the number of loops in the family $\Lc$.
\end{corollary}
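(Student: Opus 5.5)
Rather than working through the character expansion of Theorem~\ref{thm:state_sum_Wilson}, we expand the Wilson weight directly in traces and then apply Corollary~\ref{cor:cps}; the surface-sum machinery of Proposition~\ref{prop:character_surface} is only used in its trace-only form. Throughout, $Z$ denotes $Z_{\Lambda,Q}$ for $Q_p=Q^{\mathrm W}_{\beta_p}$.

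\emph{Step 1: exponential expansion.} Since $\Re\Tr(g)=\frac12(\Tr(g)+\Tr(g^{-1}))$, each plaquette weight can be written as
\[
\exp\Bigl(\tfrac{\beta_p}{2}\Tr(U_{\partial p})\Bigr)\exp\Bigl(\tfrac{\beta_p}{2}\Tr(U_{\partial p}^{-1})\Bigr).
\]
Expanding both exponentials as power series gives
\[
\prod_p Q_p(U_{\partial p})=\sum_{K^+,K^-}\frac{\beta^{K^+,K^-}}{(K^+,K^-)!}\prod_p \Tr(U_{\partial p})^{K^+(p)}\Tr(U_{\partial p}^{-1})^{K^-(p)}.
\]
The series is absolutely convergent, uniformly in $U$, because $|\Tr|\le N$; it is dominated by $\prod_p e^{\beta_p N}$ (the $\beta_p$ being real). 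Multiplying by $W_{\Lambda,\Lc}(U)$ and using Fubini, we obtain
\[
Z\,\E[W_{\Lambda,\Lc}(U)]=\sum_{K^+,K^-}\frac{\beta^{K^+,K^-}}{(K^+,K^-)!}\,\E_{\mathrm{Haar}}\Bigl[\prod_{w\in\Gamma_{\Lambda,\Lc}(K^+,K^-)}\Tr(U_w)\Bigr],
\]
where the expectation on the right is taken with respect to Haar measure on $\U(N)^{E(\Lambda)}$.

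\emph{Step 2: trace-only surface sum.} For each fixed $(K^+,K^-)$, the word collection $\Gamma_{\Lambda,\Lc}(K^+,K^-)$ uses the edges of $\Lambda$ as its alphabet. If it is unbalanced in some edge, the Haar integral vanishes by Theorem~\ref{prop:character_weingarten}, and we will check that $\EPE_\Lambda(\Lc;K^+,K^-)$ is then empty as well. If it is balanced, Corollary~\ref{cor:cps} (equivalently, Lemma~\ref{lem:trace-specialization-DBM}) turns the Haar expectation into
\[
\sum_{M\in\mathrm{DBM}(\Gamma)}\prod_e\widetilde{\mathrm{Wg}}_N(\mu_e(M))\,N^{\chi(M)-k'},
\]
where $k'=k+|K|$ is the total number of words in the collection.

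\emph{Step 3: bijection with edge-plaquette embeddings.} Take the dual bipartite map attached to $\Gamma$. Its $k+|K|$ word faces split into two groups: the $k$ loop faces and the $|K|$ plaquette faces. We delete the interiors of the $k$ loop faces; these become the ordered boundary components. Each plaquette face is sent onto its plaquette $p$, with orientation fixed by whether it came from a $K^+$ or a $K^-$ copy, and each blue Haar face is collapsed onto its edge $e$. This is the same cellular map constructed in the proof of Theorem~\ref{thm:fixed-alpha-punctured-sum}, so it produces an element of $\EPE_\Lambda(\Lc;K^+,K^-)$. Under this map, the blue faces lying over $e$ have the half-degrees $\mu_e(\phi)$, so $\mu_e(M)=\mu_e(\phi)$. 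Removing the $k$ loop disks lowers the Euler characteristic by $k$. Hence the $N$-exponent $\chi(M)-k-|K|$ from Step 2 equals $\chi(M_{\mathrm{punct}})-|K|$. We will reconcile the leftover shift of $|K|$ with the convention in the statement by fixing which Euler characteristic is meant there, or, if needed, by absorbing a factor $N^{-|K|}$ into the coupling normalization.

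\emph{Main obstacle: the counting.} The hard part is checking that $M\mapsto(M,\phi)$ is a bijection onto $\EPE_\Lambda(\Lc;K^+,K^-)$ with the correct multiplicities. A DBM labels the copies of $\partial p$ individually, whereas an edge-plaquette embedding does not, and the $1/(K^+,K^-)!$ factor must exactly cancel these copy labelings. We will verify two points. First, an element of $\EPE$ together with a labeling of its plaquette faces lifts uniquely to a DBM. Second, the relabeling action of $\prod_p S_{K^+(p)}\times S_{K^-(p)}$ on labeled objects is free, or else the statement's $\EPE$ is understood as counting labeled plaquette faces, as in \cite{CPS25}. We will also verify the inverse construction: given an embedding, the preimages of each edge determine the bijections $(\alpha_e,\beta_e)$, and this recovers the Haar patches.
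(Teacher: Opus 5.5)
You take the same route as the paper: expand $Q^{\mathrm W}_{\beta_p}$ in powers of $\Tr(U_{\partial p})$ and $\Tr(U_{\partial p}^{-1})$, integrate termwise, apply the trace-only surface sum of Corollary~\ref{cor:cps} to $\Gamma_{\Lambda,\Lc}(K^+,K^-)$, and puncture the loop faces. Your exponent bookkeeping is more careful than the paper's, and the mismatch you found is genuine, not a matter of convention.

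In Corollary~\ref{cor:cps} the integer $k$ is the number of words, which here is $k+|K|$. So with $Q_p=\exp(\beta_p\Re\Tr)$ and unnormalized traces, the correct power is $N^{\chi(M)-|K|}$ for the bordered surface $M$. Equivalently, it is $N^{\chi(\widehat M)-k-|K|}$ for the closed surface $\widehat M$ obtained by capping the $k$ boundary circles. The paper's proof instead writes $N^{\chi(M)-k}$ with $k=\#\Lc$, reading the $k$ of Corollary~\ref{cor:cps} as the number of loops.

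A quick check: take $\Lc=\varnothing$, a single plaquette, and $K^+(p)=K^-(p)=1$. The true term is $(\beta/2)^2\int|\Tr(U_{\partial p})|^2\,dU=(\beta/2)^2$. The unique embedding is a sphere with $\prod_e\widetilde{\mathrm{Wg}}_N((1))=1$, and the displayed formula returns $(\beta/2)^2N^2$.

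So you should not leave the reconciliation open, and your first option cannot work on its own. The displayed term exceeds the true one by $N^{|K|-k}$ under the bordered reading of $\chi$, and by $N^{|K|}$ under the capped one. What works is the 't~Hooft rescaling $\beta_p\mapsto N\beta_p$ (the convention of \cite{SSZ24}, cf.\ the remark after Proposition~\ref{prop:wilson-master}). It contributes $N^{|K|}$ and turns the exponent into $\chi(\widehat M)-k=\chi(M)$, so the displayed formula then holds with $\chi$ read on the capped surface.

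Your discussion of multiplicities also needs fixing. The relabeling action of $\prod_p S_{K^+(p)}\times S_{K^-(p)}$ on $\mathrm{DBM}(\Gamma_{\Lambda,\Lc}(K^+,K^-))$ is in general neither free nor trivial. For example, take $K^+(p)=K^-(p)=2$ and the configuration made of two spheres, each gluing one positive copy of $\partial p$ to one negative copy. It is fixed by simultaneously swapping the two positive and the two negative copies, but moved by swapping the positive copies alone.

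More importantly, freeness would not rescue the statement. If $\EPE$ were unlabeled and the action free, each class would be counted $(K^+,K^-)!$ times, and the factor $1/(K^+,K^-)!$ would disappear from the formula rather than be confirmed. For unlabeled classes the correct weight is $1/|\mathrm{Stab}|$.

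The factor $1/(K^+,K^-)!$ is simply inherited from the exponential series. The statement is correct only when $\EPE_\Lambda(\Lc;K^+,K^-)$ counts embeddings whose plaquette faces are labeled by the individual copies of $\partial p$ and $(\partial p)^{-1}$. With that reading, puncturing is a tautological bijection from $\mathrm{DBM}(\Gamma_{\Lambda,\Lc}(K^+,K^-))$. This is what the paper's proof uses implicitly when it asserts that the resulting objects are exactly the edge-plaquette embeddings. With these two adjustments, your Steps 1--3 give a complete proof of the corrected identity.
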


\begin{proof}
Expand each Wilson plaquette weight into its trace power series:
\[
Q_{\beta_p}^{\mathrm W}(U_{\partial p})=\sum_{a,b\ge 0}\frac{(\beta_p/2)^{a+b}}{a!b!}\Tr(U_{\partial p})^a \Tr(U_{\partial p}^{-1})^b.
\]
Multiplying over all plaquettes gives
\[
Z\E[W_{\Lambda,\Lc}(U)]=\sum_{K^+,K^-:P(\Lambda)\to \mathbb N}\frac{\beta^{K^+,K^-}}{(K^+,K^-)!}\int_{\U(N)^{E(\Lambda)}}\prod_{w\in \Gamma_{\Lambda,\Lc}(K^+,K^-)}\Tr(U_w) dU.
\]
Fix a pair $(K^+,K^-)$. The integrand is now a product of traces only, so we are in the trace-only specialization of the surface construction of Section~\ref{sec:surface-sum}.

In this specialization, every word is carried by the fundamental representation or its dual. Hence the local mixed-tensor rank is $1$ at every insertion, the local Brauer index set is trivial, and no horizontal Brauer bands appear. In particular, the defect term vanishes. The refined surface construction of Theorem~\ref{thm:fixed-alpha-punctured-sum} therefore reduces to polygons for the loop and plaquette words together with Haar patches, and the resulting surfaces are precisely compact oriented surfaces with ordered boundary $\Lc$, endowed with a cellular map to $\Lambda^{(2)}$ whose plaquette-faces map homeomorphically onto plaquettes with multiplicities prescribed by $(K^+,K^-)$. In other words, the resulting objects are exactly the edge-plaquette embeddings $(M,\phi)\in \EPE_{\Lambda}(\Lc;K^+,K^-)$.

Moreover, in the same trace-only specialization, Corollary~\ref{cor:cps} identifies the local Haar contribution with the product of renormalized unitary Weingarten weights
\[
\prod_{e\in E(\Lambda)} \widetilde{\mathrm{Wg}}_N(\mu_e(\phi)),
\]
and the global power of $N$ is $N^{\chi(M)-k}$, where $k=\#\Lc$.

Substituting this trace-only surface expansion into the Wilson trace series yields the expected formula.
\end{proof}

\begin{remark}
Corollary~\ref{cor:wilson-epe} is the Wilson specialization of the refined surface-sum machinery underlying Theorem~\ref{thm:surface-sum-Wilsonloops}. It is more geometric than the dual-bipartite-map specialization of Corollary~\ref{cor:cps}, and should be viewed as the natural bridge between the present paper and the edge-plaquette-embedding formulation of Cao--Park--Sheffield.
\end{remark}

\subsection{Master loop equation}

Let us now turn to the master loop equation. For $p\in P^+(e)$, define the plaquette-resolved deformation multisets
\[
D^{+,p}_{i,e}(\Lc):=\{(\ell_1,\dots,\ell_{i-1},\ell_i\oplus_x p,\ell_{i+1},\dots,\ell_k):x\in C_i(e;\Lc)\},
\]
and
\[
D^{-,p}_{i,e}(\Lc):=\{(\ell_1,\dots,\ell_{i-1},\ell_i\ominus_x p,\ell_{i+1},\dots,\ell_k):x\in C_i(e;\Lc)\},
\]
counted with multiplicity. Define also the fully summed surgery multisets
\[
D^\pm(\Lc):=\bigsqcup_{i=1}^k \bigsqcup_{e\in D_i(\Lc)} \bigsqcup_{p\in P^+(e)} D^{\pm,p}_{i,e}(\Lc),\qquad S^\pm(\Lc):=\bigsqcup_{i=1}^k \bigsqcup_{e\in D_i(\Lc)} S^\pm_{i,e}(\Lc),
\]
\[
M_U^\pm(\Lc):=\bigsqcup_{i=1}^k \bigsqcup_{e\in D_i(\Lc)} M^\pm_{U,i,e}(\Lc),
\]
all counted with multiplicity.

\begin{proposition}\label{prop:wilson-master}
Let $\Lc=(\ell_1,\dots,\ell_k)$ be a loop family, and define $M_{\Lambda,N,\beta}(\Lc):=\mathbb E[W_\Lc(U)]$ for the unnormalized Wilson moment, and $\Phi_{\Lambda,N,\beta}(\Lc)=N^{-\#\Lc}M_{\Lambda,N,\beta}(\Lc)$ for the normalized Wilson loop expectation, where $\#\Lc$ is the number of loops in $\Lc$. Assume that the coupling is constant for each plaquette: $\beta_p=\beta$ for all $p\in P(\Lambda)$. Then
\begin{align}
N\ell(\Lc)\Phi_{\Lambda,N,\beta}(\Lc)&=\frac{\beta}{2}
\sum_{\Lc'\in D^-(\Lc)} \Phi_{\Lambda,N,\beta}(\Lc')-\frac{\beta}{2}
\sum_{\Lc'\in D^+(\Lc)} \Phi_{\Lambda,N,\beta}(\Lc')\notag\\
&\quad+N\sum_{\Lc'\in S^-(\Lc)} \Phi_{\Lambda,N,\beta}(\Lc')-N\sum_{\Lc'\in S^+(\Lc)} \Phi_{\Lambda,N,\beta}(\Lc')\notag\\
&\quad+\frac1N\sum_{\Lc'\in M_U^-(\Lc)} \Phi_{\Lambda,N,\beta}(\Lc')-\frac1N\sum_{\Lc'\in M_U^+(\Lc)} \Phi_{\Lambda,N,\beta}(\Lc').
\label{eq:fully-summed-wilson-mle-normalized}
\end{align}
\end{proposition}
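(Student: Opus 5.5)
The plan is to derive the identity from the first-order Haar integration by parts of Proposition~\ref{prop:first-order-haar-ibp}, applied directly to the Wilson weight rather than to a fixed decoration. Concretely, for each $i$ and each $e\in D_i(\Lc)$, I would write
\[
0=\int\sum_a\partial_{e,a}\Bigl((\partial_{e,a}W_\Lc)\prod_p Q^{\mathrm W}_\beta(U_{\partial p})\Bigr)dU .
\]
It is simpler to differentiate only the active loop $\ell_i$, i.e.\ to apply the identity to $G=W_{\Lc\setminus\ell_i}\,(\partial_{e,a}\Tr U_{\ell_i})\prod_pQ_p$. The Leibniz rule then produces three kinds of terms.

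\textbf{Terms from the loop family.}
\begin{itemize}
\item The Laplacian $\Delta_e\Tr(U_{\ell_i})$ is given by \eqref{eq:single-loop-Laplacian}: it yields the term $-N m_i(e;\Lc)W_\Lc$ together with the splittings $S^\pm_{i,e}$.
\item The cross terms with the other loops $\ell_j$ are given by \eqref{eq:cross-loop-Laplacian} and yield the mergers $M^\pm_{U,i,e}$.
\end{itemize}
Summing over $i$ and $e\in D_i(\Lc)$, we get $\sum_{i,e}m_i(e;\Lc)=\ell(\Lc)$, since every occurrence in the loop family is counted exactly once. This accounts for the left-hand side.

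\textbf{Plaquette terms.} The new ingredient is the derivative of the Wilson weight. For $p\ni e$,
\[
\partial_{e,a}Q^{\mathrm W}_\beta(U_{\partial p})=\tfrac\beta2\bigl(\partial_{e,a}\Tr U_{\partial p}+\partial_{e,a}\Tr U_{\partial p}^{-1}\bigr)Q^{\mathrm W}_\beta(U_{\partial p}).
\]
Pairing this with $\partial_{e,a}\Tr U_{\ell_i}$ through the magic formula $\sum_a\Tr(X_aA)\Tr(X_aB)=-\Tr(AB)$ gives exactly the same merger calculus as in Step~2 of Proposition~\ref{prop:universal-loop-laplacian-term}, with $\ell_j$ replaced by $\partial p$ or $(\partial p)^{-1}$. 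The outcome is $\tfrac\beta2\bigl(\sum_{D^{-,p}_{i,e}}-\sum_{D^{+,p}_{i,e}}\bigr)$, where the loop $\partial p$ (or its inverse) is absorbed into $\ell_i$ and no extra trace factor remains.

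This is the step I expect to require the most care, for two reasons.
\begin{itemize}
\item The orientation bookkeeping: a plaquette containing $e^{-1}$ must be re-expressed via its positively oriented version in $P^+(e)$, which is why only $p\in P^+(e)$ appears. The two terms $\Tr U_{\partial p}$ and $\Tr U_{\partial p}^{-1}$ must map bijectively onto the $\oplus$ and $\ominus$ deformations.
\item The signs must match the conventions of the deformation definitions.
\end{itemize}
Alternatively, one could resum the coefficientwise equation of Theorem~\ref{thm:universal-coefficientwise-master-equation}, where $\sum_\alpha\kappa\,\mathscr B_{e,p}$ should collapse to these deformations. I would use the direct route and mention this consistency only as a remark.

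\textbf{Normalization.} After dividing by $Z$, the identity reads
\[
-N\ell(\Lc)M(\Lc)+\bigl(\textstyle\sum_{S^-}-\sum_{S^+}\bigr)M+\bigl(\sum_{M^-_U}-\sum_{M^+_U}\bigr)M+\tfrac\beta2\bigl(\sum_{D^-}-\sum_{D^+}\bigr)M=0 .
\]
To pass to $\Phi=N^{-\#\Lc}M$:
\begin{itemize}
\item a splitting increases the number of loops by one, giving the factor $N$;
\item a merger decreases it by one, giving $N^{-1}$;
\item a deformation preserves it, giving $N^0$.
\end{itemize}
Multiplying through by $N^{-\#\Lc}$ and moving the $-N\ell(\Lc)$ term to the left yields \eqref{eq:fully-summed-wilson-mle-normalized}. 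The constant coupling assumption $\beta_p=\beta$ is used only so that $\beta/2$ factors out of the deformation sum.
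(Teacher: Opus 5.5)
Your proposal is correct and follows essentially the same route as the paper. That route is: Haar integration by parts against the Wilson weight, Proposition~\ref{prop:universal-loop-laplacian-term} for the splitting and merger terms, the magic formula $\sum_a\Tr(X_aA)\Tr(X_aB)=-\Tr(AB)$ applied to $\partial_{e,a}(\Tr U_{\partial p}+\Tr U_{\partial p}^{-1})$ for the deformation terms, summation over edges via $\sum_{e\in D_i(\Lc)}m_i(e;\Lc)=|\ell_i|$, and finally renormalization by $N^{-\#\Lc}$. Your only deviation is applying the identity loop by loop to $W_{\Lc\setminus\ell_i}(\partial_{e,a}\Tr U_{\ell_i})\prod_pQ_p$ for $e\in D_i(\Lc)$; after summing over $i$ this gives the paper's per-edge identity $0=I_{\Delta,e}+I_{\mathrm{mix},e}$, with the ordered cross pairs $(i,j)$ and $(j,i)$ counted exactly as in $\Delta_eW_\Lc$, and it makes the orientation bookkeeping for the choice of $D_i(\Lc)$ slightly cleaner.
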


\begin{proof}
For each oriented edge $e$, the same integration-by-parts argument as in Proposition~\ref{prop:first-order-haar-ibp} gives $0=I_{\Delta,e}+I_{\mathrm{mix},e}.$ For the loop-Laplacian term, Proposition~\ref{prop:universal-loop-laplacian-term} yields
\begin{align*}
I_{\Delta,e}
&=(\mathscr L_eM_{\Lambda,N,\beta})(\Lc) \\
&=-N\sum_{i:e\in D_i(\Lc)} m_i(e;\Lc)M_{\Lambda,N,\beta}(\Lc)+\sum_{i:e\in D_i(\Lc)}\left(\sum_{\Lc'\in S^-_{i,e}(\Lc)} M_{\Lambda,N,\beta}(\Lc')-\sum_{\Lc'\in S^+_{i,e}(\Lc)} M_{\Lambda,N,\beta}(\Lc')
\right) \\
&\qquad\qquad+\sum_{i:e\in D_i(\Lc)}\left(\sum_{\Lc'\in M^-_{U,i,e}(\Lc)} M_{\Lambda,N,\beta}(\Lc')-\sum_{\Lc'\in M^+_{U,i,e}(\Lc)} M_{\Lambda,N,\beta}(\Lc')\right).
\end{align*}

It remains to identify the mixed term. Under the Wilson action,
\[
Q^W_{\beta}(U_{\partial p})=\exp\left(\frac{\beta}{2}\big(\Tr(U_{\partial p})+\Tr(U_{\partial p}^{-1})\big)\right),
\]
hence, for each incidence $(e,p)$ with $p\in P^+(e)$,
\[
\partial_{e,a}Q^W_{\beta}(U_{\partial p})=\frac{\beta}{2}Q^W_{\beta}(U_{\partial p})\partial_{e,a}\left(\Tr(U_{\partial p})+\Tr(U_{\partial p}^{-1})\right).
\]
Therefore the mixed term at $(e,p)$ is
\[
\frac{\beta}{2}\mathbb E\left[\sum_{a=1}^{N^2}(\partial_{e,a}W_\Lc)\partial_{e,a}\bigl(\Tr(U_{\partial p})+\Tr(U_{\partial p}^{-1})\bigr)\right].
\]

Now use the magic formulas~\eqref{eq:magic} as in the proof of Proposition~\ref{prop:universal-loop-laplacian-term}. For each active occurrence $x\in C_i(e;\Lc)$, contracting the marked loop strand with the marked plaquette strand produces exactly the two local plaquette deformations of $\ell_i$ by the oriented plaquette boundary $\partial p$.  If the marked loop and plaquette occurrences have the same orientation, the magic formula contributes a minus sign and produces the positive deformation.  If they have opposite orientations, the derivative of the inverse occurrence supplies an additional minus sign, so the resulting negative deformation has positive sign.  Thus the negative deformation enters the mixed term with sign $+$ and the positive deformation with sign $-$. Summing over all active occurrences of $e$ or $e^{-1}$ in the whole loop family, one obtains
\[
I_{\mathrm{mix},e}=\frac12\sum_{i:e\in D_i(\Lc)}\sum_{p\in P^+(e)}\beta\left(\sum_{\Lc'\in D^{-,p}_{i,e}(\Lc)} M_{\Lambda,N,\beta}(\Lc')-\sum_{\Lc'\in D^{+,p}_{i,e}(\Lc)} M_{\Lambda,N,\beta}(\Lc')\right).
\]
Therefore
\begin{align*}
&N\sum_{i:e\in D_i(\Lc)}m_i(e;\Lc)M_{\Lambda,N,\beta}(\Lc)\\
&\quad=\frac{\beta}{2}\sum_{i:e\in D_i(\Lc)}\sum_{p\in P^+(e)}
\left(
\sum_{\Lc'\in D^{-,p}_{i,e}(\Lc)}M_{\Lambda,N,\beta}(\Lc')
-
\sum_{\Lc'\in D^{+,p}_{i,e}(\Lc)}M_{\Lambda,N,\beta}(\Lc')
\right)\\
&\qquad+
\sum_{i:e\in D_i(\Lc)}
\left(
\sum_{\Lc'\in S^-_{i,e}(\Lc)}M_{\Lambda,N,\beta}(\Lc')
-
\sum_{\Lc'\in S^+_{i,e}(\Lc)}M_{\Lambda,N,\beta}(\Lc')
\right)\\
&\qquad+
\sum_{i:e\in D_i(\Lc)}
\left(
\sum_{\Lc'\in M^-_{U,i,e}(\Lc)}M_{\Lambda,N,\beta}(\Lc')
-
\sum_{\Lc'\in M^+_{U,i,e}(\Lc)}M_{\Lambda,N,\beta}(\Lc')
\right).
\end{align*}

Now sum over all oriented edges $e$ belonging to some $D_i(\Lc)$. Since
\[
\sum_{e\in D_i(\Lc)} m_i(e;\Lc)=|\ell_i|\qquad\text{for every }i,
\]
the left-hand side becomes
\[
N\sum_{i=1}^k |\ell_i|M_{\Lambda,N,\beta}(\Lc)=N\ell(\Lc)M_{\Lambda,N,\beta}(\Lc).
\]
By definition of the multisets $D^\pm(\Lc)$, $S^\pm(\Lc)$ and $M_U^\pm(\Lc)$ as disjoint unions with multiplicity, since deformations preserve the number of loops, splittings increase it by one, and mergers decrease it by one, dividing by \(N^{\#\Lc}\) yields \eqref{eq:fully-summed-wilson-mle-normalized}.
\end{proof}

Note that there is a small difference in the statement of Proposition~\ref{prop:wilson-master} compared to \cite{SSZ24}: to recover the exact result of \cite{SSZ24}, $\beta$ should be replaced by $N\beta$, which corresponds to another convention related to the so-called \emph{'t Hooft regime}.

\appendix
\section{Adaptation to other compact classical groups}\label{sec:appendix}

In this appendix we explain more concretely how the strategy of the paper should extend to the other compact classical groups. Our purpose is not to reprove every statement in full detail, but to isolate the finite list of representation-theoretic inputs on which Sections~\ref{sec:surface-sum}--\ref{sec:master-loop} depend, and to indicate how these inputs are replaced for $\SU(N)$, $\SO(N)$ and $\Sp(N)$.

At a structural level, the proofs of the unitary case only use the following ingredients.

\medskip

\noindent
\textbf{(I) Fourier expansion of the plaquette action.}
For every compact group $G$, every central square-integrable function admits a character expansion by Peter--Weyl. This yields the analogue of Theorem~\ref{thm:state_sum_Wilson} verbatim.

\medskip

\noindent
\textbf{(II) Tensor realization of irreducible characters.}
For each irreducible label $\lambda$ of $G$, one needs a finite-dimensional tensor space $T_\lambda$, a representation $\rho_\lambda$ of $G$ on $T_\lambda$, and a $G$-equivariant projector $P_\lambda$ such that
\[
\chi_\lambda(g)=\Tr_{T_\lambda}\!\bigl(P_\lambda \rho_\lambda(g)\bigr).
\]

\medskip

\noindent
\textbf{(III) Diagrammatic expansion of the projector.}
One needs a finite diagrammatic basis of the commutant algebra of the relevant tensor power, together with an expansion
\[
P_\lambda=\sum_{\tau} c_{\lambda,N}(\tau)\,\rho(\tau)
\]
in this basis. In the unitary case this basis is given by walled Brauer diagrams.

\medskip

\noindent
\textbf{(IV) Haar integration in the same diagrammatic language.}
One needs a Weingarten-type expansion for Haar integrals of matrix coefficients, expressed in terms of the same diagrammatic basis. Once (III) and (IV) are available, the refined expansions of Section~\ref{sec:surface-sum} follow by exactly the same pattern: one inserts the projector expansion, integrates
letter by letter, and regroups the resulting finite sums.

\medskip

\noindent
\textbf{(V) Local Lie-algebra contraction identities.}
To derive the master loop equation, one only needs the analogue of the magic formulas~\eqref{eq:magic} for the defining representation of the group under consideration. Once these are known, the proof of the loop part is the usual group-dependent cut-and-join computation, while the mixed term is again a local incidence computation on one loop strand and one plaquette strand.

\medskip

Once these ingredients are granted, the remaining formal manipulations follow the same pattern: Section~\ref{sec:surface-sum} produces a decorated-surface expansion from (II)--(IV), Section~\ref{sec:local-channel} only uses the resulting local resolutions, finite tensor-product scalarization and edgewise Haar projection, and Section~\ref{sec:master-loop} only uses (V) together with the same locality mechanism. Thus the problem of adapting the whole formalism to another compact classical group reduces to identifying the correct replacements of the tensor model, of the commutant diagrams, and of the Weingarten kernels.

\subsection{The special unitary group}

The case of $\SU(N)$ is the closest to the unitary theory. Irreducible representations of $\SU(N)$ are still realized inside tensor constructions built from the defining representation $V=\mathbb C^N$ and its dual $V^*$, and the relevant commutant is still governed by the mixed Schur--Weyl formalism. Concretely, one may choose for each dominant weight of $\SU(N)$ a highest-weight representative for $\U(N)$, for instance with last coordinate equal to $0$, and realize the corresponding $\SU(N)$-module inside a mixed tensor space $T_{n,m}=V^{\otimes n}\otimes (V^*)^{\otimes m}.$ The associated isotypic projector is then described by the same kind of mixed-tensor representation-theoretic data as in Section~\ref{sec:surface-sum}.

There is, however, one point that should be stated explicitly: compared with $\U(N)$, the commutant for $\SU(N)$ is enlarged by the possible presence of determinant-type intertwiners coming from the invariant volume tensor. Equivalently, in addition to the usual walled Brauer contractions, one may have local $\varepsilon$-tensor insertions when the total covariant and
contravariant degrees are not balanced in the same way as in the unitary case. Thus the correct diagrammatic basis for $\SU(N)$ should be viewed as the unitary walled-Brauer basis, possibly enlarged by finitely many determinant channels.

After making this enlargement, the strategy of Sections~\ref{sec:surface-sum}--\ref{sec:master-loop} follows by the same formal steps, modulo the replacement of the representation-theoretic inputs.
Indeed:
\begin{itemize}
\item the state-sum expansion is still obtained by Peter--Weyl on $\SU(N)$;
\item the topological coefficients are still integrals of products of irreducible characters realized inside mixed tensor spaces;
\item the local projector expansions are still finite and local, now in the enlarged $\SU(N)$ commutant basis;
\item Haar integration still acts letterwise and produces local Weingarten coefficients attached to the same local channels;
\item the defect-ratio formalism of Section~\ref{sec:local-channel} is unaffected, since it only uses the existence of finite local resolutions and edgewise compression;
\item the master loop equation of Section~\ref{sec:master-loop} again reduces to the usual $\SU(N)$ cut-and-join operator together with a local spectral recoupling term at the active plaquette.
\end{itemize}

These observations indicate how analogues of Theorems~\ref{thm:state_sum_Wilson}--\ref{thm:universal-coefficientwise-master-equation} should be formulated for $\SU(N)$ once the enlarged commutant expansion, including the determinant-type channels, is established with the required finite-$N$ identities. We leave that group-specific verification to future work.

\subsection{The orthogonal and symplectic groups}

For $\SO(N)$ and $\Sp(N)$, the required replacements are different but equally standard. The defining representation is self-dual, so the covariant/contravariant distinction disappears. As a result, the mixed tensor spaces of the unitary case are replaced by ordinary tensor powers $V^{\otimes r},$ and the walled Brauer algebra is replaced by the ordinary Brauer algebra.

More precisely, if $G_N$ denotes either $\SO(N)$ or $\Sp(N)$, the commutant of the $G_N$-action on $V^{\otimes r}$ is described by Brauer diagrams. The analogue of the unitary traceless mixed projector is the projector onto the appropriate harmonic/traceless summand with respect to the invariant bilinear form preserved by $G_N$. Thus the role of Section~\ref{sec:mixed-schur-weyl} is played by the usual orthogonal/symplectic Schur--Weyl duality: irreducible characters are realized as traces of $G_N$-equivariant projectors on tensor powers, and these projectors admit finite expansions in the Brauer basis.

Once this replacement is made, the rest of the machinery is parallel:
\begin{itemize}
\item in the Weingarten expansion, ordinary Brauer diagrams replace walled Brauer diagrams;
\item the local contractions and coevaluations are taken with respect to the orthogonal or symplectic invariant form;
\item the edgewise Haar integrals are governed by the orthogonal or symplectic Weingarten kernels;
\item the local channel data of Section~\ref{sec:local-channel} now live on ordinary tensor legs, without a wall separating covariant and contravariant sectors;
\item the coefficientwise master loop equation again follows from the corresponding Lie-algebra contraction formulas, with the cut-and-join part replaced by the orthogonal/symplectic one.
\end{itemize}

Thus one expects analogues of the local resolution theorem, the defect-ratio formula and the coefficientwise master loop equation for $\SO(N)$ and $\Sp(N)$ once the corresponding Brauer-projector expansions and group-specific contraction formulas are inserted. We do not verify those finite-$N$ replacements here.

\medskip

On the geometric side, the surface expansion should be understood through the thickening of ordinary Brauer diagrams rather than walled Brauer diagrams. This would yield the analogue of Section~\ref{sec:surface-sum}, with the same finite local defect bookkeeping. If one wishes to formulate the resulting decorated-surface expansion in an explicitly geometric language, one should use the natural orthogonal/symplectic version of the Brauer-surface model rather than force the unitary oriented one; the precise topological packaging is secondary here, since the local channel and master-equation parts depend only on the underlying finite diagrammatic calculus. It is worth mentioning that, as usually seen in topological expansions involving the orthogonal and symplectic groups, the surfaces obtained through this procedure might not be orientable, in contrast to those obtained for unitary groups.

\medskip

In summary, for all compact classical groups the expected adaptation problem is local and representation-theoretic: one must replace the tensor model, the diagrammatic commutant basis, the corresponding Weingarten kernel, and the Lie-algebra contraction formulas. Once these group-specific inputs have been established with the required finite-$N$ identities, the arguments of the present paper indicate how the state-sum, decorated-surface organization, local channel model, defect-ratio representation, and coefficientwise master loop equation should be reconstructed by the same formal mechanism as in the unitary case.

\bibliographystyle{alpha}
\bibliography{Exact_Expansions_YM}

\end{document}